\theoremstyle{definition}
\newtheorem{definition}{Definition}[section]
\newtheorem{theorem}{Theorem}[section]
\newtheorem{corollary}{Corollary}[theorem]
\newtheorem{lemma}[theorem]{Lemma}
\theoremstyle{remark}
\begin{document}

\preprint{APS/123-QED}

\title{Achievable rates for concatenated square Gottesman-Kitaev-Preskill codes}

\author{Mahadevan Subramanian}
\author{Guo Zheng}%
\author{Liang Jiang}
\affiliation{%
Pritzker School of Molecular Engineering, The University of Chicago, Chicago, Illinois 60637, USA 
}%

\date{\today}

\begin{abstract}
The Gottesman-Kitaev-Preskill (GKP) codes are known to achieve optimal rates under displacement noise and pure loss channels, which establishes theoretical foundations for its optimality. However, such optimal rates are only known to be achieved at a discrete set of noise strength with the current self-dual symplectic lattice construction. In this work, we develop a new coding strategy using concatenated continuous variable -- discrete variable encodings to go beyond past results and establish GKP's optimal rate over all noise strengths. In particular, for displacement noise, the rate is obtained through a constructive approach by concatenating GKP codes with a quantum polar code and analog decoding. For pure loss channel, we prove the existence of capacity-achieving GKP codes through a random coding approach. These results highlight the capability of concatenation-based GKP codes and provides new methods for constructing good GKP lattices. 
\end{abstract}

\maketitle

\section{Introduction}
The goal of reliably communicating quantum information through noisy channels lies at the heart of proposals for quantum networks \cite{Simon2017,5485004,10.1007/978-3-540-70918-3_52,gandotra2025quantumcommunicationbandwidthandtimelimitedchannels,Kimble2008} as well as quantum transduction \cite{Wang2022}. A key aspect of this is understanding what rates of information transmission can be considered achievable \cite{PhysRevA.57.4153}, which has laid the foundation of the study of quantum capacities of noisy channels \cite{PhysRevA.55.1613,PhysRevA.54.2629}. Showing a rate to be achievable offers a direct lower bound to the capacity of the noise channel. However, explicitly showing achievability of rates requires proving existence of a sequence of encoding and decoding schemes that are asymptotically reliable for that particular rate. Due to the non-trivial nature of this task, the coherent information of a channel is often used to find achievable rates. The coherent information is the quantum analogue to mutual information and offers a lower bound to the quantum capacity \cite{PhysRevA.54.2629}. For degradable channels such as the bosonic pure-loss and amplification channels, the maximal one-shot coherent information equals the capacity. Upper and lower bounds on the capacity for channels such as the Gaussian displacement noise \cite{holevo1999evaluatingcapacitiesbosonicgaussian,PhysRevA.64.062301}, loss-dephasing channel \cite{Leviant2022quantumcapacity} and thermal loss channels \cite{8482307} have been found while the exact capacity remains unknown. 

Gottesman-Kitaev-Preskill (GKP) codes \cite{gottesman_encoding_2001} are bosonic error correcting codes that encode a finite dimensional Hilbert space into bosonic modes. These codes are obtained by using a stabilizer group generated by optical phase space displacements, hence are deeply connected to symplectically integral lattices \cite{PRXQuantum.3.010335,Conrad2022gottesmankitaev}. This connection has been used to show that GKP codes can achieve the coherent information of the Gaussian displacement noise channel \cite{PhysRevA.64.062301} as well as the capacity of the pure-loss and amplification channels \cite{zheng2024performanceachievableratesgottesmankitaevpreskill}. However, these results have two major limitations. The rate $R$ (in qubit per mode) these codes achieve are restricted to have $2^R$ be an integer, and these are existence based results which do not provide constructive ways to obtain rate achieving codes. As a result, the optimal rates are only achieved for a discrete set of noise strengths. All known asymptotic GKP coding rate results \cite{PhysRevA.64.062301,zheng2024performanceachievableratesgottesmankitaevpreskill} make use of the family of lattices obtained by rescaling self-dual (unimodular) lattices which as a corollary of the Buser-Sarnak theorem \cite{buser1994period}, will always contain a good spherical packing lattice. This good spherical packing can be used to prove achievability for certain rates. However, to encode information self-dual lattices must be scaled by a square root of an integer (ensuring that it remains valid for encoding a GKP code), which forces the number of logical dimensions per mode to be an integer. Searching for the lattices that satisfy these properties is a hard task \cite{harrington2004analysis} and the current knowledge of optimal sphere-packing lattices is limited to only dimensions of 24 \cite{cohn2022universal}. While the capacity achieving lattices need not be truly optimal in regard to sphere-packing, this does highlight the difficulty one faces in trying to construct such lattices.

In this work, we develop a different approach to overcome this limitation, enabling the achievement of favorable rates through more explicit construction methods. We focus on GKP codes that are obtained by taking $N$ single mode square GKP codes, each encoding a qudit of $d$ levels and then defining an outer code for concatenating as a $N$ qudit stabilizer code encoding $K$ qudits which we denote by $[[N,K]]_d$. For the Gaussian displacement noise channel, we first find that the effective noise channel for a single-mode GKP square qudit is a classical mixture of Pauli noise where knowledge of the syndrome gives us knowledge of what the Pauli noise looks like. We formalize this by finding an achievable rate for this noise channel with the use of quantum polar codes \cite{PhysRevLett.109.050504} that take advantage of the analog information offered by this syndrome. The usefulness of GKP analog information has been well studied \cite{Raveendran2022finiterateqldpcgkp,PhysRevLett.119.180507}, to which we show a rigorous way to quantify the advantage analog information offers. We find that this newly defined achievable rate approaches the coherent information of the Gaussian displacement noise channel at all values of noise strength $\sigma$. Crucially, these GKP codes can be explicitly constructed, and also have efficient ($\mathcal{O}(N\log N)$ circuit depth) encoding and decoding procedures due to the outer concatenation with the quantum polar code. We further support this with a numerical study of the performance of these polar codes for increasing numbers of modes under the Gaussian displacement noise model.

We also show that concatenated GKP codes achieve the capacity of pure-loss, and hence by extension also the capacity of the amplification channel \cite{zheng2024performanceachievableratesgottesmankitaevpreskill}. We use a construction of stabilizer codes for prime $d$ dimensional qudits mapped from self-orthogonal codes in the Galois field $GF(d^2)$ from \cite{959288}. By averaging over the set of GKP codes obtained in this manner, we prove existence of a capacity achieving sequence of codes as $d\to\infty$, in essence recreating the behavior of a good spherical packing lattice. These results highlight the importance of the family of codes obtained by simply concatenating qudit codes to square GKP codes, as well as their capability in exceeding previously known achievable rates for GKP codes. 

This paper is structured as follows. In section \ref{sec:squareGKP}, we introduce the square GKP code, following which we provide a short introduction to achievable rates for communication of quantum information in section \ref{sec:achievable}. Our results begin with section \ref{sec:logicalNoisesq} where show how the effective logical noise channel of a corrected square GKP qudit is a classical mixture of Pauli noise with an auxiliary output containing syndrome information. We then show in section \ref{sec:Isqanalog} how the coherent information of this logical noise channel which we refer to as $I^{\mathrm{sq}}_{d,\mathrm{analog}}$ approaches the coherent information of the Gaussian displacement noise channel for all values of noise strength $\sigma$ for large enough $d$. In section \ref{sec:classical_polar} we provide a brief introduction to polar codes and highlight the construction of quantum polar codes we make use of in section \ref{sec:css_polar} and provide our numeric study of these codes in the context of achieving $I^{\mathrm{sq}}_{d,\mathrm{analog}}$ in section \ref{sec:numeric_polar}. In section \ref{sec:caplossamp}  we discuss the capacity of the bosonic pure-loss and amplification channels following which we show the existence of a capacity achieving sequence for the pure-loss channel by concatenation with square GKP codes in section \ref{sec:lossresult}. Finally we discuss the relevance and implications of our results in section \ref{sec:discuss}.

\section{Background \& notation}
\subsection{The square GKP code}\label{sec:squareGKP}
We begin our discussion focusing on the square GKP code in one mode. We make use of canonical position and momentum operators $\hat{q}$ and $\hat{p}$ respectively which satisfy the commutation relation $[\hat{q},\hat{p}] = i$. The stabilizers are chosen as phase space displacements of equal length in the direction of $\hat{q}$ and $\hat{p}$. We consider the two stabilizers
\begin{equation}
    \hat{S}_1 = \exp(i\hat{q}\sqrt{2\pi d}),\quad\hat{S}_2 = \exp(-i\hat{p}\sqrt{2\pi d}),
\end{equation}
where $d$ is some prime number. For this stabilizer set, the displacements which commute with both $\hat{S}_1$ and $\hat{S}_2$ can be written as
\begin{equation}
    \hat{P}_{u,v} = \exp(i(-u\hat{p}+v\hat{q})\sqrt{\frac{2\pi}{d}}),
\end{equation}
where $u,v\in\mathbb{Z}$. This then gives logical operators $\hat{X}_L = \hat{P}_{1,0}$ and $\hat{Z}_L = \hat{P}_{0,1}$ which can be used to construct the Pauli group. To see how, we can note that $\hat{X}_L\hat{Z}_L = e^{-2\pi/d}\hat{Z}_L\hat{X}_L$ giving the necessary commutation relation for the qudit Pauli group \cite{Sarkar2024quditpauligroupnon}. The eigenstates of $\hat{Z}_L$ and $\hat{X}_L$ would then be
\begin{equation}
\begin{aligned}
    \ket{j_L} = \sum_{n=-\infty}^{\infty}\ket{q=(dn+j)\sqrt{2\pi/d}},\\ \ket{\tilde{j}_L} = \sum_{n=-\infty}^{\infty}\ket{p=(dn+j)\sqrt{2\pi/d}},
\end{aligned}\label{eq:codestates}
\end{equation}
which would satisfy
\begin{equation}
    \hat{X}^u\ket{j_L} = \ket{(j\oplus u)_L},\quad \hat{Z}^v\ket{j_L} = e^{\frac{2\pi i}{d}vj}\ket{j_L}.
\end{equation}
To construct the Pauli group out of $\hat{P}_{u,v}$, we restrict $u,v\in \mathbb{F}_d$ which is the finite-field associated to the prime number $d$. We will be restricting our analysis to prime $d$ since it allows us to ensure that we can define the Galois-Field $GF(d)$ \footnote{For $d$ being a positive integer power of a prime number, a Galois field $GF(d)$ is a finite field with exactly $d$ elements which supports addition, subtraction, multiplication and division (except by $0$). For a quick intro to Galois-Fields, we refer the reader to \cite{blahut2003algebraic}}. Hence $\hat{P}_{u,v}$ can be understood as the logical operation $\hat{X}^u\hat{Z}^v$ (up to a phase) and hence generates the qudit Pauli group $\mathcal{P}_d$.
\begin{figure*}[ht]
    \centering
    \includegraphics[width=\linewidth]{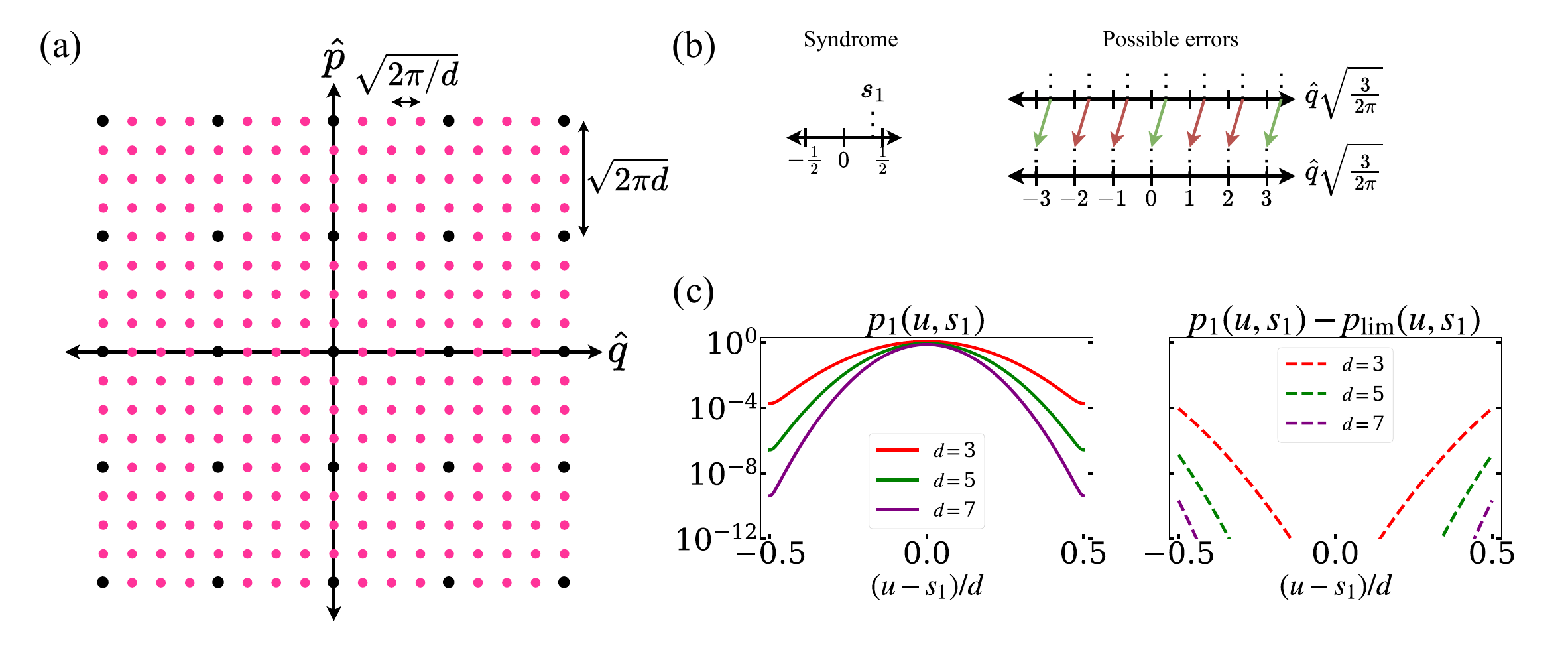}
    \caption{(a) Depiction of the GKP lattice for a square GKP qudit of $d$ levels. Displacements in the stabilizer group lie in the square lattice spaced by $\sqrt{2\pi d}$ (black dots) and the displacements giving logical operations are in the square lattice spaced apart by $\sqrt{2\pi/d}$ (pink dots). (b) Figure depicting relation between the obtained syndrome and logical errors for the case of $d=3$. Obtaining a particular syndrome $s_1$ implies a possible set of errors (see top right line graph) for the shift along $\hat{q}\sqrt{\frac{3}{2\pi}}$. These on correction by the smallest possible displacement for getting back to the logical codespace may result in logical errors (red arrows) or result in no logical error at all (green arrows). (c) By noting that the value of $p_1(u,s_1)$ is only a function of $u+s_1$, we compare these distributions for different values of $d$ for the value of $\sigma=0.5$. In our convention, $(u+s_1)/d$ will always lie in $[-1/2,1/2)$ and we can see that over this normalized range, there is a strong concentration near $0$ for $p_1(u,s_1)$ as $d$ is increased. Further, we plot the difference $p_1(u,s_1)-p_{\mathrm{lim}}(u,s_1)$ (which is also purely a function of $(u+s_1)/d$) and note that increasing the value of $d$ shows a clear exponential suppression in this difference.}
    \label{fig:1}
\end{figure*}
Let the state $\ket{\psi}$ stabilized by both $\hat{S}_1$ and $\hat{S}_2$. We now consider an erroneous displacement to act on this state given by
\begin{equation}
    \hat{E} = \exp(i(-e_1\hat{p}+e_2\hat{q})\sqrt{\frac{2\pi}{d}}),
\end{equation}
giving the displaced state $\ket{\psi_e} = \hat{E}\ket{\psi}$. It then follows that
\begin{equation}
    \hat{S}_1\ket{\psi_e} = \exp(2\pi i e_1)\ket{\psi_e},\quad\hat{S}_2\ket{\psi_e}=\exp(-2\pi i e_2)\ket{\psi_e},
\end{equation}
which means that measuring the eigenvalues of the stabilizer is equivalent to measuring the two values
\begin{equation}
    s_1 = \left(\hat{q}\sqrt{\frac{d}{2\pi}}\right)\mod1,\quad s_2 = \left(-\hat{p}\sqrt{\frac{d}{2\pi}}\right)\mod1,
\end{equation}
since the eigenvalues lie on a rotor for displacement operations. Importantly, each pair of values $s_1,s_2$ define an orthogonal subspace of the full Hilbert space (also referred to as the Zak basis \cite{PhysRevA.107.062611,PRXQuantum.5.010331}). There are broadly two ways of measuring the stabilizers for correction, namely the Steane type measurement and the Knill (teleportation) scheme. The first is the Steane-inspired scheme introduced in \cite{PhysRevA.64.012310} which makes use of CSUM gates and one GKP ancilla with Homodyne measurement. The second is the teleportation based scheme which uses an encoded bell pair \cite{PhysRevA.102.062411,PhysRevA.108.052413} and a beam-splitter and squeezing operations. The teleportation-based scheme propagates errors with a smaller pre-factor compared to the Steane-type scheme which leads to better performance assuming equally noisy ancillas for both schemes \cite{PhysRevA.101.012316,PRXQuantum.3.010315}.

The code states defined in Eq. \eqref{eq:codestates} are infinite energy states since they are a superposition of infinitely squeezed states. To physically construct GKP states, an envelope is applied on these infinite energy states to approximate them. A commonly used envelope is the Gaussian envelope $\exp(-\Delta^2\hat{n})$ (which can also be extended to multimode envelopes \cite{PRXQuantum.3.010335}) which on application to the ideal GKP codewords $\ket{\psi_{\mathrm{ideal}}}$ would yield a finite energy state. This state can be written as a superposition of finitely squeezed states \cite{PhysRevA.93.012315}. The envelope operation can be decomposed in the over-complete basis of displacement operations \cite{gottesman_encoding_2001, rozpkedek2021quantum} as a mixture of coherent displacement errors on the GKP states that can be treated as incoherent displacement noise of spread $\sigma=\tanh(\Delta^2/2)$ by a twirling argument involving the repeated application of stabilizers \cite{PhysRevA.101.012316,jafarzadeh2025logicalchannelsapproximategottesmankitaevpreskill,shaw2024logicalgatesreadoutsuperconducting}. Additionally, the bosonic pure-loss channel as well as thermal loss channel can be converted to a Gaussian displacement noise channel with the appropriate application of a quantum limited amplification channel \cite{8482307}. Hence, the displacement error noise model proves to be a very useful noise model to study.

\subsection{Achievable rates for communication of quantum information}\label{sec:achievable}
A rate $R$ is said to be achievable for a noise channel $\mathcal{N}$ if there exists a sequence of encoding and decoding operations $\mathcal{E}_n,\mathcal{D}_n$ such that over this sequence
\begin{equation}
\lim_{n\to\infty}\frac{\log_2(d_L(\mathcal{E}_n,\mathcal{D}_n))}{n}=R,\quad \lim_{n\to\infty}F_e(\mathcal{D}_n\circ\mathcal{N}^{\otimes n}\circ \mathcal{E}_n) = 1
\end{equation}
where $d_L$ is the total logical dimension of the encoding and decoding operations $\mathcal{E}_n,\mathcal{D}_n$ and $F_e$ is the channel fidelity defined as
\begin{equation}
F_e(\mathcal{Q}) = \langle\Phi|(\mathcal{Q}\otimes\mathcal{I}_A)(|\Phi\rangle\langle\Phi|)|\Phi\rangle   
\end{equation}
where $|\Phi\rangle$ is the purification of the maximally mixed state and $\mathcal{I}_A$ is identity on the ancillary system. This is a metric of how much entanglement survives through the channel and is also referred to as the entanglement fidelity \cite{tomamichel2016quantum} and has strong connections to average channel fidelity as well \cite{nielsen2002simple,PhysRevA.60.1888}. This essentially tells us that asymptotically, one can reliably send $R$ qubits of information per channel. The quantum capacity then follows naturally as the quantity $C_Q$ such that $R\leq C_Q$ if and only if $R$ is an achievable rate \cite{PhysRevA.55.1613}.

In this section, we will mainly be concerned with the coherent information of a channel which is an achievable rate that lower bounds the capacity of a channel \cite{PhysRevA.54.2629}. For a quantum channel $\mathcal{N}$, we can define the complementary channel $\mathcal{N}^c$ which describes the action of the channel on the environment if the channel is modeled as a unitary interaction with an environment. Defining the Stinespring dilation of $\mathcal{N}$ as
\begin{equation}
    \mathcal{N}(\rho_S) = \text{Tr}_E(U_{SE}(\rho_S\otimes|0\rangle\langle0|_E)U_{SE}^\dagger),
\end{equation}
where $U_{SE}$ is a unitary interaction of the system and environment, we get
\begin{equation}
    \mathcal{N}^c(\rho_S)=\text{Tr}_S(U_{SE}(\rho_S\otimes|0\rangle\langle0|_E)U_{SE}^\dagger).
\end{equation}

The coherent information of a channel $\mathcal{N}$ is then defined as
\begin{equation}
    I_c(\mathcal{N}) = \sup_{\rho}(I_c(\rho,\mathcal{N})),\quad I_c(\rho,\mathcal{N}) =S(\mathcal{N}(\rho)) - S(\mathcal{N}^c(\rho))
\end{equation}
where $I_c(\rho,\mathcal{N})$ can be defined for a specific choice of input state $\rho$. The coherent information quantifies the maximal amount of entropy that survives through a single use of the channel \cite{PhysRevA.54.2629}. For reliable information transmission through a channel, it is required that the environment doesn't measure out any of the information we wish to communicate. Hence through a single channel use, only as much entropy as $I_c$ can be sent through the channel. This then gives the definition for quantum channel capacity as
\begin{equation}
    C_Q(\mathcal{N}) = \lim_{n\to\infty}\frac{I_c(\mathcal{N}^{\otimes n})}{n}\geq I_c(\mathcal{N}),
\end{equation}
which due to the additivity of $S(\rho)$ gives the lower bound. 

We can link the entropic way of looking at information to an encoding operation by considering the following example. $A$ has $N$ bell pairs that they wish to share the halves of with $B$ through a noisy communication channel. If $A$ measures an appropriate number of their halves of the bell pair, they effectively create a stabilizer code for the halves of $B$ with the correct choice of unitary operations. Knowing that some amount of information is inevitably gained by the environment, the measurements by $A$ can be chosen in a way to ensure that the environment only gains redundant information that has already been measured by $A$. This can then allow the reliable communication (or equivalently the one-way distillation \cite{PhysRevA.54.3824}) of $K\leq N$ bell pairs effectively defining some error correcting code of form $[[N,K]]$.

\section{Analog information in concatenated GKP codes}
\subsection{Effective noise channel with analog information}\label{sec:logicalNoisesq}
In this section we provide details on the effective noise channel that acts on a GKP code which has been corrected using the closest lattice point decoding for the particular syndrome output. We will mainly concern ourselves with the case of a square GKP qudit in this section and refer the reader to Appendix \ref{app:CLPnoise} to see how this extends to general GKP codes.
We consider a single mode Gaussian displacement noise which can be described by the quantum channel
\begin{equation}
    \mathcal{N}_{\mathrm{displ}}(\cdot) =  \frac{1}{\pi \sigma^2}\int d^2\alpha e^{-|\alpha|^2/\sigma^2} D(\alpha)(\cdot)D(\alpha)^\dagger\label{eq:Ndispl}
\end{equation}
where $D(\alpha) = e^{\alpha a^\dagger - \alpha^{*}a}$. Since this is simply a 2-dimensional Gaussian random variable, we can equivalently treat this as independent Gaussian noise in the $\hat{q}$ and $\hat{p}$ quadratures respectively. Errors along $\hat{q}$ contribute to logical bit-flip and errors along $\hat{p}$ contribute to a logical phase-flip. Importantly, for a square GKP code, these are independent errors.\\
We now consider the event that we measure the syndrome $s_1$ for the value $\hat{q}\sqrt{\frac{d}{2\pi}}\mod1$ shifted to lie in $[-1/2,1/2)$. The smallest (in magnitude) displacement which gives syndrome $s_1$ is a shift of $s_1\sqrt{2\pi/d}$ along the $\hat{q}$ quadrature. Assuming we correct by exactly this much along the $\hat{q}$ direction, we are now left with an effective shift of $u\sqrt{2\pi/d}$ where $u$ is an integer which represents the logical error $\hat{X}^u$. We denote the probability of the event of logical $\hat{X}^u$ occurring after syndrome measurement $s_1$ as $p_1(u,s_1)$. To evaluate this probability, we need to find the set of displacements that cause a shift of $\hat{q}\to\hat{q}+e_1\sqrt{\frac{2\pi}{d}}$ such that $e_1\mod 1=s_1$ (hence giving a particular syndrome) and $(e_1-s_1)\mod d=u$ (hence giving the logical error $\hat{X}^u$). This restricts the values of $e_1$ to be
\begin{equation}
    e_1\in \{u+s_1+dl|l\in \mathbb{Z}\}
\end{equation}
while the overall set of displacements resulting in this syndrome would be given by $e_1\in\{s_1+l|l\in\mathbb{Z}\}$. Using the knowledge of the noise model, we know that $e_1\sqrt{2\pi/d}\sim\mathcal{N}(0,\sigma^2)$ which gives 
\begin{equation}
    p(u|s_1) = \frac{\sum_{l\in\mathbb{Z}}\exp(-\frac{\pi}{d\sigma^2}(dl+u+s_1)^2)}{\sum_{l\in\mathbb{Z}}\exp(-\frac{\pi}{d\sigma^2}(l+s_1)^2)}\label{eq:pu|s1}.
\end{equation}
Further, we can exactly find the probability distribution for the syndrome $p(s_1)$ to be
\begin{equation}
\begin{aligned}
    p(s_1)=\frac{1}{\sigma\sqrt{d}}\sum_l\exp(-\frac{\pi}{d\sigma^2}(l+s_1)^2),
\end{aligned}
\end{equation}
which lets us define
\begin{equation}
\begin{aligned}
    p_1(u,s_1)&=p(u|s_1)p(s_1)\\
    &= \frac{1}{\sigma\sqrt{d}}\sum_{l\in\mathbb{Z}}\exp(-\frac{\pi d}{\sigma^2}\left(l +\frac{u+s_1}{d}\right)^2),\label{eq:pus1}
\end{aligned}    
\end{equation}
showing that $p_1(u,s_1)$ is purely a function of $\frac{u+s_1}{d}$.  Since $u$ is necessarily modulo $d$ which means $u$ is an integer with $-\frac{d-1}{2}\leq u\leq \frac{d-1}{2}$ (assuming $d$ is odd). For now we will be restricting our discussion to $d$ being a prime number. Further, we know that $s_1$ is chosen to lie in the range $[-1/2,1/2)$ which means that the variable $\frac{u+s_1}{d}$ will satisfy $-\frac{1}{2}\leq\frac{u+s_1}{d}<\frac{1}{2}$.

The restriction of $\frac{u+s_1}{d}$ to this range by the appropriate definition of our modulo functions gives a clear understanding of what the summation over $l$ depicts in the expression of $p_1(u,s_1)$ in Eq. \eqref{eq:pus1}. The terms of $l\neq0$ are from the erroneous displacement being large enough that it displaces it outside the unit cell of the stabilizer lattice which is spaced by $\sqrt{2\pi d}$. Since the GKP code fundamentally has degenerate error sets, these can offer significant contributions depending on the form of the lattice \cite{Conrad2022gottesmankitaev}. However, in this case there is a very interesting behavior for a large value of $d$. The expression $p_1(u,s_1)$ shows a clear convergence towards the central term in the summation ($l=0$) which we will denote by
\begin{equation}
    p_\mathrm{lim}(u,s_1) = \frac{1}{\sigma\sqrt{d}}\exp(-\frac{\pi d}{\sigma^2}\left( \frac{u+s_1}{d}\right)^2),
\end{equation}
which is not a normalized distribution, but can be shown to be very close to the actual distribution for large enough $d$. To exactly quantify this, we show that on integrating $p_{\mathrm{lim}}(u,s_1)$ over the variables $u$ and $s_1$ (detailed calculation in Appendix \ref{app:dispSqGKP}), we get
\begin{equation}
    \sum_{|u|\leq\frac{d-1}2{}}\int ds_1 p_{\mathrm{lim}}(u,s_1) = 1-\mathcal{O}\left(\frac{\sigma}{\sqrt{d}}e^{-\frac{d\pi}{4\sigma^2}}\right),
\end{equation}
which gets very close to $1$ as $d/\sigma^2\gg1$. This means that the probability contribution from all the terms in the summation of $l\neq 0$ are exponentially suppressed. Additionally increasing $d$ makes $p_{\mathrm{lim}}$ have smaller spread around $\frac{u+s_1}{d}=0$, since it is proportional to a Gaussian in the variable $\frac{u+s_1}{d}$ with spread of $\sigma\sqrt{\frac{2\pi}{d}}$. This can be directly observed by evaluating the full function $p_1(u,s_1)$ numerically as is plotted in Fig. \ref{fig:1}(c).

We can similarly describe the probability distribution which relates the probability of a phase-flip error $\hat{Z}^v$ to the outcome of obtaining a syndrome measurement for $s_2$ denoted by $p_2(v,s_2)$. Similarly, if we consider an erroneous shift of $e_2$ such that $\hat{p}\to \hat{p}+e_2\sqrt{\frac{2\pi}{d}}$, we now have $e_2 = dl+v-s_2$ for $l\in\mathbb{Z}$. Since the random variable for $e_2$ follows the same distribution as $e_1$, it is easy to see that the distribution
\begin{equation}
\begin{aligned}
    p_2(v,s_2)&=p(u=v,s_1=-s_2)\\
    &= \frac{1}{\sigma\sqrt{d}}\sum_{l\in\mathbb{Z}}\exp(-\frac{\pi d}{\sigma^2}\left(l +\frac{v-s_2}{d}\right)^2),
\end{aligned}    
\end{equation}
which means that $p_2(v,s_2)$ is purely a function of $\frac{v-s_2}{d}$ with the same concentration behavior as $p_1(u,s_1)$. With the only difference being in the sign of the syndrome these distributions behave exactly the same. Putting the both of these together, we have a full characterization of the event of the error $\hat{X}^u\hat{Z}^v$ if we obtain the total syndrome $s_1,s_2$ which we denote by the probability distribution
\begin{equation}
    p(u,v,s_1,s_2)=p_1(u,s_1)p_2(v,s_2).
\end{equation}
Once the values of $s_1$ and $s_2$ are measured and we correct by the smallest displacement, we are left with Pauli noise on the qudit which has an error distribution of $p(u,v|s_1,s_2)$ for $\hat{X}^u\hat{Z}^v$. This can be equivalently be modeled by a qudit noise channel which also outputs $\ket{s_1,s_2}\in\mathcal{S}$ in an output register described by a Hilbert space $\mathcal{S}$ where $\langle s'_1,s'_2|s_1,s_2\rangle = \delta(s_1-s_1')\delta(s_2-s_2')$ which just ensures that we can exactly measure the values of $s_1$ and $s_2$. This is the same as the subsystem decomposition used for describing GKP states as a tensor product between logical Hilbert space and the stabilizer eigenvalues which can be used to define a complete basis for the bosonic mode \cite{PRXQuantum.5.010331,PhysRevA.107.062611,shaw2024logicalgatesreadoutsuperconducting}. We can use this to define an effective logical noise channel $\mathcal{N}_{\mathrm{logical}}:\mathcal{L}(\mathcal{H}_d)\to\mathcal{L}(\mathcal{H}_d\otimes \mathcal{S})$ defined as
\begin{equation}
\begin{aligned}
    \mathcal{N}_{\mathrm{logical}}(\cdot) = \sum_{u,v}\int d&s_1 ds_2 p(u,v,s_1,s_2)\\&\hat{X}^u\hat{Z}^v(\cdot)(\hat{X}^u\hat{Z}^v)^\dagger
    \otimes|s_1,s_2\rangle\langle s_1,s_2|
\end{aligned}\label{eq:logicalN}
\end{equation}
which once we measure out the extra register reduces to qudit Pauli noise with distribution $p(u,v|s_1,s_2)$. Since the extra register is effectively classical in the sense that it can only give classical information which is the syndrome itself, this noise channel is a classical mixture of Pauli noise with an auxiliary system. We will now examine an achievable rate for this channel which fundamentally makes use of the analog information. 

\begin{figure*}[ht]
    \centering
    \includegraphics[width=\linewidth]{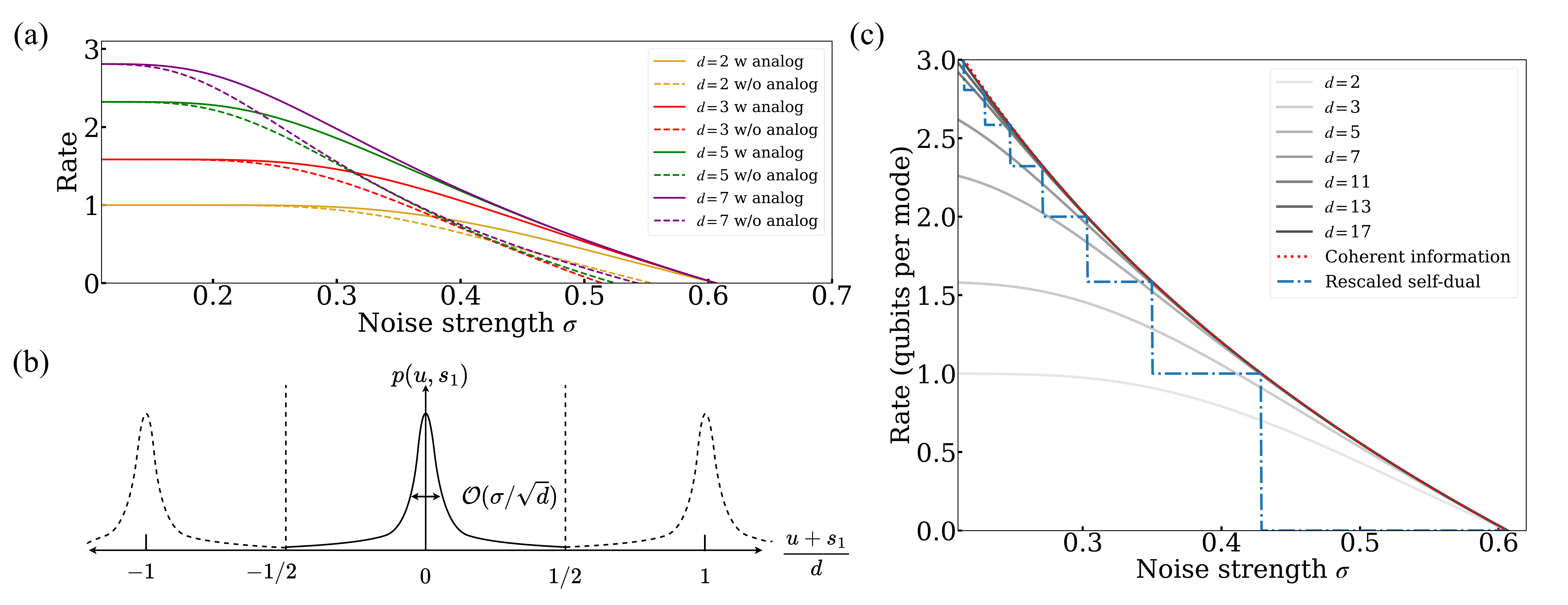}
    \caption{(a) Achievable rates for the square GKP qudit with analog information ($I^{\mathrm{sq}}_{d,\mathrm{analog}}$) and without analog information ($I^{\mathrm{sq}}_{d,\text{no analog}}$) plotted in solid line and dashed lines respectively for different values of $d$. The dashed line uses the distribution $p(u)$ obtained by averaging over the syndrome hence resulting in not using any analog information. (b) A rough depiction of what the concentration of $p_1(u,s_1)$ looks like in the limit of large $d$. While there is contribution from terms that are centered around $\pm l$ for $l=1,2\dots$ (see plotted Gaussians with dashed lines), their contribution is exponentially small in magnitude. Additionally, the spread of $p_{\mathrm{lim}}(u,s_1)$ being $\mathcal{O}(\sigma/\sqrt{d})$ over the normalized variable $(u-s_1)/d$ results in the size of typical error set growing much smaller than the actual dimension of Hilbert space. (c) Numerical evaluation of $I^{\mathrm{sq}}_{d,\mathrm{analog}}$ (for values of $d=2$ to $d=17$ shown in solid lines) showing a clear convergence to the coherent information of the Gaussian displacement noise channel (dotted line). Notably for finite and reasonably small $d$ ($<10$), we already see the value of $I^{\mathrm{sq}}_{d,\mathrm{analog}}$ get numerically close to coherent information while also crossing the value of rates achieved by rescaled self-dual lattices (dashed-dotted line).}
    \label{fig:2}
\end{figure*}
\subsection{Achievable rate using analog information}\label{sec:Isqanalog}

For a qudit Pauli noise channel with noise distribution $p_{u,v}$ for the error $\hat{X}^u\hat{Z}^v$, we can find the coherent information for the input of a maximally mixed state to be
\begin{equation}
    I_c(\mathbb{I}/d,\mathcal{N}_{\mathrm{pauli}}) = \log_2(d) - H_2(p_{u,v})
\end{equation}
where $H_d(p_{u,v}) = -\sum_{u,v}p_{u,v}\log_d(p_{u,v})$. This gives the Hashing bound for Pauli channels which is achievable through the use of random CSS codes \cite{PhysRevA.54.1098} hence is necessarily below the capacity of the Pauli channel \cite{PhysRevA.78.062335}. Similarly for a classical mixture of Pauli (CMP) channels with an auxiliary system defined as
\begin{equation}
    \mathcal{N}_{\mathrm{CMP}}(\cdot) = \sum_{u,v,j}p_{u,v,j}\hat{X}^u\hat{Z}^v(\cdot)(\hat{X}^u\hat{Z}^v)^\dagger\otimes|j\rangle\langle j|,
\end{equation}
taking the input to be the maximally mixed state yields the coherent information to be
\begin{equation}
    \begin{aligned}
        I_c(\mathbb{I}/d,\mathcal{N}_{\mathrm{CMP}}) &= \log_2(d)-\sum_{j}p_j\log_2(p_j)\\&\quad\quad\quad\quad + \sum_{u,v,j}p_{u,v,j}\log_2(p_{u,v,j})\\
        &= \log_2(d) - H_2(u,v|j)
    \end{aligned}
\end{equation}
where $H_d(u,v|j) = H_d(p_{u,v,j}) - H_d(p_j)$ is the conditional information entropy of $u,v$ conditioned on $j$ and $p_j=\sum_{u,v}p_{u,v,j}$. If the register containing information on $j$ was traced out, we obtain a Pauli channel with distribution $p_{u,v}$ for which the coherent information will be less than $I_c(\mathbb{I}/d,\mathcal{N}_{\mathrm{CMP}})$ by the mutual information $I(u,v;j)$ which is always non-negative.

From this we can see that previous methods for calculating achievable rates associated to the square GKP qudit \cite{PhysRevA.64.062301,Raveendran2022finiterateqldpcgkp} were in essence calculating the rate associated to the CMP obtained by not having access to the auxiliary system for the channel in Eq. \eqref{eq:logicalN}. Hence we now define an achievable rate associated to square GKP qudits as $I_c(\mathbb{I}/d,\mathcal{N}_{\mathrm{logical}})$ which equals
\begin{equation}
    I_{d,\mathrm{analog}}^{\mathrm{sq}} = \log_2(d) + 2\int ds_1 \sum_{u=0}^{d-1} p_1(u,s_1)\log_2\left(\frac{p_1(u,s_1)}{p(s_1)}\right),
\end{equation}
where we use the fact that $p(u,v,s_1,s_2)$ fully separates into $p_1(u,s_1)p_2(v,s_2)$ and $H(u|s_1)=H(v|s_2)$. These are evaluated for different values of underlying noise strength $\sigma$ and plotted in Fig. \ref{fig:2}(a) in comparison to the coherent information of the Pauli noise channel with probability distribution $p(u,v)$ obtained by throwing out syndrome information. In the numerical evaluation of $I_{d,\mathrm{analog}}^{\mathrm{sq}}$ we observe that the threshold value of $\sigma$ beyond which the channel does not admit a rate seems independent of the value of $d$. This threshold value can be numerically checked to be approximately $\frac{1}{\sqrt{e}}$ which has been observed in concatenated GKP schemes that use analog information \cite{lin2024exploringquantumcapacitygaussian,PhysRevLett.119.180507}.

As $d$ is increased, there seems to be a clear convergence of the value that $I^{\mathrm{sq}}_{d,\mathrm{analog}}$ takes independent of $d$. In particular we find that for any given $\sigma$, once $d\gg\sigma^{-2}$, the value
\begin{equation}
    \left|I^{\mathrm{sq}}_{d,\mathrm{analog}} - \log_2\left(\frac{1}{\sigma^2 e}\right)\right|\leq\mathcal{O}(e^{-d\pi\sigma^2}) + \mathcal{O}(d^{1/2}\sigma^{-1}e^{-\frac{\pi d}{9\sigma^2}}),
\end{equation}
the derivation of which is detailed in Appendix \ref{app:dispSqGKP}. This is an incredibly interesting consequence since it is known from \cite{holevo1999evaluatingcapacitiesbosonicgaussian,PhysRevA.64.062301} that the coherent information of the Gaussian displacement noise channel $\mathcal{N}_{\mathrm{displ}}$ (see Eq. \eqref{eq:Ndispl}) is
\begin{equation}
    I_c(\mathcal{N}_{\mathrm{displ}}) =\log_2\left(\frac{1}{\sigma^2 e}\right).
\end{equation}
Hence if we choose a large enough $d$ and concatenate it to some outer code capable of achieving $I_{d,\mathrm{analog}}^{\mathrm{sq}}$, we can get very close to $I_c(\mathcal{N}_{\mathrm{displ}})$ as an achievable rate. In fact by taking $d\to\infty$ we can concretely claim $I_c(\mathcal{N}_{\mathrm{displ}})$ as achievable using this particular choice of codes.

The reason of this convergence is roughly the same as the reason $p_1(u,s_1)$ converges to $p_{\mathrm{lim}}(u,s_1)$. The distribution of $p_{\mathrm{lim}}(u,s_1)$ implies that on increasing $d$, only the contributions of $X^u$ errors with $|u|< \mathcal{O}(\sqrt{d})$ become relevant since the other probabilities are exponentially suppressed. This means that even though the Hilbert space dimension has increased by increasing $d$, the size of the typical error set is only growing as a function of $\sqrt{d}$. The displacements along $\hat{q}$ that result in the same syndrome are spaced apart by $\sqrt{\frac{2\pi}{d}}$ and as a result increasing $d$ ends up making all the syndromes equally likely. As a result, the marginal distribution $p(s_1)$ approaches the uniform distribution over $s_1$ between $-1/2$ to $1/2$ differing only in $\mathcal{O}(e^{-\pi d\sigma^2})$. However, it is clear that even in this limit, the syndrome is providing useful information since $p_1(u,s_1)$ is a function of $\frac{u+s_1}{d}$. The convergence toward $p_{\mathrm{lim}}$ lets us analytically approximate the value of $I^{\mathrm{sq}}_{d,\mathrm{analog}}$ while bounding the difference from the actual values in functions that are exponentially suppressed in increasing the value of $d/\sigma^2$. Putting all of this together, we reach our first claim that through the concatenation of appropriate outer code with square GKP qudits, the coherent information is indeed an achievable rate for all values of the noise strength $\sigma$.

Numerically, we note that since the difference in rates is exponentially suppressed in growing $d$, even for $d=5$ and $\sigma>0.4$, the achievable rate is numerically very close to $I_c(\mathcal{N}_{\mathrm{displ}})$. In the following section we will show how quantum polar codes can be shown to achieve $I^{\mathrm{sq}}_{d,\mathrm{analog}}$ which shows that we can explicitly construct the GKP codes that have rates very close to $I_c(\mathcal{N}_{\mathrm{displ}})$.

\section{Polar codes as a candidate for concatenation}
\subsection{Channel polarization for classical noise}\label{sec:classical_polar}
We introduce a notation of $u^N_1$ which represents a vector of length $N$ with elements labeled as $u_i$, and a subset of it $u^j_i$ is the terms of indices between $i$ to $j$ (inclusive of both). We now motivate the definition of a discrete memoryless channel (DMC). A classical noise channel $W$ can be seen as a map between symbols in an input alphabet $\mathcal{X}$ to an output alphabet $\mathcal{Y}$ where the probability symbol $x$ maps to $y$ is given by $W(y|x)$. The memoryless aspect of this noise channel means that if this were to act individually over $N$ different symbols $x^N_1$ where each $x_i\in\mathcal{X}$, the output symbol $y_i$ only depends on $x_i$ and has no memory of the symbols that came before it. Hence the noise acts independently on each symbol. Common discrete memoryless channels include erasure channels, bit-flip channels and also the additive white Gaussian noise (AWGN) channel. Here we are concerned with dits so $\mathcal{X}\equiv \mathbb{F}_d$ and we will not place any limitations on what the output alphabet $\mathcal{Y}$ is. One can always find the symmetric capacity of this channel which is given by
\begin{equation*}
    I(W) = \sum_{x\in\mathcal{X}}\sum_{y\in\mathcal{Y}}\frac{1}{d}W(y|x)\log_d\left(\frac{W(y|x)}{\sum_{x'\in\mathcal{X}}\frac{1}{d}W(y|x')}\right),
\end{equation*}
and will always lie between 0 and 1. This represents the rate (in dit per channel use) of information transfer that can be achieved through this channel assuming all inputs to be equally likely.  

Let us suppose we have a linear transform $G_N:\mathbb{F}_d^{N}\to\mathbb{F}_d^N$ and it acts on a dit-string of length $N$ which we represent as $u^N_1$, and maps it to $x^N_1$. We now communicate $x^N_1$ over $N$ copies of the noise channel $W$ which maps it to the output $y^N_1$. Consider the task of figuring out $u^N_1$ using the output $y^N_1$ aided with a genie which supplies the values $u^{i-1}_1$ when we wish to estimate $u_i$. This by itself can be understood by treating the whole set $y^N_1,u^{i-1}_1$ as an output for the dit $u_i$ whilst all of $u_{i+1}^N$ can take any possible value. Note that this is nothing but a discrete memoryless channel of its own mapping $u_i$ to the tuple $(y^N_1,u^{i-1}_1)$ with some probability $W^{(i)}((y^N_1,u^{i-1}_1)|u_i)$. Wherever we use the notation $W^{(i)}$, we will be referring to the channel $W^{(i)}((y^N_1,u^{i-1}_1)|u_i)$. The channel $W^{(i)}$ has a very interesting property of its own. As the block-length $N$ increases, the fraction of channels that have $I(W^{(i)})\in(1-\delta,1]$ begins to tend to $I(W)$ whilst the fraction of channels with $I(W^{i})\in[0,\delta)$ tends to $1-I(W)$ for any $\delta>0$ for a particular choice of $G_N$, namely the polar transform (described in figure \ref{fig:polartransform}) \cite{sasoglu2009polarizationarbitrarydiscretememoryless,Arikan_2010}. In essence, all these channels either are nearly perfect for information transfer or are completely useless which is where the term polarization arises from.

Another relevant channel parameter is the Bhattacharya distance $Z(W)$ defined as
\begin{equation}
    Z(W) = \frac{1}{d(d-1)}\sum_{x,x'\in\mathcal{X},x\neq x'}\sum_{y\in\mathcal{Y}}\sqrt{W(y|x)W(y|x')},
\end{equation}
which also happens to give an inequality for the capacity as
\begin{equation}
\begin{aligned}
    \log_d\left(\frac{d}{1 + (d-1)Z(W)}\right) &\leq I(W)\\
    &\leq 1- \frac{(1-\sqrt{1-Z(W)^2})}{\log_2(d)},
\end{aligned}
\end{equation} 
proven in \cite{sasoglu2009polarizationarbitrarydiscretememoryless}. The importance of this is that $I(W)\to1$ if and only if $Z(W)\to 0$. This has a direct consequence in the way we design a polar code since the values of $Z(W)$ can be estimated through Monte-Carlo sampling since it is essentially the sum of the averages of $\sqrt{\frac{W(y|u)}{W(y|u')}}$ for all pairs $u\neq u'$ \cite{5075875}.

A polar code is defined in terms of the set of symbols that are chosen to be frozen before the encoding procedure. The set $\mathcal{A}\subseteq\{1,\dots N\}$  is the set of symbols that are used for the encoding procedure. The error probability of a polar code under a successive cancellation (SC) decoder can be upper bounded by
\begin{equation}
    P_e\leq (d-1)\sum_{i\in\mathcal{A}}Z(W^{(i)}),
\end{equation}
which follows from the fact that the SC decoder is sequentially doing the maximum likelihood estimate of $u_i$ using $W^{(i)}$, a task for which the error probability is bounded above by $(d-1)Z(W^{(i)})$ \cite{sasoglu2009polarizationarbitrarydiscretememoryless}. This is notably not the same as the overall maximum likelihood decoding since the channels $W^{(i)}$ only use knowledge of the symbols $u^{i-1}_1$ and assume even the possibly frozen symbols in $u^N_{i+1}$ to be random. The asymptotic behavior of $Z(W^{(i)})$ is proven in \cite{6357295} where it is shown that the fraction of $W^{(i)}$ with $Z(W^{(i)})\leq 2^{-N^\beta}$ for any $\beta<1/2$ approaches the value of $I(W)$ as $N\to\infty$. This hence gives a sequence of codes with $P_e\to0$ and $\frac{|\mathcal{A}|}{N}\to I(W)$ hence being capacity achieving. For a fixed $N$, we refer to the `good' indices $i$ as those with $Z(W^{(i)})<\delta$ and the `bad' indices to be everything else. The good indices will go in $\mathcal{A}$, hence having $u_i$ carry information and all the other $u_i$ with $i\in\mathcal{A}
_c$ will be frozen to a particular symbol. We can choose a sufficiently small $\delta$ to ensure that the channels $W^{(i)}$ are sufficiently good enough for information transmission.

We refer the reader to Appendix \ref{app:polarencoder} for a description of the encoding procedure and Appendix \ref{app:polarSCdecoder} for the SC decoding procedure for a polar code. We make use of the fact that polarization is known to happen for channels with input alphabets that are prime in number \cite{sasoglu2009polarizationarbitrarydiscretememoryless} as well as the fact that the kernel used for the encoding operation can be chosen to be any matrix of the form
\begin{equation}
    G_N = \begin{pmatrix}1 & \alpha\\
    0 & 1
    \end{pmatrix}^{\otimes n},
\end{equation}
where $\alpha\in\mathbb{F}_d\backslash\{0\}$ \cite{5513568}. We now proceed to show how the principle of channel polarization can be used to achieve the rate $I_{d,\mathrm{analog}}^{\mathrm{sq}}$.
\begin{figure}[h]
    \centering
    \includegraphics[width=\linewidth]{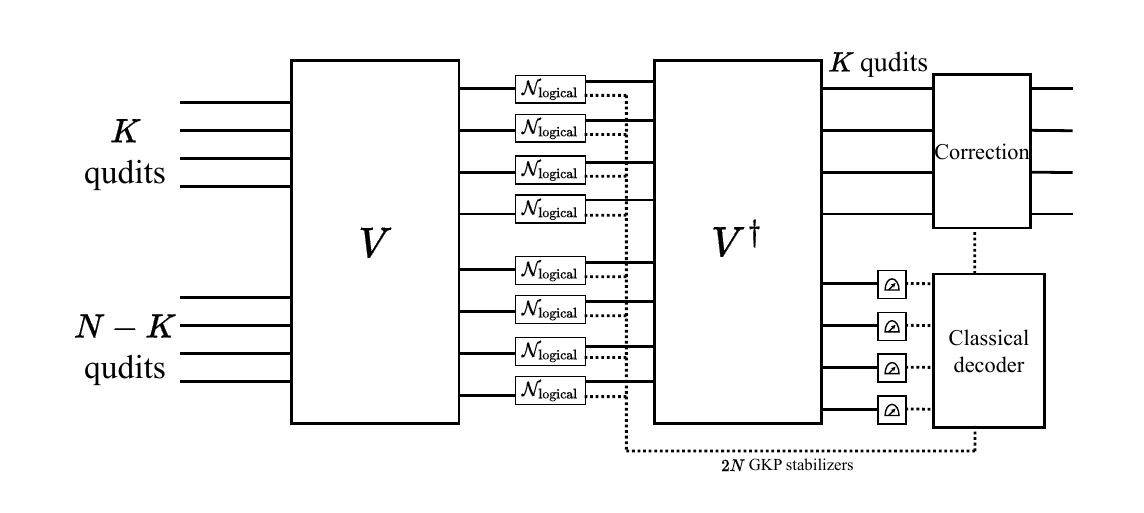}
    \caption{A heuristic circuit diagram explaining the structure of the concatenation. Here we treat each single mode square GKP as its own qudit which undergoes an encoding operation $V$ which creates the codewords for the concatenated outer polar code. The operations in $V$ are logical qudit operations (in this case purely composed of CSUM gates in the case of the polar code). The $K$ qudits are used for encoding information and the $N-K$ qudits are frozen to some chosen state. The logical noise operation $\mathcal{N}_{\mathrm{logical}}$ acts on all the encoded qudits and also outputs classical information in the form of the analog GKP syndrome of each individual mode. Following this the decoding operation proceeds, which involves inverting the encoding operation and then measuring the previously frozen $N-K$ qudits which gives the syndrome information of the outer polar code. This along with all the individual GKP syndromes is then used in a classical decoder (the successive cancellation decoder) giving a correction to be applied on the $K$ qudits carrying information.}
    \label{fig:3}
\end{figure}

\subsection{CSS-like entanglement assisted polar code construction}\label{sec:css_polar}
Given their capacity achieving properties in the case of classical memoryless noise channels, it was only natural that there was work done to extend this to quantum noise channels. Wilde and Guha showed that quantum polar codes can be constructed for achieving the capacity of classical-quantum channels \cite{6302198} as well as the coherent information of degradable channels with classical environment \cite{6302198}, albeit without an efficient decoding method. Dupuis et al \cite{PhysRevLett.109.050504} used observed that a single qubit Pauli noise channel can be treated as a combination of two DMCs with one representing bit-flip noise and the other representing phase-flip (conditioned on whether a bit-flip has occurred or not), a CSS-like entanglement assisted code can be constructed using two classical polar codes. This code achieves a net coding rate equal to the Hashing bound of this noise channel. Importantly this scheme has both efficient encoding and decoding since the encoding circuit consist of just CNOT operations and the decoder simply reverses the encoder and then the syndrome can be fed into a classical decoder. This was further generalized to beyond Pauli channels in \cite{8989387,9366784} showing a channel splitting and recombining method using random two-qubit Clifford operations. These have also been extended for qudit input channels as shown in \cite{9517845}. We will mainly be using the principles as they are presented in \cite{PhysRevLett.109.050504} and extending them for qudit CMP channels. We refer the reader to \cite{goswami2021quantum} for a more detailed treatise on quantum polar codes.

We first begin by examining what the polar transform looks like over qudits. We wish to be able to map all the $N$ qudit states $\ket{\mathbf{z}}$ ($\mathbf{z}\in\mathbb{F}^N_d$) chosen to be the computational basis to $\ket{G_N\mathbf{z}}$ and this is the basis where $\hat{Z}$ is diagonal. The diagonal basis for $\hat{X}$ will be represented by $\ket{\tilde{\mathbf{x}}}$ ($\mathbf{x}\in\mathbb{F}_d^N$). Since a single unit of the encoding operation takes $(z_1,z_2)\to(z_1+\alpha z_2,z_2)$ for $z_1,z_2\in\mathbb{F}_d$, this operation is equivalent to $\alpha$ repetitions of the CSUM gate. Extending this to the quantum case we can define the unitary
\begin{equation}
    \hat{V} = \sum_{\mathbf{z}\in F_d^N}|G_N\mathbf{z}\rangle\langle\mathbf{z}| = \sum_{\mathbf{x}\in F_d^N}|G_N^{-T}\tilde{\mathbf{x}}\rangle\langle\tilde{\mathbf{x}}|,\label{eq:polarV}
\end{equation}
where we can note that since $V$ is obtained by the use of $\mathcal{O}(N\log(N))$ CSUM gates, each gate in the $X$ basis becomes a CSUM$^{-1}$ gate with the control and target swapped. Hence effectively the polar transform in the $\hat{X}$ basis is given by
\begin{equation}
    G_N^{-T} = \begin{pmatrix}
        1 & 0\\
        -\alpha & 1
    \end{pmatrix}^{\otimes n},
\end{equation}
which albeit is a different kernel, it will still give channel polarization as shown in \cite{5513568}. It should also be noted that the order of the dits is effectively reversed in the $x$ basis.
\begin{figure*}[ht]
    \centering
    \includegraphics[width=0.85\linewidth]{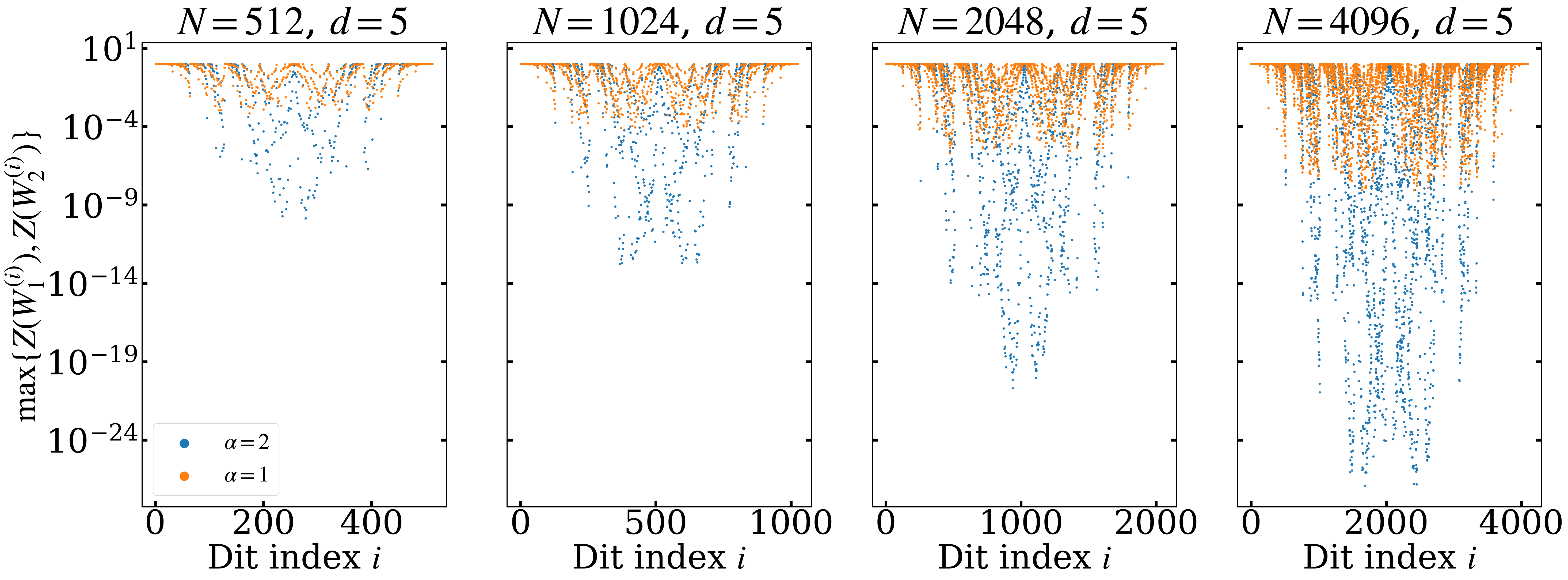}
    \caption{Effects of channel polarization on increasing block length $N$. Here we simulate the values of $Z(W_1^{(i)})$ and $Z(W_2^{(i)})$ for displacement spread $\sigma=0.4$ using kernel with $\alpha=2$ (blue dots) and $\alpha=1$ (orange dots) for $d=5$. Information will be encoded in dits that have both $Z(W_1^{(i)})$ and $Z(W_2^{(i)})$ being small hence we plot $\max\{Z(W_1^{(i)}),Z(W_2^{(i)})\}$. As can be seen for small block lengths, not enough of the indices have both the values being small as can be seen for $N=512$, and as can be seen in increasing value of $N$ till finally reaching $N=4096$. Notably, the polarization phenomenon is observed to be quicker using the kernel of $\alpha=2$ compared to $\alpha=1$ as can be inferred by comparing the fraction of indices satisfying $\max\{Z(W_1^{(i)}),Z(W_2^{(i)})\}<\delta$ for some small enough $\delta$.}
    \label{fig:4}
\end{figure*}
As was observed in \cite{9517845,PhysRevLett.109.050504}, the qudit Pauli noise channel with $p_{u,v}$ probability for the operation $\hat{X}^u\hat{Z}^v$ can be understood as two DMCs
\begin{equation}
\begin{aligned}
    W_A(z+u|z)&=\sum_vp_{u,v},\\
    W_P((x+v,u)|x)&=p_{u,v},
\end{aligned}
\end{equation}
where $W_A$ represents bit-flip noise and $W_P$ represents phase-flip noise. Notably these are not independent noise channels in general and so to account for that, we represent the phase noise as a DMC which also outputs the dit $u$ indicating the amount of dit-flip that has occurred already since this conditions 
the probability of a phase-flip after that. We now consider the classical polar codes that can achieve the capacity of $I(W_A)$ and $I(W_P)$ in dit per channel use (or equivalently $\log_2(d)I(W_A)$ and $\log_2(d)I(W_P)$ in bit per channel use) and note that we can simultaneously achieve their capacity since the polar transform $\hat{V}$ in Eq. \eqref{eq:polarV} does the polar transform $G_N$ in the $Z$ basis and the polar transform $G_N^{-T}$ in the $X$ basis. To be able to communicate quantum information through a particular index $i\in\{1,\dots N\}$, we require it to be both a `good' index against bit-flip and phase-flip noise. Since the polarization in the phase basis occurs in the reversed order of indices, this means we would always have to consider 4 disjoint sets which an index $i$ must lie in which are as follows.
\begin{itemize}
    \item $\mathcal{I}$: set of indices good against both bit and phase-flip noise.
    \item $\mathcal{A}$: set of indices good against phase-flip noise but bad against bit-flip noise.
    \item $\mathcal{P}$: set of indices good against bit-flip noise but bad against phase-flip noise.
    \item $\mathcal{E}$: set of indices bad against both noise.
\end{itemize}
Naturally this means that we can use qudits with index $i\in\mathcal{I}$ to send quantum information. Qudits with index $i\in\mathcal{A}$ will be required to be frozen in the $Z$ basis and qudits with index $i\in\mathcal{P}$ will be required to be frozen in the $X$ basis. Notably due to the structure of $\hat{V}$ being only CSUM gates, $\hat{V}\hat{X}_i\hat{V}$ and $\hat{V}\hat{Z}_i\hat{V}$ always stay as $\hat{X}$ or $\hat{Z}$ strings. Since $\mathcal{A}$ and $\mathcal{P}$ are non-intersecting, freezing their indices will also make sure we end up with commuting stabilizers which are either purely made of Pauli $X$ or Pauli $Z$ operations giving a CSS like construction. The complication arises when we consider $\mathcal{E}$ which must be frozen in both $X$ and $Z$ bases simultaneously. Using an entanglement assisted coding method \cite{brun2016entanglementassistedquantumerrorcorrectingcodes,PhysRevA.66.052313,bennett2002entanglementassistedcapacityquantumchannel}, these can then be chosen to be shared bell pairs between the sender and receiver which would make this a valid coding method. The set $\mathcal{A}\cup\mathcal{E}$ would be frozen in the $Z$ basis (hence $|\mathcal{A}\cup\mathcal{E}| \approx N(1-I(W_A))$) and the set $\mathcal{P}\cup\mathcal{E}$ would be frozen in the $X$ basis (hence $|\mathcal{P}\cup\mathcal{E}| \approx N(1-I(W_P))$), and here $\mathcal{E}$ is frozen in both bases by the fact that it is shared bell pairs. As a result of the possible entanglement assistance, the net coding rate (in qudit per channel use) this achieves is given by
\begin{equation}
    \frac{|\mathcal{I}|-|\mathcal{E}|}{N}\to I(W_A)+I(W_P)-1 = 1-H_d(p_{u,v}),
\end{equation}
where we detail the encoding and decoding of this in Appendix \ref{app:CMP}.

Since we are working with a classical mixture of Pauli noise Eq. \eqref{eq:logicalN}, we can consider the syndrome to be part of the noisy output which yields two noise channels
\begin{equation}
\begin{aligned}
    W_1(z+u,s_1|z) &= p_1(u,s_1),\\
    W_2(x+v,s_2|x) &= p_2(v,s_2),
\end{aligned}\label{eq:W1W2}
\end{equation}
that are still DMC, and will obey all the polarization requirements that $W_A$ and $W_P$ do. Hence we similarly see that
\begin{equation}
    \frac{|\mathcal{I}|-|\mathcal{E}|}{N}\to I(W_1)+I(W_2)-1 = \frac{I_{d,\mathrm{analog}}^{\mathrm{sq}}}{\log_2(d)},
\end{equation}
which essentially means that if we concatenate a quantum polar code constructed using the knowledge of noise channels $W_1$ and $W_2$, we can obtain an explicit concatenated square GKP code capable of achieving $I_{d,\mathrm{analog}}^{\mathrm{sq}}$. The success of the code relies on a decoding task which is the combination of decoding the bit and phase errors separately. Hence we must consider the failure probabilities of both decoders
\begin{equation}
\begin{aligned}
    P_{e,1}\leq(d-1)\sum_{i\in\mathcal{I}\cup\mathcal{P}}Z(W_1^{(i)}),\\
    P_{e,2}\leq(d-1)\sum_{i\in\mathcal{I}\cup\mathcal{A}}Z(W_2^{(i)}),
\end{aligned}\label{eq:polarineq}
\end{equation}
which must both go to zero, which is ensured by the individual polarization of $W_1$ and $W_2$. It is important to point out that we only make this claim for prime values of $d$ since this result hinges on the ability that arbitrary kernels can be used for polarization as well as the polarization sharply splits channels into having $I(W)\to1$ or $I(W)\to0$, which is not necessarily true for all kernels if non-prime values of $d$ were to be used \cite{sasoglu2009polarizationarbitrarydiscretememoryless}. While we expect these results to be generalizable to arbitrary $d$, we leave that exploration for future work.
\subsection{Numerical results and discussion}\label{sec:numeric_polar}
Designing a polar code requires being able to find the values of $Z(W^{(i)})$ to be able to judge which indices are good and bad for information communication. This is general not an easy task since this would involve summing over an exponentially increasing number of terms as can be observed from the fact that the output alphabet of $W^{(i)}$ lies in $\mathcal{Y}^N\times\mathcal{X}^{i-1}$. This hence calls for a sampling based method to be able to estimate $Z(W^{(i)})$. We note that the channels we are working with are symmetric, which is to say that for any $a\in\mathbb{F}_d$, there is a permutation $\pi_a:\mathcal{Y}\to\mathcal{Y}$ such that $W(y|x)=W(\pi_a(y)|x\oplus a)$. This is trivially true for the channels $W_1$ and $W_2$ in \ref{eq:W1W2}. Due to this symmetry, we only need to be concerned with the effect of these channels on the all $0$ input and being able to decode it back to all $0$ perfectly. Also, this makes the choice of frozen values to not have any consequence on code performance, hence we can choose the frozen values $u_{\mathcal{A}_c}$ to always be $0$. Due to this symmetry, we note that $Z(W)$ depends only on the values of $\langle\sqrt{\frac{W(y|a)}{W(y|0)}}\rangle_y$ averaged over all possible outputs of $W$. The SC decoder in fact finds these values for the channels $W^{(i)}$ during the decoding procedure which it then uses to make the decision of what the best estimate $\hat{u}_i$ is for the symbol $u_i$. The closer this expression is to $0$, this reflects more sureness for $u_i =0$ which is the correct value, and the closer this is to $1$ reflects less sureness. Hence we pick $M$ samples from the distribution the outputs follow and then in $\mathcal{O}(MN\log(N))$ time we are able to estimate all the values of $Z(W^{(i)})$ with relative accuracy $\mathcal{O}(M^{-{1/2}})$ using Monte-Carlo estimation.
\begin{figure}[h]
    \centering
    \includegraphics[width=\linewidth]{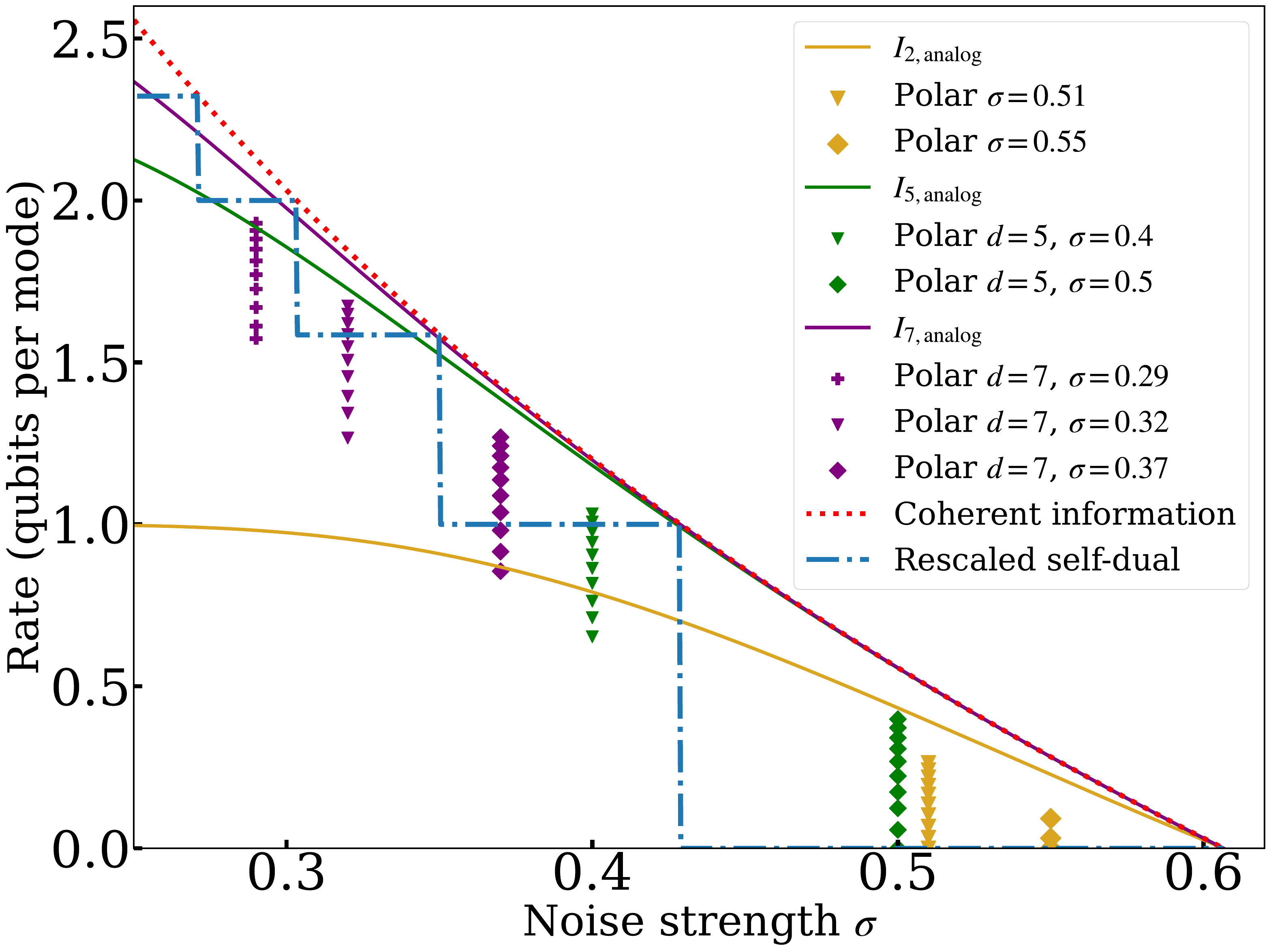}
    \caption{Performance of polar codes concatenated to square GKP qudits of $d=2$, $d=5$, and $d=7$. Here we find an explicit sequence of codes with increasing $N$ which satisfies $P_{e,1}\leq c_e N^{-\beta}$ and $P_{e,2}\leq c_e N^{-\beta}$ at a given value of $\sigma$ as the value of $N$ is increased. For the above plot we use $c_e=0.5$ and $\beta=-2/9$. The kernel choice for $d=2$ is $\alpha=1$, $d=5$ is $\alpha=2$, and for $d=7$ is $\alpha=3$. We find that we are clearly able to approach the solid line representing $I_{d,\mathrm{analog}}^{\mathrm{sq}}$ for different values of $d$ which in turn is able to exceed the rate achievable using rescaled self-dual lattices (dashed-dotted line).}
    \label{fig:5}
\end{figure}
Note that we need both $Z(W_1^{(i)})$ and $Z(W_2^{(i)})$ to simultaneously approach $0$ for $i$ to be a good index. We can see channel polarization in action in Fig. \ref{fig:4} which shows the estimated values of $Z(W_1^{(i)})$ and $Z(W_2^{(i)})$ for $N=512$ to $4096$ for a square qudit ($d=5$) with $\sigma=0.4$ and kernel choice $\alpha=2$ and $\alpha=1$ in blue and orange respectively. The fraction of indices satisfying both $Z(W_1^{(i)})<\delta$ and $Z(W_2^{(i)})<\delta$ can be clearly seen to be increasing with $N$ as one would expect for both choices of $\alpha$. 

We further construct sequences of polar codes satisfying
\begin{equation}
    P_{e,1}\leq c_e N^{-\beta},\quad P_{e,2}\leq c_e N^{-\beta},
\end{equation}
for a particular value of noise strength $\sigma$ and constants $0<c_e<1$ and $\beta>0$. We do this by finding the best choices of $\mathcal{I},\mathcal{A},\mathcal{P}$ and $\mathcal{E}$ based on the estimated values of $Z(W_1^{(i)})$ and $Z(W_2^{(i)})$ and use Eq. \eqref{eq:polarineq}. We report our results for $d=2$, $d=5$ and $d=7$ in Fig. \ref{fig:5}(a), Fig. \ref{fig:5}(b), and Fig. \ref{fig:5}(c) respectively. For a fixed value of $\sigma$ each sequence is plotted in the same legend as $N$ is varied from $2^9$ to $2^{18}$. The rate $\frac{|\mathcal{I}|-|\mathcal{E}|}{N}$ is strictly increasing in $N$ (as one would expect \cite{sasoglu2009polarizationarbitrarydiscretememoryless}) and can be seen to approach the values of $I_{d,\mathrm{analog}}^{\mathrm{sq}}$. For each of these sequences as this rate is increasing, so is the upper bound on error decreasing showing that at this value of $\sigma$, we can provably suppress the error further using the asymptotic behavior of the Bhattacharya parameter.

We also note that the choice of kernel has a clear impact on performance. We find that for $d=5$, the kernel choice of $\alpha=2$ is far better than that of $\alpha=1$ in the sense that the polarization occurs more dramatically at the same value of $N$ as can be noted in Fig. \ref{fig:4}. Additionally due to the very large blocklengths we do not perform exact estimations of $P_{e,1}$ and $P_{e,2}$ and rather use the upper bound from Eq. \eqref{eq:polarineq}. It is worth noting that there aren't direct distance scaling guarantees for polar codes and in large part they aren't a direct measure of performance. Using the upper bound in Eq. \eqref{eq:polarineq}, the error probabilities converge to zero for large enough $N$ (provably upper bounded by $N2^{-N^\beta}$ for $0<\beta<1/2$ \cite{6357295}) which is essential in showing their capacity achieving properties. This also is sufficient to claim good performance without delving into distance since the error probability is ultimately the relevant metric for performance. In their application in classical settings, their distance can be improved by a careful selection of the information set \cite{8539599} if necessary, since the distance equals $2^{\min_{i\in\mathcal{A}}(wt(i-1))}$ where $wt(i-1)$ is the Hamming weight of the binary representation of the number $i-1$ for a classical polar code $(N,K,\mathcal{A})_d$. This can be seen to be true by simply considering that the encoding procedure of a polar code uses a binary tree structure and remains true for any kernel choice. For the kernel choice of $\alpha=1$ the smallest weight codeword can also just consist of $2^{\min_{i\in\mathcal{A}}(wt(i-1))}$ $1$s and the rest being zeroes. For a bit-flip noise like $W_1$ where usually $p(1,s_1)$ is much larger than any other $p_1(u,s_1)$ (except for $p(0,s_1)$) this shows how a different kernel choice can lead to the nearest non-zero codeword having smaller probability of being reached from the zero codeword which can then affect finite-blocklength performances.

\section{Achieving the capacity of pure-loss and amplification channels}
\begin{figure*}[ht]
    \centering
    \includegraphics[width=\linewidth]{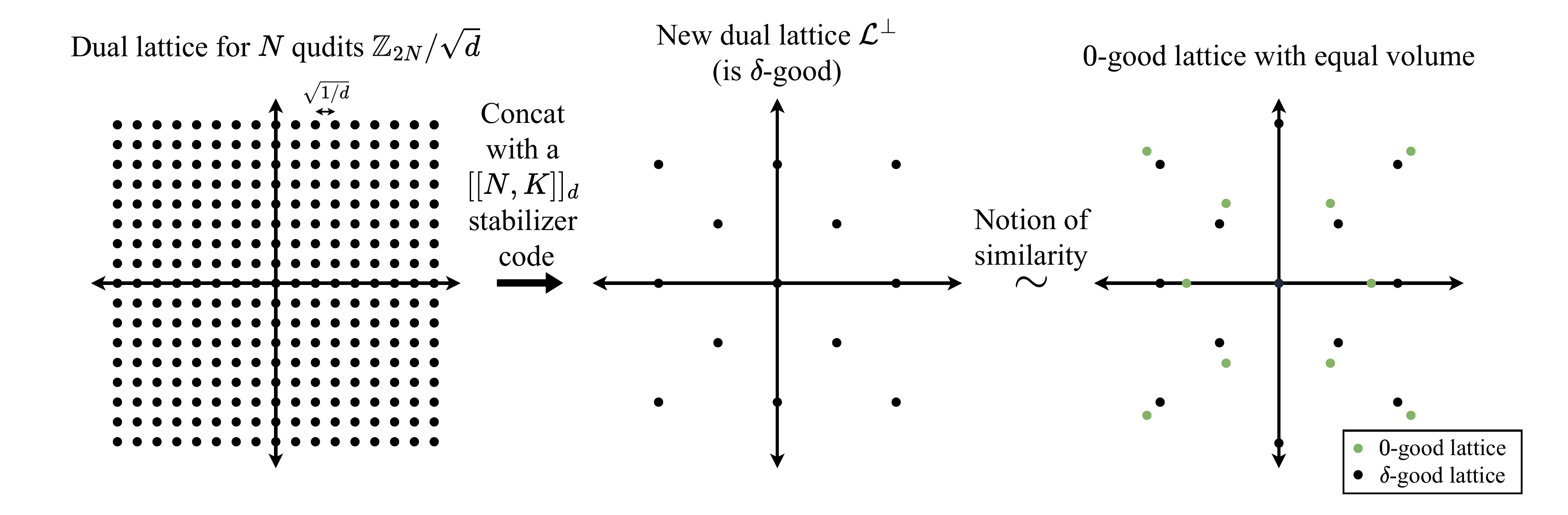}
    \caption{A rough visual depiction of the working principle behind how qudit stabilizer codes concatenated to square GKP qudits can be used to achieve the capacity of the pure-loss channel. Considering $N$ square GKP qudits by themselves, the dual lattice of this $N$ mode GKP code corresponds to the integer lattice scaled down by $\sqrt{d}$ which is $\mathbb{Z}^{2N}/\sqrt{d}$. Concatenation with some outer $[[N,K]]_d$ code restricts what can be considered a valid logical operation hence choosing some sub-lattice $\mathcal{L}^\perp\subseteq\mathbb{Z}^{2N}/\sqrt{d}$. A crucial aspect of increasing $d$ here shows that this allows for $\mathcal{L}^\perp$ to be chosen to have certain good properties. We note that this means that we can define some kind of notion of similarity between this concatenated lattice and another lattice of the same volume which is $0$-good (which has direct capacity achieving properties). We base this notion of similarity using $\delta$-good as a definition where as $\delta\to0$, we are able to capture the capacity achieving properties a $0$-good lattice would have. These lattices need not actually be similar in their exact descriptions but for the sake of illustration we depict them to be similar in the above figure. Using this $\delta$-good property, we are able to claim the existence of asymptotically good lattices to show that the capacity of pure-loss and the amplification channel is an achievable rate.}
    \label{fig:6}
\end{figure*}
\subsection{Quantum capacity of the pure-loss and amplification channels}\label{sec:caplossamp}
The bosonic pure-loss channel can be modeled as a beam-splitter interaction with the environment which is in the vacuum state following which the environment is traced out. Similarly the amplification channel can be modeled as a two-mode squeezing interaction with the environment in vacuum state following which the environment is traced out. We will consider the loss channel to have a transmittance $\eta<1$ and the gain of the amplification channel to be $G>1$. We can also write these noise channels as follows \cite{zheng2024performanceachievableratesgottesmankitaevpreskill}
\begin{equation}
    \mathcal{N}_{\mathrm{loss}}(\cdot) = \sum_{l=0}^{\infty} \hat{E}_l(\cdot)\hat{E}_l^\dagger,\quad \mathcal{N}_{\mathrm{amp}} = \sum_{l=0}^{\infty}\hat{A}_l(\cdot)\hat{A}_l^\dagger,
\end{equation}
where the Kraus operators are defined as
\begin{equation}
\begin{aligned}
    \hat{E}_l &= \left(\frac{\eta}{1-\eta}\right)^{l/2}\frac{\hat{a}^l}{\sqrt{l!}}(1-\gamma)^{\hat{n}/2},\\
    \hat{A}_l &= \frac{(G-1)^{l/2}}{\sqrt{l! G}}G^{-\hat{n}/2}(\hat{a}^{\dagger})^{l},
\end{aligned}
\end{equation}
and so it can be seen that they have a very similar form. Both of these channels are degradable \cite{PhysRevA.74.062307}, and so their capacity matches the maximal one-shot coherent information \cite{holevo1999evaluatingcapacitiesbosonicgaussian}. Hence their capacity is given by
\begin{equation}
    C_Q(\mathcal{N}_{\mathrm{loss}/\mathrm{amp}}) =\max\left(\log_2\left(\left|\frac{\tau}{1-\tau}\right|\right),0\right),
\end{equation}
where $\tau=\eta$ for loss and $\tau=G$ for amplification. In \cite{zheng2024performanceachievableratesgottesmankitaevpreskill}, due to the similar form of these two channels, it is shown that the infidelities of an infinite energy GKP code after experiencing this noise and then using the near-optimal transpose recovery are upper bounded by  $\frac{1}{4}\sum_{\pmb{x}\in\mathcal{L}^\perp\backslash{0}}e^{-\pi g|\pmb{x}|^2}$ (with $g=\frac{\eta}{1-\eta}$ for loss and $g=\frac{G}{G-1}$ for amplification) where $\mathcal{L}^\perp$ is the dual lattice of the multi-mode GKP code. This upper bound holds true for arbitrary multi-mode GKP codes, which also includes any concatenated square GKP code. Hence, if we find a capacity achieving sequence of codes for the pure-loss channel that make use of the transpose recovery, this same sequence of codes would also achieve the capacity of the amplification channel when using the transpose recovery. Hence for the rest of this discussion, we will show a capacity achieving sequence for the pure-loss since the extension to the amplification will trivially follow.

\subsection{Existence of a capacity achieving sequence of codes}\label{sec:lossresult}
While we have largely discussed square GKP qudits in this work, we now discuss general GKP lattices and proceed to show a capacity achieving sequence of lattices that are obtained through the concatenation of a square GKP qudit with a qudit stabilizer code obtained through a self-orthogonal code in $GF(d^2)$. We refer the reader to Appendix \ref{app:selfortho} for a quick introduction to quantum stabilizer codes obtained using self-orthogonal codes in $GF(d^2)$ where we mainly describe the ideas introduced in the works of \cite{959288,782103,1715533}. The set of all possible self-orthogonal $[N,(N-K)/2]_{d^2}$ codes will be denoted by $\mathcal{T}$. This set is useful since we can map each $[N,(N-K)/2]_{d^2}$ code to a valid stabilizer group which is a subset of the $N$ qudit ($d$ level) Pauli group generated by $N-K$ independent stabilizers. The self-orthogonality is the key feature which ensures that this is a valid stabilizer group. The set $\mathcal{T}$ has a notion of balanced-ness \cite{loeliger_averaging_1997} which is to say that the number of codes in $\mathcal{T}$ that contain a particular self-orthogonal $\pmb{a}\in\mathbb{F}^{N}_{d^2}\backslash\{0\}$ is independent of $\pmb{a}$ itself. Not all $\pmb{a}\in\mathbb{F}^{N}_{d^2}\backslash\{0\}$ are self orthogonal with $\frac{1}{d}(d^{2N}+(d-1)(-d)^N)-1$ elements of $\mathbb{F}^{N}_{d^2}\backslash\{0\}$ satisfying this property \cite{1362919}. This property is incredibly useful in proving existence of good codes by making claims on the average of some function taken over the set of codes $\mathcal{T}$. This is made use of in proving the Minkowski-Hlawka theorem for classical codes over prime fields \cite{loeliger_averaging_1997}, existence of good CSS codes \cite{PhysRevA.54.1098} and also existence of good stabilizer codes constructed from self orthogonal codes in $GF(d^2)$ \cite{959288}.

An $N$ mode GKP code is obtained through a stabilizer group generated using $2N$ independent displacements. Since all these displacements must necessarily commute, this enforces that the symplectic inner product of two displacements must be an integer multiple of $2\pi$. If we consider displacements defined as $\hat{D}(\pmb{\xi})=\exp(-\xi^T J \hat{\pmb{x}})$ where $\hat{x} = \begin{pmatrix}
    \hat{\pmb{q}}\\
    \hat{\pmb{p}}
\end{pmatrix}$ and $J$ is the symplectic form, we can define the stabilizer lattice $\mathcal{L}$ to be a symplectically integral lattice and the stabilizer group consists of $\hat{D}(\pmb{\xi})$ for any $\pmb{\xi}\in\mathcal{L}$. The connection of symplectically integral lattices to GKP codes was explored in \cite{PhysRevA.64.062301,gottesman_encoding_2001} and has been extensively studied in \cite{PRXQuantum.3.010335,Conrad2022gottesmankitaev} which also explore how gauge choices show up in the lattice formalism. For the lattice $\mathcal{L}$, the GKP code obtained encodes $\det(\mathcal{L})$ levels and the symplectic dual $\mathcal{L}^\perp$ contains all $\pmb{\xi}$ that have $D(\pmb{\xi})$ be a valid logical GKP operator. A valid GKP lattice must satisfy $\mathcal{L}\subseteq\mathcal{L}^\perp$. Additionally, we can rescale lattices to increase the number of levels it encodes since the lattice $\sqrt{\lambda}\mathcal{L}$ encodes $\lambda^N\det(\mathcal{L})$ levels. This is particularly useful when using self-dual lattices $\mathcal{L}^\perp=\mathcal{L}$ since they are unimodular $\det(\mathcal{L})=1$ and hence have to be rescaled by $\sqrt{\lambda}$ to encode information. However for the lattice to remain symplectically integral $\lambda$ must be an integer. The lattice of the GKP code corresponding to $N$ square qudits is $\sqrt{d}\mathbb{Z}^{2N}$ and so the dual of this is $\mathbb{Z}^{2N}/\sqrt{d}$. Concatenating some $[[N,K]]_d$ code to $N$ square GKP qudits restricts the set of what remains as a valid logical displacement, and so the new effective lattice of the overall $N$ mode code $\mathcal{L}^\perp$ is a sub-lattice of $\mathbb{Z}^{2N}/\sqrt{d}$ and $\det(\mathcal{L})=d^K$ which is the logical dimension of this lattice.

We now discuss the infidelity of a GKP code against bosonic pure-loss followed by a transpose recovery channel which is near optimal based on the results from \cite{zheng2024performanceachievableratesgottesmankitaevpreskill}. From Lemma 15 in Appendix F of \cite{zheng2024performanceachievableratesgottesmankitaevpreskill}, it follows that for any GKP code which is generated using a symplectically integral lattice $\mathcal{L}$, the infidelity after transpose recovery against pure-loss satisfies
\begin{equation}
    \epsilon\leq \frac{1}{4}\sum_{\pmb{x}\in\mathcal{L}^{\perp}\backslash\{0\}}e^{-\pi\frac{\eta}{1-\eta}|\pmb{x}|^2},
\end{equation}
which provides a very useful upper bound on the performance of a particular GKP code to the pure-loss channel. If we wish to show a rate to be achievable, it is sufficient to find a sequence of lattices $\mathcal{L}_n\subseteq\mathbb{R}^{2n}$ such that 
\begin{equation}
\begin{aligned}
    \lim_{n\to\infty} \left(\sum_{\pmb{x}\in\mathcal{L}_n^{\perp}\backslash\{0\}}e^{-\pi\frac{\eta}{1-\eta}|\pmb{x}|^2}\right)\to0 \text{ and }\\\frac{\log_2(\det(\mathcal{L}_n))}{n}\to R.
\end{aligned}
\end{equation}
In \cite{zheng2024performanceachievableratesgottesmankitaevpreskill}, it is shown if $2^R$ is an integer less than equal to $\frac{\eta}{1-\eta}$, it can be achieved through the rescaling of self-dual lattices ($\mathcal{L}=\mathcal{L}^\perp$). This exact limitation also showed up in the achievable rates for the Gaussian displacement noise \cite{PhysRevA.64.062301}, however as we have shown in this work, that can be surpassed through using concatenated square GKP qudits of large enough dimension. 
 Considering the family of unimodular lattices $\mathcal{L}=\mathcal{L}^\perp$, the Buser-Sarnak theorem \cite{buser1994period} showed that for any rotationally invariant $f:\mathbb{R}^{2N}\to\mathbb{R}$, the following holds
\begin{equation}
    \left\langle\sum_{\pmb{x}\in\mathcal{L}^{\perp}\backslash{\{0\}}}f(\pmb{x})\right\rangle = \int d^{2N}\pmb{x}f(\pmb{x})
\end{equation}
where the averaging over the set of all unimodular lattices satisfying $\mathcal{L}=\mathcal{L}^\perp$. This allows for the existence of a particular lattice which has the expression $\sum_{\pmb{x}\in\mathcal{\mathcal{L}^{\perp}}\backslash{\{0\}}}f(\pmb{x})\leq\det(\mathcal{L})\int d^{2N}\pmb{x}f(\pmb{x})$. This directly shows how much $\sqrt{\lambda}\mathcal{L}$ can be scaled up by while having the sum still upper bounded by something that approaches $0$ for $N\to\infty$. We will refer to such a lattice as a good lattice.

To demonstrate the capacity achieving properties for concatenated square GKP codes, we introduce the following loose definition.
\begin{definition}[$\delta$-good lattice]
    A symplectically integral lattice $\mathcal{L}$ which is weakly self-dual $\mathcal{L}\subseteq\mathcal{L}^\perp$ is $\delta$-good if it satisfies 
    \begin{equation}
    \sum_{\pmb{x}\in\mathcal{L}^{\perp}\backslash\{0\}}f(\pmb{x})\leq\det(\mathcal{L})\int d^{2N}\pmb{x}f(\pmb{x}) +\delta,
\end{equation}
for the $f(\pmb{x})=e^{-g\pi|\pmb{x}|^2}$ for any $g>0$.
\end{definition}
Hence now we only need to find a sequence of lattices that are $\delta$-good with $\delta\to0$ along this sequence. We find that under this notion of good-ness we can also define a similarity between a $0$-good lattice which is good in the same way that the capacity achieving self-dual lattices are to some $\delta$-good lattice obtained through concatenation

Consider the set $\mathcal{T}_{\mathcal{L}}$ containing the GKP lattices representing $[[N,K]]_d$ codes obtained by self-orthogonal $[N,(N-K)/2]_{d^2}$ codes concatenated to $N$ modes with each mode having square GKP qudit of $d$ levels. We show that over this set of lattices
\begin{equation}
    \left\langle\sum_{\pmb{x}\in\mathcal{L}^{\perp}\backslash{\{0\}}}f(\pmb{x})\right\rangle_{\mathcal{T}_\mathcal{L}} \leq d^K\int d^{2N}\pmb{x}f(\pmb{x}) + \delta_{N,d}
\end{equation}
for $f(\pmb{x})=e^{-\pi g|\pmb{x}|^2}$ for $g>0$ where $\delta_{N,d}\to0$ as $d\to\infty$ and $d\ln(d)\ll N\ll e^{\pi g d}$. This result makes use of the balanced-ness of the set $\mathcal{T}$ containing all the self-orthogonal $[N,(N-K)/2]_{d^2}$ codes \cite{959288,loeliger_averaging_1997}. Additionally increasing $d$ makes $\delta\to0$ since the balanced-ness of this set allows us to relate the average over $\mathcal{T}_\mathcal{L}$ to a sum over the lattice $\mathbb{Z}^{2N}/\sqrt{d}$. Note that any concatenated lattice is always a sub-lattice of $\mathbb{Z}^{2N}/\sqrt{d}$ and evaluating this sum over this lattice lets us directly relate it to the integral as a Riemann sum and the function $f(\pmb{x})=e^{-\pi g|\pmb{x}|^2}$ is particularly well-behaved for this to hold true. While these are not exact relations, we group all our errors into the term $\delta_{N,d}$ which we show approaches zero along this defined sequence of $N,d\to\infty$. Hence we now state our main result.
\begin{theorem}
    There exists a sequence of qudit stabilizer codes $[[N,K]]_d$ for $d$ being prime with $d=3\mod 4$, with increasing $N,d$ such that 
    \begin{equation}
        \log_2(d)\frac{K}{N} = \log_2\left(\frac{\eta}{1-\eta}\right) - \tilde{\epsilon}
    \end{equation}
    and $\tilde{\epsilon}$ can be made arbitrarily small simultaneously while the infidelity of this sequence of codes using transpose recovery after pure loss of transmittance $\eta$ converges to zero as $N,d\to \infty$ with $d\ln(d)\ll N\ll e^{\pi d\frac{\eta}{1-\eta}}$.
\end{theorem}
We refer the reader to Appendix \ref{app:loss} for a complete proof of the above result. 
We offer some more intuition for this result by noting that the increasing $d$ makes the lattice $\mathbb{Z}^{2N}/\sqrt{d}$ finer. Any particular $[[N,K]]_{d}$ code is essentially defining some choice of lattice $\mathcal{L}^\perp\subseteq\mathbb{Z}^{2N}/\sqrt{d}$ which clearly increases the available choice on increasing $d$. Further this means that beginning with a fine enough lattice $\mathbb{Z}^{2N}/\sqrt{d}$ lets us approximate a good lattice which is capable of good sphere packing. We motivate this idea in Fig. \ref{fig:6}. Hence we find that through the concatenation of the appropriate stabilizer code with square GKP qudits, the capacity of the loss channel is achieved as a rate for all values of transmission. While our proof is restricted to only prime values of $d$ with $d=3\mod 4$, we expect this to be generalizable for any $d$ which is a prime power and mainly use  $d=3\mod 4$ since the basis choice for $GF(d^2)$ allows for a relatively clearer mapping to the lattice formalism.

\section{Discussion}\label{sec:discuss}
In this work we have shown that the rates of coherent information for the Gaussian displacement noise channel, capacity of the pure-loss channel and the amplification channel are all achievable through the use of concatenated GKP square qudits by taking the limit of very large individual qudit dimension. While this limit may seem un-physical, we highlight that for reasonably small $d=5$ and $d=7$, we can already get incredibly close to the coherent information of the Gaussian displacement noise channel. In addition for the displacement noise channel, we are able to explicitly construct sequences of these codes through the use of quantum polar codes. This result hinges on the fact that the effective logical noise after correction for the displacement noise channel can be reduced into a classical mixture of qudit Pauli noise with an auxiliary system containing syndrome information. 

While our results have focused on infinite-energy GKP for the sake of finding achievable rates, we note that the method of using analog information can be extended to the finite energy case. Here we work under the twirling approximation used in previous works \cite{rozpkedek2021quantum,PhysRevA.101.012316} to treat finite-energy GKP as a Gaussian displacement noise channel on an infinite-energy GKP. If the innate noise model is purely Gaussian displacement noise with spread $\sigma_0$, making the data modes finite-energy just increases the effective noise to $\sqrt{\sigma_0^2+\sigma^2_{\mathrm{data}}}$ where $\sigma_{\mathrm{data}}^2=\tanh(\Delta^2/2)$. However, we note that if the ancillas are finite-energy, the syndrome information obtained in the ancillas are no longer faithful. This is since the ancillary modes itself have some random displacement noise associated to it. We can exactly quantify this uncertainty by accounting for it in a new probability distribution (assuming base-noise strength $\sigma$)
\begin{equation}
\begin{aligned}
    &p_{\sigma_{\mathrm{anc}}}(u,s_1) \\&= \frac{1}{2\sigma_{\mathrm{anc}}}\int d\xi e^{-\xi^2/4\sigma_{\mathrm{anc}}^2}p\left(u,\left(s_1 + \xi\frac{d}{2\pi}\right)\mod 1\right) \\&= \frac{1}{d}\theta_3\left(\pi\left(\frac{u+s_1}{d}\right),e^{-\pi(\sigma^2+2\sigma_{\mathrm{anc}}^2)/d}\right)
\end{aligned}
\end{equation}
where we have considered a teleportation based protocol for syndrome extraction \cite{PRXQuantum.3.010315} and $\theta_3(u,q) = \sum_{n\in\mathbb{Z}}q^{n^2}\cos(2nu)$ is a Jacobi-theta function. Note that this means that we can account for both finite-energy ancilla and data modes by increasing the effective noise $\sigma_{\mathrm{eff}} = \sqrt{\sigma_0^2+\sigma_{\mathrm{data}}^2 + 2\sigma_{\mathrm{anc}}^2}$. This shows a possible way of achieving the rate $I^{\mathrm{sq}}_{d,\mathrm{analog}}(\sigma_{\mathrm{eff}})$ if the base noise strength is $\sigma_0$. 

However, quantum polar codes have high weight stabilizers which poses a direct problem to making this a practical implementation using such codes. There are ways of fault tolerantly extracting syndrome information for quantum polar codes \cite{goswami2021quantum} and one could also use a Knill-Glancy scheme for the overall GKP code obtained through concatenation \cite{PhysRevA.105.042427} which would give $2N$ syndromes that are sufficient for decoding. However, these are resource heavy since they require preparation of an ancilla system which is encoded using the polar code, and would hence require a fault-tolerant encoding method for the outer polar code. Given the structure of the polar code encoder, we do not expect this to be an easy task on its own. Given the extensive amount of work done in constructing fault-tolerant methods of logical operations as well as syndrome extraction for QLDPC codes \cite{xu2024constant,xu2024fastparallelizablelogicalcomputation,williamson2024lowoverheadfaulttolerantquantumcomputation,swaroop2025universaladaptersquantumldpc,cowtan2025parallellogicalmeasurementsquantum,he2025extractorsqldpcarchitecturesefficient}, we note that a lot of these directly can be extended with our results for GKP codes while still achieving non-zero rates since we have only examined concatenated square GKP codes.

In regard to our results in the case of the pure-loss and amplification channels we would like to note that it is currently limited to an existence based result. Given that we find a capacity achieving sequence through qudit stabilizer codes, we believe that stronger claims on what these stabilizer codes can be possibly made by restricting to polynomial codes such as non-binary Reed-Muller codes \cite{KIM20083115,LaGuardia2012,golowich2024asymptoticallygoodquantumcodes} and leave this exploration for future work. Importantly, this result adds more understanding to the structure of what can be considered a good lattice since we are able to express it as a concatenated square lattice. These results further highlight the fundamental connection between lattices and error correcting codes \cite{ebeling2013lattices,loeliger_averaging_1997} and also offer an exploration of goodness for sphere-packing through our definition of a $\delta$-good lattice. We hope that our results spark more interest in exploration in regards to concatenated GKP codes.

\section*{Acknowledgments}
We would like to thank Kyungjoo Noh, Filip Rozp{\k{e}}dek, Gideon Lee, Debayan Bandopadhyay, Mariesa Teo, and Han Zheng for useful discussions. We acknowledge support from the ARO(W911NF-23-1-0077), ARO MURI (W911NF-21-1-0325), AFOSR MURI (FA9550-21-1-0209, FA9550-23-1-0338), DARPA (HR0011-24-9-0359, HR0011-24-9-0361), NSF (ERC-1941583, OMA-2137642, OSI-2326767, CCF-2312755, OSI-2426975), Packard Foundation (2020-71479), and the Marshall and Arlene Bennett Family Research Program. This material is based upon work supported by the U.S. Department of Energy, Office of Science, National Quantum Information Science Research Centers and Advanced Scientific Computing Research (ASCR) program under contract number DE-AC02-06CH11357 as part of the InterQnet quantum networking project.

\appendix

\section{A primer to polar codes}\label{app:polar}
\subsection{Encoding}\label{app:polarencoder}
The encoding scheme for polar codes \cite{5075875} can be understood as moving up a binary tree of depth $\log N$ to encode $N$ bits. The basic building block is shown in Fig. \ref{fig:7}. This is then used to construct a binary tree for the whole process. The complexity of encoding is $\mathcal{O}(N\log N)$ operations since each layer of the binary tree has a total of $N/2$ operations and there are $\log N$ layers. Here all addition is modulo $d$ where $d$ refers to the dimension of the dit.
\begin{figure}[ht]
  \centering
  \includegraphics[width=.6\linewidth]{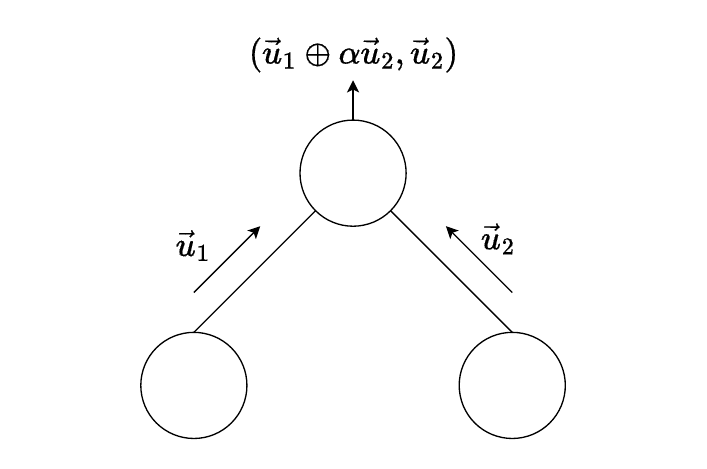}
  \caption{A unit of the encoding operation. The two leaves here take equal length dit-vectors and then do a ditwise xor between $\vec{u}_1$ and $\alpha\vec{u}_2$ while retaining a copy of $\vec{u}_2$ to output a dit-vector of twice the length.}\label{fig:7}
\end{figure}
\begin{figure}[ht]
  \centering
  \includegraphics[width=\linewidth]{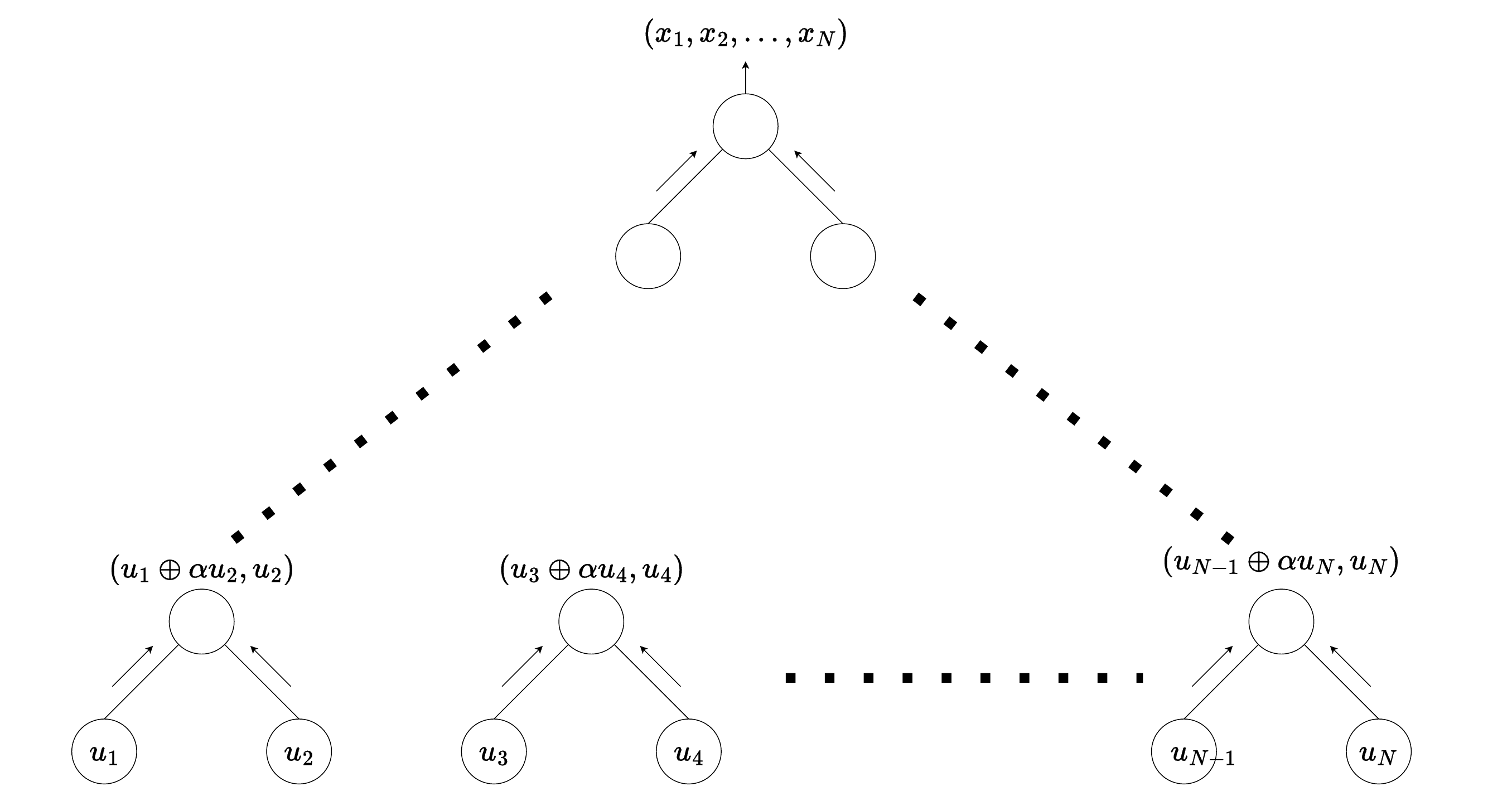}
  \caption{The full encoding binary tree for $N$ dits with the output $x^N_1 = u^N_1 G_N$}\label{fig:polartransform}
\end{figure} 
\begin{definition}[Polar transform]
    The polar transform over $N = 2^n$ dits $P_N:\mathbb{F}^{N}_d \to\mathbb{F}^{N}_d$ is defined by the action of the matrix $G_N$ on the $N$ bit vector $u^N_1$ as $P_N(u^N_1) = G_N u^N_1$ where all addition is modulo $d$ and
\begin{equation}
    G_N = \begin{pmatrix}
        1 & \alpha\\
        0 & 1
    \end{pmatrix}^{\otimes n}.
\end{equation}
The standard polar transform uses $\alpha=1$, but any choice $\alpha\in \mathbb{F}_d\backslash{0}$ can be used.
\end{definition}
\begin{definition}[Polar code $(N,K,\mathcal{A},u_{\mathcal{A}_c})_d$]
    A polar code $(N,K,\mathcal{A},u_{\mathcal{A}_c})_d$ is a code of block-length $N$ which encodes messages carrying $K = |\mathcal{A}|$ dits of information where $\mathcal{A}\subseteq \{1,\dots N\}$ and $u_{\mathcal{A}_c}\in \mathbb{F}^{N-K}_d$. The codewords for this code are obtained by encoding information in the bits with indices in set $\mathcal{A}$ and freezing all the other bits in $\mathcal{A}_c$ with the values $u_{\mathcal{A}_c}$ and then performing the polar transform on these input bits.
\end{definition}

\subsection{Successive cancellation decoding}\label{app:polarSCdecoder}
\begin{definition}[Discrete memoryless channel]
    We define a discrete memoryless channel (DMC) with input alphabet $\mathcal{X}$ and output alphabet $\mathcal{Y}$ where for any $y\in \mathcal{Y}$ and $x\in\mathcal{X}$, $W(y|x)$ equals the probability of obtaining output $y$ when the input was $x$. For any valid DMC, $\sum_{y\in\mathcal{Y}}W(y|x) = 1$. At any given time, the output of a DMC is only dependent on the input at that particular time, hence being memoryless.
\end{definition}
Here we describe the decoding procedure when each dit of the codewords experience the same DMC noise $W$. The decoding procedure makes use of the binary tree structure to achieve a complexity of $N\log(N)$ by making a decision of each bit only using the information of the output and the decisions of the bits which precede it.\\
To illustrate this idea, we first examine a single unit of the decoding procedure. Here we have $x_1 = u_1\oplus \alpha u_2$ and $x_2 = u_2$. We have given with use the following log-likelihood ratios
\begin{equation}
    \mathbf{L}_j = L^{[i]}_j = \log(\frac{p(x_j = 0)}{p(x_j = i)}),\quad i\in\{0,\dots d-1\}.
\end{equation}
Here $p(\mathrm{event})$ simply refers to the overall probability of said event occurring under a certain fixed circumstance which in this case would be corresponding to receiving noisy outputs $y_1$ and $y_2$ which correspond to inputs of $x_1$ and $x_2$ through noise channel $W$. We now wish to know the log-likelihood ratios for the bit $u_1$ agnostic to what the value of $u_2$ is. It then follows that
\begin{equation}
    p(u_1 = i) = \sum_{j=0}^{d-1}p(x_1 = i\oplus\alpha j)p(x_2 = j),
\end{equation}
Which then gives us
\begin{equation}
\begin{aligned}
    \log(\frac{p(u_1=0)}{p(u_1 = i)}) = f_\alpha(\mathbf{L}_1,\mathbf{L}_2)^{[i]},
\end{aligned}
\end{equation}
where we define the function $f_\alpha(\mathbf{L}_1,\mathbf{L}_2)$ taking two $d$ length vectors of LLRs and outputs one $d$ length vector of LLRs as
\begin{equation}
\begin{aligned}
    f_\alpha(\mathbf{L}_1,\mathbf{L}_2)^{[i]} = &\log(\sum_{j=0}^{d-1}\exp(-(L^{[\alpha j]}_1 + L^{[j]}_2))) \\&- \log(\sum_{j=0}^{d-1}\exp(-(L^{[\alpha j\oplus i]}_1 + L^{[j]}_2))).
\end{aligned}
\end{equation}
which we further generalize to taking inputs of two lists of $M$ different $d$ length vectors of LLRS defined as $\mathbf{L}^M_1$ and $\mathbf{K}^M_1$ respectively to be
\begin{equation}
\begin{aligned}
    f_\alpha(\mathbf{L}^M_1,\mathbf{K}_1^M)^{[i]}_m = &\log(\sum_{j=0}^{d-1}\exp(-(L^{[\alpha j]}_m + K^{[j]}_m))) \\&- \log(\sum_{j=0}^{d-1}\exp(-(L^{[\alpha j\oplus i]}_m + K^{[j]}_m))),
\end{aligned}
\end{equation}
where $f_\alpha(\mathbf{L}^M_1,\mathbf{K}_1^M)$ is also a list of $M$ different $d$ length vectors and the index $m$ in the above expression goes from $1$ to $M$. Now if we assume we have the exact knowledge of what $u_1$ is, we can use it to determine the probabilities for $u_2$ once this knowledge is included in the event
\begin{equation}
    p(u_2 = i|u_1) = p(x_1 = u_1\oplus \alpha i)p(x_2 = i).
\end{equation}
In terms of the LLRs, we have,
\begin{equation}
    \log(\frac{p(u_2 = 0|u_1)}{p(u_2 = i|u_1)}) = g_\alpha(\mathbf{L}_1,\mathbf{L}_2,u_1)^{[i]},
\end{equation}
where we define the function $g_\alpha(\mathbf{L}_1,\mathbf{L}_2,u_1)$ taking two $d$ length vectors of LLRs and one dit ($u$) and outputs one $d$ length vector of LLRs as
\begin{equation}
    g_\alpha(\mathbf{L}_1,\mathbf{L}_2,u)^{[i]} = L_2^{[i]} - L_1^{[u]} + L_1^{[u\oplus \alpha i]}.
\end{equation}
which we also further to taking inputs of two lists of $M$ different $d$ length vectors of LLRS defined as $\mathbf{L}^M_1$ and $\mathbf{K}^M_1$ respectively along with a list of $M$ dits $u^M_1$ as
\begin{equation}
    g_\alpha(\mathbf{L}_1^M,\mathbf{K}_1^M,u_1^M)^{[i]}_m = K^{[i]}_m - L^{[u_m]}_m + L_m^{[u_m\oplus \alpha i]}
\end{equation}
Once we have the LLRs for any particular bit $\mathbf{L}$, we make the guess of what the dit value is based on the function
\begin{equation}
    \texttt{dit}(\mathbf{L}) = \begin{cases}
        0 & \min_i L^{[i]} \geq 0\\
        \mathrm{argmin}_i L^{[i]} &\mathrm{otherwise}
    \end{cases}
\end{equation}
The above function selects the most likely value for the dit. Assuming we have knowledge of the outputs and the nature of the noise channel, we can calculate the input LLRs vector as $\mathbf{L}^N_1$ where
\begin{equation}
    L_i^{[j]} = \log(\frac{W(y_j|0)}{W(y_j|j)}),\quad i\in\{1,\dots N\},j\in\{0,\dots d-1\}
\end{equation}
The beliefs of the left child of the root in the encoding tree can be calculated to simply be $f_\alpha(\mathbf{L}_i,\mathbf{L}_{i+N/2})$ where $i=1$ to $N/2$. Suppose we now have a successive cancellation decoder for $N/2$ bits, we simply use the $N/2$ values of $f_\alpha(\mathbf{L}_i,\mathbf{L}_{i+N/2})$ as input to those to get guesses $\hat{u}^{N/2}_1$ which is the guessed output of the left child to the root node. We now can pass beliefs $g_\alpha(\mathbf{L}_i,\mathbf{L}_{i+N/2},\hat{u}_i)$ where $i=1$ to $N/2$ to the right child of the root and use the SC decoder for $N/2$ bits with these inputs and get a guess of what the outputs of the right child must have been which will finally give us the guess of the decoded codeword $\hat{x}^{N}_1$. The recursive nature of this is captured by the steps $L$, $R$ and then $U$ in Fig \ref{fig:9}.
\begin{figure*}[ht]
    \centering
    \includegraphics[width=\textwidth]{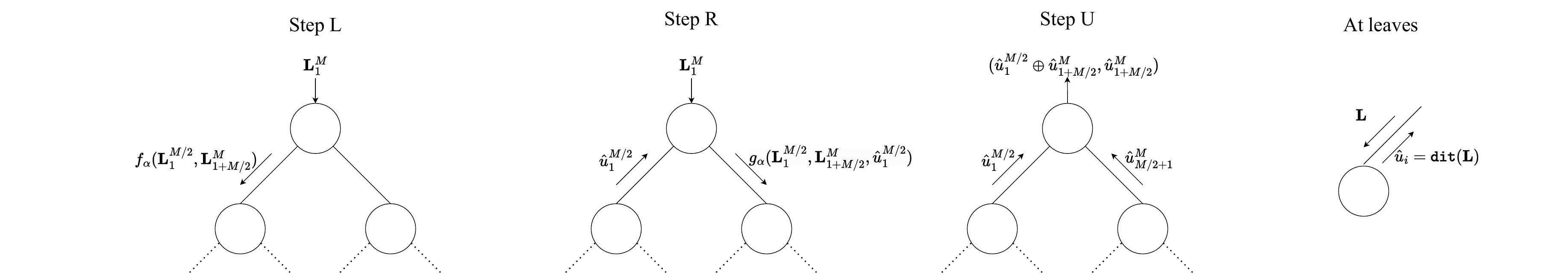}
    \caption{A unit for the decoding steps at each node. The steps proceed as step L (pass values to left child), step R (pass values to right child) and then step U (pass guessed dits to parent). If it reaches a leaf, the leaf returns a guess $\hat{u}_i$ using the function $\texttt{dit}$.}
    \label{fig:9}
\end{figure*}
The channel which dit $u_i$ experiences in this can be written as
\begin{equation}
\begin{aligned}
    &W^{(i)}((y^N_1,u^{i-1}_1)|u_i) \\& = \sum_{u^{N}_{i+1}\in \mathbb{F}^{N-i}_d}\frac{1}{d^{N-i}}\left[\prod_{j=1}^N W\left(y_j\bigg|(P_N(u^N_1))_j\right)\right]
\end{aligned}
\end{equation}
and the beliefs that reach the leaf nodes in the tree are simply the LLRs associated to $u_i$ with this channel.
\begin{theorem}[Classical polar coding]\label{thm:clPolar}
    For a discrete memory-less channel $W$ with input alphabet $\mathcal{X}$ with cardinality $d$ that is prime and an output alphabet $\mathcal{Y}$, there exists a sequence of polar code under successive cancellation decoding which can achieve coding rate $K/N$ arbitrarily close to 
    \begin{equation}
        I(W) = \sum_{x\in\mathcal{X}}\sum_{y\in\mathcal{Y}}\frac{1}{d}W(y|x)\log_d\left(\frac{W(y|x)}{\sum_{x'\in\mathcal{X}}\frac{1}{d}W(y|x')}\right)
    \end{equation}
    Here $W(y|x)$ is the probability of obtaining $y$ as a noisy output for input $x$.
\end{theorem}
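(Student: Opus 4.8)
The plan is to run the standard polar-coding argument of Ar\i kan \cite{Arikan_2010}, adapted to a prime input alphabet following \cite{sasoglu2009polarizationarbitrarydiscretememoryless,5513568,6357295}. The basic object is the one-step channel transformation: from a DMC $W$ one synthesizes the pair $(W^-,W^+)$ associated with the two leaves of Fig.~\ref{fig:7}, where (writing $x_1=u_1\oplus\alpha u_2$, $x_2=u_2$) $W^-(y_1y_2|u_1)=\tfrac1d\sum_{u_2}W(y_1|u_1\oplus\alpha u_2)W(y_2|u_2)$ and $W^+(y_1y_2u_1|u_2)=\tfrac1d W(y_1|u_1\oplus\alpha u_2)W(y_2|u_2)$. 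Iterating this map $n$ times produces exactly the $N=2^n$ synthesized channels $W^{(i)}$ defined at the end of Appendix~\ref{app:polarSCdecoder}. First I would record the two invariants of the transformation: the conservation law $I(W^-)+I(W^+)=2I(W)$, which is the chain rule $I(U_1;Y_1Y_2)+I(U_2;Y_1Y_2U_1)=I(U_1U_2;Y_1Y_2)=I(X_1X_2;Y_1Y_2)=2I(W)$ with $U_1,U_2$ i.i.d.\ uniform (using that $(U_1,U_2)\mapsto(X_1,X_2)$ is a bijection and $W$ is memoryless), and the extremality $I(W^-)\le I(W)\le I(W^+)$, which is data processing. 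Hence, choosing $B_1,B_2,\dots$ i.i.d.\ uniform on $\{-,+\}$, the process $I_n:=I(W^{(B_1\cdots B_n)})$ is a bounded martingale.

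By martingale convergence, $I_n\to I_\infty$ almost surely and in $L^1$, with $\mathbb{E}[I_\infty]=I(W)$. The heart of the matter is showing $I_\infty\in\{0,1\}$ a.s. Since $I_{n+1}$ takes the values $I(W^\pm)$ with equal probability given $\mathcal{F}_n$, convergence forces $|I(W^+)-I(W^-)|\to0$ along almost every path, so $I_\infty$ is concentrated on the fixed-point set of the transformation. The key lemma — and this is exactly where primality of $d$ is used — is that there is $\kappa(\delta)>0$ with $I(W)\in[\delta,1-\delta]\Rightarrow|I(W^+)-I(W^-)|\ge\kappa(\delta)$: the only fixed points are the perfect and the useless channel. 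For prime $d$ this holds for the kernel $\left(\begin{smallmatrix}1&\alpha\\0&1\end{smallmatrix}\right)$ with $\alpha\neq0$ and is the content of \cite{sasoglu2009polarizationarbitrarydiscretememoryless,5513568}; for composite $d$ it can fail (a channel may polarize only onto a proper subgroup of $\mathbb{F}_d$), which is why the theorem is stated for prime $d$. Granting the lemma, $I_\infty\in\{0,1\}$ a.s., so $\Pr[I_\infty=1]=\mathbb{E}[I_\infty]=I(W)$; equivalently, for any $\delta>0$ the fraction of indices $i$ with $I(W^{(i)})>1-\delta$ converges to $I(W)$ as $N\to\infty$.

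To turn ``close to $1$'' into an exponentially small error I would switch to the Bhattacharyya parameter $Z(W^{(i)})$. The one-step estimates $Z(W^+)\le Z(W)^2$ and $Z(W^-)\le d\,Z(W)$, combined with $Z(W)\to0\Leftrightarrow I(W)\to1$ (the inequalities relating $Z$ and $I$ quoted in Sec.~\ref{sec:classical_polar}), feed the Ar\i kan--Telatar stochastic-process argument of \cite{6357295}: on a ``$+$'' step $\log Z$ at least doubles in magnitude, on a ``$-$'' step it grows by at most an additive constant, and ``$+$'' steps have density $1/2$ by the law of large numbers, whence for every $\beta<1/2$ the fraction of indices with $Z(W^{(i)})\le 2^{-N^\beta}$ converges to $I(W)$. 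Taking $\mathcal{A}$ to be this set of indices gives $|\mathcal{A}|/N\to I(W)$, and the SC-decoder bound $P_e\le (d-1)\sum_{i\in\mathcal{A}}Z(W^{(i)})\le (d-1)N\,2^{-N^\beta}\to0$ from Sec.~\ref{sec:classical_polar} completes the proof.

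The main obstacle is the no-intermediate-fixed-point lemma in the second step: proving that a single transformation step strictly polarizes any channel whose symmetric capacity is bounded away from $0$ and $1$. This is a genuinely nontrivial algebraic statement about how the transformation acts on the confusability structure of $W$ over $\mathbb{F}_d$ — it is false for the natural kernel when $d$ is composite — and I would cite \cite{sasoglu2009polarizationarbitrarydiscretememoryless,5513568} rather than reprove it; everything else reduces to the elementary one-step identities for $I$ and $Z$ and a routine martingale and large-deviations analysis.
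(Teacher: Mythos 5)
Your proof is correct and takes the same route as the paper: the paper's entire proof of this theorem is the citation to \cite{sasoglu2009polarizationarbitrarydiscretememoryless}, and your argument is a faithful expansion of that reference's martingale-plus-$Z$-process analysis together with the rate-of-polarization exponent from \cite{6357295}. The only small imprecision is the one-step bound $Z(W^+)\le Z(W)^2$, which in the $q$-ary setting carries a $d$-dependent multiplicative constant; this does not affect the $\beta<1/2$ exponent, and since none of the martingale or Bhattacharyya steps rely on $|\mathcal{Y}|$ being finite, your sketch also implicitly covers the continuous-output case the paper remarks is needed here.
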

\begin{proof}
    See reference \cite{sasoglu2009polarizationarbitrarydiscretememoryless}
\end{proof}
Note that while the analysis in \cite{sasoglu2009polarizationarbitrarydiscretememoryless} is done for discrete memoryless channels with a finite size of output alphabet, all their analysis can be extended to channels with a continuous output alphabet.

\section{Polar codes for classical mixtures of Pauli channels}\label{app:CMP}
\begin{definition}[Quantum polar transform]
We define the polar transform in the qudit $Z$ basis as follows
\begin{equation}
    \hat{V} = \sum_{\mathbf{z}\in \mathbb{F}^N_d}|G_N\mathbf{z}\rangle\langle \mathbf{z}|.
\end{equation}
Where all operations are performed modulo $d$ to stay in the field of $\mathbb{F}^N_d$. Using the standard definitions of the qudit $X$ and $Z$ bases being defined as the bases which diagonalize the following operators
\begin{equation}
    \hat{X} = \sum_{j=0}^{d-1}|j+1\rangle\langle j|,\quad \hat{Z} = \sum_{j=0}^{d-1}\omega^j|j\rangle\langle j|,
\end{equation}
we can rewrite the quantum polar transform in the qudit $x$ basis states $\ket{\tilde{\mathbf{x}}}$ by noting
\begin{equation}
\begin{aligned}
    V &=\sum_{\mathbf{z}\in \mathbb{F}^N_d}|G_N\mathbf{z}\rangle\langle \mathbf{z}|\\
    &=\sum_{\mathbf{z},\mathbf{x},\mathbf{x'}\in \mathbb{F}^N_d}\frac{e^{\frac{2\pi i}{d}(\mathbf{x'}\cdot G_N\mathbf{z} - \mathbf{x}\cdot\mathbf{}{z})}}{d}|\mathbf{\tilde{x}'}\rangle\langle \mathbf{\tilde{x}}|\\
    &= \sum_{\mathbf{x}\in \mathbb{F}^N_d}|G_N^{-T}\tilde{\mathbf{x}}\rangle\langle \tilde{\mathbf{x}}|.
\end{aligned}
\end{equation}
\end{definition}
In the last step, we have used the property that $\frac{e^{\frac{2\pi i}{d}(\mathbf{x'}\cdot G_N\mathbf{z} - \mathbf{x}\cdot\mathbf{z})}}{d}$ will sum to 1 if and only if $\mathbf{x'}\cdot G_N\mathbf{z} - \mathbf{x}\cdot\mathbf{z} = 0\mod d$ otherwise it sums to zero. Hence giving the condition $G_N^T\mathbf{x}'=\mathbf{x}$. This is in effect the same as saying that all CSUM gates in the $\mathbf{z}$ basis become into CSUM$^{-1}$ gates in the $\mathbf{x}$ basis with the control and target swapped. With the above definition, we can note that the polar transform acts simultaneously in the $X$ and $Z$ bases with the order of qudits reversed from one to the other and a different kernel since
\begin{equation}
    G_N^{-T} = \begin{pmatrix}
        1& 0\\
        -\alpha & 1
    \end{pmatrix}^{\otimes n}.
\end{equation}
However, as is pointed out in \cite{5513568}, this kernel also polarizes the channels in the same way.
\begin{theorem}[Quantum polar code achievable rate for Pauli noise]\label{thm:qPdeco}
    For a qudit ($d$ levels which is prime) Pauli noise described by 
    \begin{equation}
        \mathcal{N}(\cdot) = \sum_{u,v}p_{u,v} \hat{X}^u \hat{Z}^v(\cdot)(\hat{X}^u \hat{Z}^v)^\dagger,
    \end{equation}
    there exists a sequence of quantum polar codes using the SC decoder that asymptotically achieve the rate (in qudit per channel use)
    \begin{equation}
        R = I(W_A) + I(W_P) - 1,
    \end{equation}
    where 
    \begin{equation}
        W_A(z+u|z) = \sum_{v} p_{u,v} = p_u,\quad W_P((x+v,u)|x) = p_{u,v}
    \end{equation}
\end{theorem}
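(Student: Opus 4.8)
\emph{Proof plan.} The plan is to construct the qudit version of the CSS-like, entanglement-assisted quantum polar code of Dupuis, Hayden and Li~\cite{PhysRevLett.109.050504} from two classical polar codes built on $W_A$ and $W_P$, using the quantum polar transform $\hat V$ of Eq.~\eqref{eq:polarV}. The key structural fact is that $\hat V$ is a circuit of CSUM gates only, so conjugation by $\hat V^{\dagger}$ sends $\hat X$-type Pauli strings to $\hat X$-type strings and $\hat Z$-type strings to $\hat Z$-type strings; explicitly $\hat V^{\dagger}\hat X^{\mathbf u}\hat Z^{\mathbf v}\hat V = \hat X^{G_N^{-1}\mathbf u}\hat Z^{G_N^{T}\mathbf v}$. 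Thus, once the receiver applies $\hat V^{\dagger}$, the bit part and the phase part of the residual error are independent Pauli strings, and the only remaining coupling is that the per-mode phase statistics $p_{u,v}$ depend on the realized bit flip $u$ — which is precisely why the phase DMC is defined with $u$ appended to its output, $W_P((x+v,u)\mid x)=p_{u,v}$. This reduces the quantum coding task to two classical polar-coding tasks that are decoded sequentially: first decode the $\hat X$-errors through $W_A$ to recover $\mathbf u$, then use $\mathbf u$ as side information to decode the $\hat Z$-errors through $W_P$.

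Next I would apply classical polar coding (Theorem~\ref{thm:clPolar}, which holds for prime $d$ and for continuous output alphabets) to each of $W_A$ and $W_P$, using the kernel $G_N$ for the $Z$ basis and $G_N^{-T}$ for the $X$ basis; both are admissible triangular kernels, so both synthesized-channel families polarize~\cite{5513568}, and for any $\beta<1/2$ the fraction of indices with $Z(W_A^{(i)})\le 2^{-N^{\beta}}$ (resp.\ $Z(W_P^{(i)})\le 2^{-N^{\beta}}$) tends to $I(W_A)$ (resp.\ $I(W_P)$) as $N=2^{n}\to\infty$~\cite{6357295}. For fixed large $N$ and threshold $\delta=2^{-N^{\beta}}$, I would partition $\{1,\dots,N\}$ into $\mathcal I,\mathcal A,\mathcal P,\mathcal E$ exactly as in Sec.~\ref{sec:css_polar} (the reversal of index ordering between the two bases affects only the labeling), and read off $|\mathcal I|-|\mathcal E| = |\mathcal I\cup\mathcal A| + |\mathcal I\cup\mathcal P| - N$; since $|\mathcal I\cup\mathcal A|/N\to I(W_P)$ and $|\mathcal I\cup\mathcal P|/N\to I(W_A)$, the net rate obeys $(|\mathcal I|-|\mathcal E|)/N\to I(W_A)+I(W_P)-1$, the claimed $R$. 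The code then places the $|\mathcal I|$ logical qudits on the sites in $\mathcal I$, freezes $\mathcal A$ in a fixed $\hat Z$-eigenstate and $\mathcal P$ in a fixed $\hat X$-eigenstate, and supplies $\mathcal E$ as halves of Bell pairs pre-shared with the receiver, which pins those sites in both bases at once; one applies $\hat V$, sends the $N$ modes through $\mathcal N^{\otimes N}$, and decodes by applying $\hat V^{\dagger}$, measuring $\mathcal A\cup\mathcal E$ in $\hat Z$ and $\mathcal P\cup\mathcal E$ in $\hat X$ (using the retained Bell partners for $\mathcal E$), running the two classical SC decoders in sequence as above, and applying the implied Pauli correction on the $\mathcal I$ qudits.

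The error analysis would then union-bound the entanglement fidelity loss by the failure probabilities of the two classical SC decoders, each at most $(d-1)\sum_i Z(W^{(i)})$ over the relevant index set (the analogue of Eq.~\eqref{eq:polarineq}); both are $\le N\,2^{-N^{\beta}}\to 0$ by~\cite{6357295}, so $F_e\to 1$ while $|\mathcal I|/N\to R$ along the sequence. Primality of $d$ is used only in the polarization step — for the sharp two-point limit of $I(W^{(i)})$ and for arbitrary triangular kernels to polarize~\cite{sasoglu2009polarizationarbitrarydiscretememoryless}. I expect the main work to be (i) justifying that, despite the statistical correlation between $W_A$ and $W_P$, the composite decoder's failure probability genuinely decomposes as the union bound above — this hinges on the sequential structure, since the DMC $W_P$ is well defined only after $\mathbf u$ is fixed by the first decoder — and (ii) the entanglement-assistance bookkeeping: $|\mathcal E|$ is generically $\Theta(N)$, so as written the protocol is entanglement-assisted and $R$ is the net rate $(|\mathcal I|-|\mathcal E|)/N$; to read $R$ as an ordinary achievable rate one appeals to the standard blocking/catalysis argument for quantum polar codes~\cite{goswami2021quantum}, observing that $R$ coincides with the hashing bound, which is achievable without assistance by random CSS/stabilizer codes~\cite{PhysRevA.54.1098}.
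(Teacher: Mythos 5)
Your proposal is correct and follows essentially the same route as the paper's proof: the same quantum polar transform $\hat V$, the same four-way partition $\mathcal I,\mathcal A,\mathcal P,\mathcal E$ with Bell-pair assistance on $\mathcal E$, the same sequential amplitude-then-phase classical SC decoding, the same union bound on failure probability, and the same net-rate bookkeeping $(|\mathcal I|-|\mathcal E|)/N\to I(W_A)+I(W_P)-1$. The one substantive remark you add — that the scheme is entanglement assisted with $|\mathcal E|=\Theta(N)$ and one reads $R$ as the net rate — is consistent with how the paper presents the result (the paper says ``net coding rate'' and likewise uses pre-shared Bell pairs on $\mathcal E$), so there is no gap; the only minor slip is the attribution (Renes, Dupuis, and Renner rather than Dupuis, Hayden, and Li), which does not affect the argument.
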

\begin{proof}
    We define for our convenience the following channels
    \begin{equation}
    \begin{aligned}
        W^c_A(z+u|z) &= W_A(z+d-u|z),\\ W^c_P((x+v,u)|x) &= W_P((x+d-v,u)|x).
        \end{aligned}
    \end{equation}
    Note that these channels are DMCs, and so will polarize for prime $d$, and represent a noise channel that acts on the amplitude ($z$) and phase ($x$) bases respectively hence some of the qudits will be good for amplitude and some will be good for phase. Let us consider two parties $A$ and $B$. $A$ has $N$ qudit bell pairs and sends $N$ of these halves through a noise channel to $B$. We can divide $N$ into 4 sets based on the channels $W^c_A$ and $W^c_P$
    \begin{itemize}
        \item $\mathcal{I}$ is the set of qudits which are good for both amplitude and phase
        \item $\mathcal{A}$ is the set of qudits bad for amplitude but good for phase
        \item $\mathcal{P}$ is the set of qudits bad for phase but good for amplitude
        \item $\mathcal{E}$ is the set of qudits bad for both amplitude and phase
    \end{itemize}
    We can define projectors $\Pi_{\mathcal{A}}$ which freezes amplitude values for set $\mathcal{A}$ and $\Pi_{\mathcal{P}}$ which freezes phase values for set $\mathcal{P}$. If $B$ decides to apply the polar transform on their qudits while the frozen values are decided beforehand by $A$ and $B$, we end up in the state
    \begin{equation}
        \ket{\psi_0} = \frac{1}{\sqrt{d^{N - |\mathcal{A}\cup\mathcal{P}|}}}\sum_{\mathbf{z}}\Pi_{\mathcal{A}}\Pi_{\mathcal{P}}\ket{\mathbf{z}}_A\otimes\ket{G_N\mathbf{z}}_B.
    \end{equation}
    Now $B$ suffers noise on their qudits (perhaps after some communication channel) which is described by the action of $\mathcal{N}$ on each qudit. We now consider an additional environment mode which tracks the error that has occurred hence ending up in the state
    \begin{equation}
        \ket{\psi_1} \propto \sum_{\mathbf{z},\mathbf{u},\mathbf{v}}\sqrt{p_{\mathbf{u},\mathbf{v}}}\Pi_{\mathcal{A}}\Pi_{\mathcal{P}}\ket{\mathbf{z}}_A\otimes X^{\mathbf{u}}Z^{\mathbf{v}}\ket{G_N\mathbf{z}}_B\otimes\ket{\mathbf{u},\mathbf{v}}_E.
    \end{equation}
    Applying the inverse of the polar transform defined by 
    \begin{equation}
        V^\dagger = \sum_{\mathbf{z}}|G_N^{-1}\mathbf{z}\rangle\langle\mathbf{z}| = \sum_{\mathbf{x}}|G_N^{-T}\tilde{\mathbf{x}}\rangle\langle\tilde{\mathbf{x}}|.
    \end{equation}
    We note the following
    \begin{gather}
        \hat{V}^\dagger \hat{X}^{\mathbf{u}}\hat{V} = \hat{X}^{\mathbf{u}'},\quad G_N\mathbf{u'} = \mathbf{u},\\
        \hat{V}^\dagger \hat{Z}^{\mathbf{v}}\hat{V} = \hat{Z}^{\mathbf{v}'},\quad G_N^{-T}\mathbf{v'} = \mathbf{v},
    \end{gather}
    which means that acting $\hat{V}^\dagger$ on $B$ gives
    \begin{equation}
        \ket{\psi_2} \propto\sum_{\mathbf{z},\mathbf{u},\mathbf{v}}\sqrt{p_{\mathbf{u},\mathbf{v}}}\Pi_{\mathcal{A}}\Pi_{\mathcal{P}}\ket{\mathbf{z}}_A\otimes \hat{X}^{\mathbf{u}'}\hat{Z}^{\mathbf{v}'}\ket{\mathbf{z}}_B\otimes\ket{\mathbf{u},\mathbf{v}}_E
    \end{equation}
    By measuring qudits in $\mathcal{A}\cup\mathcal{E}$ in the amplitude basis, we can obtain the dit values of $\mathbf{u}'$ corresponding to those. Further we know that $\mathbf{u}$ is nothing but the noisy output of the input $\mathbf{0}$ to the channel $W_A$ which is equivalent to treating $\mathbf{0}$ as the noisy output to the input $\mathbf{u}$ to the channel $W_A^c$ since $W_A^c(0|u) = W_A(u|0)$. However $\mathbf{u}$ is a codeword for the n-dit classical polar code with dits $|\mathcal{A}\cup\mathcal{E}|$ frozen to the values obtained on the measurement of those qudits in the amplitude basis. Hence the task of finding $\mathbf{u}'$ is simply the decoding of the classical polar code with noisy output of $\mathbf{0}$ from the channel $W_A^c$ which from theorem \ref{thm:clPolar} will achieve rate of $I(W_A^c)-\epsilon_1$ for $\epsilon_1>0$. Note that trivially $I(W_A^c) = I(W_A)$ and similarly $I(W_P^c) = I(W_P)$ since they only differ by a permutation. 
    
    After this decoding is complete we obtain some estimate of the amplitude error $\hat{\mathbf{u}}$ which on correcting, we end up in the state
    \begin{equation}
        \ket{\psi_3} \propto \sum_{\mathbf{x},\mathbf{v}}\sqrt{p_{\mathbf{v}|\hat{\mathbf{u}}}}\Pi_{\mathcal{A}}\Pi_{\mathcal{P}}\ket{\tilde{\mathbf{x}}}_A\otimes Z^{\mathbf{v}'}\ket{\tilde{\mathbf{x}}}_B\otimes\ket{\mathbf{\hat{u}},\mathbf{v}}_E + \mathcal{O}(\sqrt{p_A}),
    \end{equation}
    where $\sqrt{p_A}$ is the probability that the classical decoder for noise channel $W_A^c$ fails and as a result will contribute a small mixture of errors in the output state as a $\mathcal{O}(\sqrt{p_A})$ term. Similarly on measuring the qudits in set $\mathcal{P}\cup\mathcal{E}$ in the phase basis (note that we are able to do simultaneous phase basis and amplitude basis measurements for $\mathcal{E}$ since it consists of shared bell pairs) we obtain the dit values of $\mathbf{v}'$ corresponding to those. Similarly we can note that $(\mathbf{0},\hat{\mathbf{u}})$ is the noisy output of $\mathbf{v}$ (assuming the decoding of $\mathbf{u}$ is successful) through the noise channel $W_P^c$. This allows us to use the classical decoder again and this polar code achieves the rate of $I(W_P)-\epsilon_2$ for $\epsilon_2>0$. On decoding we will end up in the state
    \begin{equation}
    \begin{aligned}
        \ket{\psi_3} \propto\sum_{\mathbf{x},\mathbf{v}}\Pi_{\mathcal{A}}&\Pi_{\mathcal{P}}\ket{{\mathbf{z}}}_A\otimes \ket{{\mathbf{z}}}_B\otimes\ket{\hat{\mathbf{u}},\hat{\mathbf{v}}}_E\\ &+ \mathcal{O}(\sqrt{p_A}) + \mathcal{O}(\sqrt{p_P}).
    \end{aligned}
    \end{equation}
    We can make $p_A$ and $p_P$ arbitrarily small for any $\epsilon_1,\epsilon_2>0$. 
    
    Since all the sets $\mathcal{I},\mathcal{A},\mathcal{P}$ and $\mathcal{E}$ are disjoint, we have the following hold 
    \begin{gather}
        |\mathcal{A}| + |\mathcal{E}| = n(1 - I(W_A) + \epsilon_1),\\
        |\mathcal{P}| + |\mathcal{E}| = n(1 - I(W_P) + \epsilon_2),\\
        |\mathcal{I}| + |\mathcal{A}| + |\mathcal{P}| + |\mathcal{E}| = n.
    \end{gather}
    We can encode information in set $\mathcal{I}$ but also there are pre-shared entangled pairs which are $\mathcal{E}$. Hence the effective rate is given by
    \begin{equation}
        R = \frac{|\mathcal{I}|-|\mathcal{E}|}{n} = I(W_A) + I(W_P) - 1 - (\epsilon_1+\epsilon_2),
    \end{equation}
    where $\epsilon_1$ and $\epsilon_2$ can be made arbitrarily small as $n$ approaches infinity hence giving an asymptotic rate of $I(W_A) + I(W_P) - 1$. 
\end{proof}
\begin{corollary}
    The quantum polar code achieves the Hashing bound of a Pauli noise channel.
\end{corollary}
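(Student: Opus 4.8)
The plan is to deduce the corollary from Theorem~\ref{thm:qPdeco} by a short entropy computation. Theorem~\ref{thm:qPdeco} gives an achievable rate $R = I(W_A) + I(W_P) - 1$ (in qudit per channel use) for the qudit Pauli channel with weights $p_{u,v}$, where $W_A(z+u\mid z) = p_u = \sum_v p_{u,v}$ and $W_P\big((x+v,u)\mid x\big) = p_{u,v}$. The Hashing bound recalled earlier is $I_c(\mathbb{I}/d,\mathcal{N}_{\mathrm{pauli}}) = \log_2 d - H_2(p_{u,v})$, i.e.\ $1 - H_d(p_{u,v})$ in qudit per channel use, so it suffices to verify $I(W_A) + I(W_P) = 2 - H_d(p_{u,v})$.

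First I would evaluate $I(W_A)$. Since $W_A$ is an additive ``dit-flip'' channel over $\mathbb{F}_d$ with shift law $\{p_u\}$, a uniform input induces a uniform output, so directly from the definition of the symmetric capacity one gets $I(W_A) = 1 - H_d(\{p_u\})$, where $H_d(\{p_u\})$ is the $d$-ary entropy of the marginal amplitude error $u$.

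Next I would evaluate $I(W_P)$. Here the output is the pair $(X+V, U)$, with $(U,V)\sim p_{u,v}$ drawn independently of the uniform input $X$, and its symmetric capacity is $I\big(X;(X+V,U)\big)$. By the chain rule this equals $I(X;U) + I(X; X+V \mid U)$; the first term vanishes because $X\perp U$, and conditioning on $U=u$ makes $X+V$ uniform with $I(X;X+V\mid U=u) = 1 - H_d(p_{\cdot\mid u})$, so averaging over $u$ yields $I(W_P) = 1 - H_d(V\mid U)$. Adding the two pieces and applying the entropy chain rule $H_d(\{p_u\}) + H_d(V\mid U) = H_d(p_{u,v})$ gives $I(W_A) + I(W_P) - 1 = 1 - H_d(p_{u,v})$, which is exactly the Hashing bound; invoking the achievability statement of Theorem~\ref{thm:qPdeco} then completes the argument.

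I do not anticipate a genuine obstacle: the proof is the two symmetric-capacity computations above together with the chain rule. The only points requiring care are the bookkeeping of units (qudit versus bit rates, since $H_2 = \log_2 d \cdot H_d$) and the observation that the output register of $W_P$ already carries the amplitude error $u$ as side information, which is precisely what collapses $I\big(X;(X+V,U)\big)$ to $1 - H_d(V\mid U)$ rather than to something double-counting the $u$-dependence, and hence what makes $I(W_A)+I(W_P)$ telescope cleanly to $2 - H_d(p_{u,v})$.
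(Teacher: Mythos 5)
Your proof is correct and follows essentially the same route as the paper: both verify the identity $I(W_A)+I(W_P)-1 = 1-H_d(p_{u,v})$ and then invoke the achievability theorem. The only difference is stylistic — the paper substitutes directly into the symmetric-capacity sum and manipulates logarithms, whereas you phrase the same computation via the mutual-information chain rule and the observation that the output of $W_P$ carries $U$ as side information; the numerical content is identical.
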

\begin{proof}
    The hashing bound for Pauli noise is defined as $1-H(\mathbf{p})$ where $\mathbf{p}$ is the probability vector associated to the Pauli noise $p_{u,v}$. We can note the following
    \begin{gather}
        I(W_P) = \sum_{u,v}p_{u,v}\log_d\left(\frac{p_{u,v}}{\frac{p_{u}}{d}}\right) = 1 - H(p_{u,v}) + H(p_{u}),\\
        I(W_A) = \sum_{u}p_u\log_d\left(\frac{p_{u}}{\frac{1}{d}}\right) = 1 - H(p_{u}).
    \end{gather}
    Hence giving
    \begin{equation}
        I(W_P) + I(W_A) - 1 = 1 - H(p_{u,v}) = 1 - H(\mathbf{p}),
    \end{equation}
    which shows that quantum polar codes achieve the Hashing bound.
\end{proof}
Now we can further generalize this idea to show that we can achieve the coherent information of a classical mixture of single qudit Pauli channels.
\begin{theorem}[Quantum polar code achievable rate for CMP]\label{thm:qPdeco}
    For a qudit ($d$ levels which is prime) Pauli noise described by 
    \begin{equation}
        \mathcal{N}(\cdot) = \sum_{u,v}p_{u,v,\mathbf{s}} \hat{X}^u \hat{Z}^v(\cdot)(\hat{X}^u \hat{Z}^v)^\dagger\otimes|\mathbf{s}\rangle\langle\mathbf{s}|
    \end{equation}
    there exists a sequence of quantum polar codes using the SC decoder that asymptotically achieve the rate (in per qudit)
    \begin{equation}
        R = I(W_A) + I(W_P) - 1
    \end{equation}
    where 
    \begin{equation}
    \begin{aligned}
        W_A((z+u,\mathbf{s})|z) &= \sum_{v} p_{u,v,\mathbf{s}} = p_u,\\ W_P((x+v,u,\mathbf{s})|x) &= p_{u,v,\mathbf{s}}
    \end{aligned}
    \end{equation}
\end{theorem}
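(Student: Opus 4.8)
The plan is to mirror the entanglement-assisted CSS-like construction used for the qudit Pauli channel, now treating the auxiliary register $\mathbf{s}$ as classical side information that the channel hands to the receiver. The essential observation is that $\mathcal{N}$ acts independently on each of the $N$ qudits and each use produces its own value of $\mathbf{s}$; hence $W_A$ and $W_P$ are bona fide discrete memoryless channels whose output alphabets have merely been enlarged to $\mathbb{F}_d\times\mathcal{S}$ and $\mathbb{F}_d^2\times\mathcal{S}$ respectively. Since Theorem \ref{thm:clPolar} and the underlying polarization results hold for arbitrary (including continuous) output alphabets, both $W_A$ and $W_P$ polarize for prime $d$, and we may run the two matched classical polar codes exactly as in the Pauli case.

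Concretely, I would repeat the dilation picture: $A$ holds $N$ qudit Bell pairs and sends the $B$-halves through $\mathcal{N}$, leaving the joint state
\begin{equation}
  \ket{\psi_1}\propto\sum_{\mathbf{z},\mathbf{u},\mathbf{v},\mathbf{s}}\sqrt{p_{\mathbf{u},\mathbf{v},\mathbf{s}}}\,\Pi_{\mathcal{A}}\Pi_{\mathcal{P}}\ket{\mathbf{z}}_A\otimes\hat{X}^{\mathbf{u}}\hat{Z}^{\mathbf{v}}\ket{G_N\mathbf{z}}_B\otimes\ket{\mathbf{s}}_S\otimes\ket{\mathbf{u},\mathbf{v}}_E,
\end{equation}
which is a valid Stinespring dilation of $\mathcal{N}$ because $\hat{X}^{\mathbf{u}}\hat{Z}^{\mathbf{v}}$ is unitary and $\sum_{\mathbf{u},\mathbf{v},\mathbf{s}}p_{\mathbf{u},\mathbf{v},\mathbf{s}}=1$; crucially the register $S$ travels with $B$, never the environment, so $\mathbf{s}$ is freely available to the decoder. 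After $B$ applies $\hat{V}^\dagger$ and measures the qudits in $\mathcal{A}\cup\mathcal{E}$ in the amplitude basis, recovering the amplitude error is exactly decoding the classical polar code for the reversed channel $W_A^c$ fed the true channel outputs (which now carry $\mathbf{s}$); the successive cancellation decoder succeeds with vanishing error at rate $I(W_A^c)-\epsilon_1=I(W_A)-\epsilon_1$, since input relabeling does not change the symmetric capacity. Feeding the recovered amplitude error into the phase decoder and measuring the qudits in $\mathcal{P}\cup\mathcal{E}$ in the phase basis, recovering the phase error is decoding the classical polar code for $W_P^c$, with outputs that now carry both the recovered amplitude dit and $\mathbf{s}$, achieving rate $I(W_P)-\epsilon_2$. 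The only substantive change from the Pauli argument is bookkeeping: every log-likelihood ratio used by the two decoders, $L^{[j]}=\log\big(W(y|0)/W(y|j)\big)$, is now evaluated on outputs $y$ that include the syndrome component, which is exactly what the SC decoder described in Appendix \ref{app:polarSCdecoder} does once the syndrome is appended to each leaf's channel output.

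The index counting then proceeds verbatim: disjointness of $\mathcal{I},\mathcal{A},\mathcal{P},\mathcal{E}$ together with $|\mathcal{A}|+|\mathcal{E}|=N\big(1-I(W_A)+\epsilon_1\big)$ and $|\mathcal{P}|+|\mathcal{E}|=N\big(1-I(W_P)+\epsilon_2\big)$ gives net rate $\frac{|\mathcal{I}|-|\mathcal{E}|}{N}=I(W_A)+I(W_P)-1-(\epsilon_1+\epsilon_2)\to I(W_A)+I(W_P)-1$ as $N\to\infty$. As a consistency check one can verify $I(W_A)=1-H_d(u|\mathbf{s})$ and $I(W_P)=1-H_d(v|u,\mathbf{s})$, so the achieved rate equals $1-H_d(u,v|\mathbf{s})$, which matches $I_c(\mathbb{I}/d,\mathcal{N})$ computed in the main text. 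The main obstacle here is not a deep one: it is to confirm carefully that appending the classical syndrome to the channel output preserves the memoryless property and that $\mathbf{s}$ legitimately sits on the receiver's side of a Stinespring dilation, so that the entire Pauli-case argument and the classical polarization machinery carry over unchanged with only the output alphabets enlarged.
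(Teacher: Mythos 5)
Your proposal is correct and follows essentially the same route as the paper: the paper's own proof of this theorem is a one-paragraph reduction to the preceding Pauli-noise theorem, observing that $W_A$ and $W_P$ remain discrete memoryless channels with prime input alphabet (and hence polarize), and then invoking the same CSS-like entanglement-assisted construction. You have simply inlined that earlier argument with the syndrome register carried along, which is exactly what the paper intends; the explicit remark that $\mathbf{s}$ sits on the receiver's side of the Stinespring dilation rather than the environment is a helpful clarification that the paper leaves implicit in its choice to put $\mathbf{s}$ in the output alphabet of $W_A$ and $W_P$.
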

\begin{proof}
    It is sufficient to show that there are classical polar codes which can achieve the rate $I(W_A)$ and $I(W_P)$ respectively. Note that these channels are simply memoryless channels with a finite sized input alphabet which is prime. It is known that the symmetric coherent information is achievable for these classical channels from \cite{sasoglu2009polarizationarbitrarydiscretememoryless}. Hence the rate $R$ can be achieved by constructing a CSS code with entanglement assistance.
\end{proof}
We believe that this can also be generalized to classical mixtures of more general multi-qudit Pauli noise, but would require a different kind of polar transforming kernel based on the Clifford channel combining principle introduced in \cite{9366784} and leave proving this for future work. 

\section{Lattice theta functions}
In this appendix we list a few results related to lattice theta functions and introduce relevant notation for the same. We define the following Jacobi-theta functions
\begin{equation}
    \theta_1(u,q) =2q^{1/4}\sum_{n=0}^{\infty}(-1)^nq^{n(n+1)}\sin((2n+1)u)
    \end{equation}
    \begin{equation}
        \theta_2(u,q) =2q^{1/4}\sum_{n=0}^{\infty}(-1)^nq^{n(n+1)}\cos((2n+1)u)\label{eq:theta2}
    \end{equation}
    \begin{equation}
    \theta_3(u,q) =1+2\sum_{n=1}^{\infty}q^{n^2}\cos(2nu)\label{eq:theta3} 
\end{equation}
which we will be making use of in the following appendices. When $\theta_3(u,q)$ is evaluated at $u=0$, we represent it as $\theta_3(q) = \theta_3(0,q)$. we define the lattice theta function for a lattice $\mathcal{L}\subseteq\mathbb{R}^{2N}$ as
\begin{equation}
    \Theta_{\mathcal{L}}(q)=\sum_{\pmb{x}\in\mathcal{L}}q^{|\pmb{x}|^2},
\end{equation}
from which it follows that $\Theta_\mathbb{Z^n}(q) = \theta_3(q)^n$. The theta function of the euclidean dual of $\mathcal{L}$, defined as $\mathcal{L}^{*}=\{\pmb{y}\in\mathbb{R}^{2N}|\pmb{y}\cdot\pmb{x}\in\mathbb{Z}\}$ is given by
\begin{equation}
    \Theta_{\mathcal{L}^{*}}(e^{i\pi z}) = \det(\mathcal{L})(i/z)^{N}\Theta_{\mathcal{L}}(e^{-i\pi/z}),\label{eq:dualtheta}
\end{equation}
where $z\in\mathbb{C}$. This relation is a special case of the Poisson summation formula shown in \cite{Elkies_2019} and was also noted in \cite{Conrad2022gottesmankitaev}. We consider $\mathcal{L}$ to be a symplectically integral lattice with the corresponding symplectic dual $\mathcal{L}^\perp$ (as defined in Eq. \eqref{eq:sympdual}). As noted in \cite{Conrad2022gottesmankitaev}, $\mathcal{L}^\perp$ is related to $\mathcal{L}^{*}$ by a rotational transform and hence they have the same lattice theta functions $\Theta_{\mathcal{L}^\perp}(q) = \Theta_{\mathcal{L}^{*}}(q)$. We note that the relation in Eq. \eqref{eq:dualtheta} also allows us to write
\begin{equation}
    \theta_3(e^{i\pi z}) = (i/z)\theta_3(e^{-i\pi/z}).
\end{equation}

\section{Pauli noise from closest lattice point GKP decoding on displacement noise}\label{app:CLPnoise}
Here we adapt the notation introduced in \cite{Conrad2022gottesmankitaev}. For convenience of the reader, we list some of the preliminary notation used in \cite{Conrad2022gottesmankitaev}, which we will also use. We have $N$ bosonic modes with quadratures $\hat{\pmb{q}}$ and $\hat{\pmb{p}}$ using which we define $\hat{\pmb{x}} = \begin{pmatrix}
    \hat{\pmb{q}}\\\hat{\pmb{p}}
\end{pmatrix}$ to represent a $2n$ dimensional phase space. The commutation relations can be written as
\begin{equation}
    [\hat{x}_k,\hat{x}_l] = iJ_{k,l},\quad J = \begin{pmatrix}
        0 & \mathbb{I}_n\\-\mathbb{I}_n & 0
    \end{pmatrix}.
\end{equation}
The matrix $J$ is the symplectic form and so we can define a displacement operator for a phase-space displacement $\pmb{\xi}\in\mathbb{R}^{2n}$ as
\begin{equation}
    \hat{D}(\pmb{\xi}) = \exp\left\{-i\sqrt{2\pi}\pmb{\xi}^T J \hat{\pmb{x}}\right\},
\end{equation}
which has the following properties
\begin{gather}
    \hat{D}^{\dagger}(\pmb{\xi})\hat{\pmb{x}}\hat{D}(\pmb{\xi}) = \hat{\pmb{x}} + \sqrt{2\pi}\pmb{\xi},\\
    \hat{D}(\pmb{\xi}_1)\hat{D}(\pmb{\xi}_2) = e^{-2\pi\pmb{\xi}_1^T J\pmb{\xi}_2}\hat{D}(\pmb{\xi}_2)\hat{D}(\pmb{\xi}_1),\\
    \hat{D}(\pmb{\xi}_1)\hat{D}(\pmb{\xi}_2) = e^{-\pi\pmb{\xi}_1^T J\pmb{\xi}_2}\hat{D}(\pmb{\xi}_1+\pmb{\xi}_2).
\end{gather}
For a GKP lattice over $n$ modes, we need exactly $2n$ displacements to generate the stabilizer group given by
\begin{equation}
    \mathcal{S} = \langle \hat{D}(\pmb{\xi}_1),\dots,\hat{D}(\pmb{\xi}_{2n})\rangle.
\end{equation}
These displacements must be linearly independent and must commute with each other, giving the condition $\pmb{\xi}_j^T J\pmb{\xi}_k\in\mathbb{Z}$ for all $j,k\in\{1,\dots 2n\}$. We now define the matrix
\begin{equation}
    M = \begin{pmatrix}
        \pmb{\xi}_1^T\\
        \pmb{\xi}_2^T\\
        \vdots\\
        \pmb{\xi}_{2n}^T
    \end{pmatrix},
\end{equation}
using which we define the symplectically integral lattice
\begin{equation}
    \mathcal{L} = \{\pmb{\xi}\in\mathbb{R}^{2n}\vert \pmb{\xi} = M^T \mathbf{a},\mathbf{a}\in\mathbb{Z}^{2n}\}.
\end{equation}
A GKP code defined using the following lattice would have its logical operations corresponding to displacements that are present in the dual lattice which would be defined by
\begin{equation}
    \mathcal{L}^\perp = \left\{\pmb{\xi}^\perp\in\mathbb{R}^{2n}\vert \left(\pmb{\xi}^\perp\right)^TJ\pmb{\xi}\in\mathbb{Z}\text{, }\forall\pmb{\xi}\in\mathcal{L}\right\}.\label{eq:sympdual}
\end{equation}
All the lattice points in $\mathcal{L}^{\perp}$ must necessarily be of form $\pmb{\xi}^{\perp} = (MJ)^{-1}\mathbf{b}$ where $\mathbf{b}\in\mathbb{Z}^{2n}$. The lattice $\mathcal{L}$ is self-dual which is to say that $\mathcal{L}\subseteq\mathcal{L}^\perp$.

Since each stabilizer is a displacement, one can measure the eigenvalues of these displacements to obtain stabilizers. If we consider a state $\ket{\psi}$ which is stabilized by $\mathcal{S}$, we have
\begin{equation}
    D(\pmb{\xi}_i)D(\mathbf{e})\ket{\psi} = e^{2\pi\pmb{\xi}_i^TJ\mathbf{e}}D(\mathbf{e})\ket{\psi}.
\end{equation}
which means we can measure the values $s_i = \pmb{\xi}_i^TJ\mathbf{e}$ modulo 1 without destroying the codeword information. These will be our syndrome measurements $\mathbf{s}$ each being modulo 1 and in the interval $\left[-\frac{1}{2},\frac{1}{2}\right)$ given by
\begin{equation}
    \mathbf{s} = (MJ \mathbf{e})\mod 1 = (MJ\mathbf{e}) - \lfloor MJ\mathbf{e}\rceil,
\end{equation}
for a particular value of $\mathbf{e}$ which we wish to correct. The initial guess for the correction is given by
\begin{equation}
    \pmb{\eta} = (MJ)^{-1}\mathbf{s},
\end{equation}
which would not account for possible logical errors. Using the closest-point decoding, one would find the closest lattice point in the dual lattice to this guessed correction. 

This gives the effective correction to be
\begin{equation}
    \bar{\pmb{\eta}} = \pmb{\eta} - \mathrm{argmin}_{\pmb{\xi}^\perp\in\mathcal{L}^\perp}\|\pmb{\eta} - \pmb{\xi}^{\perp}\|,
\end{equation}
for which the corresponding closest lattice point would be given by
\begin{equation}
    \mathbf{b}_0(\mathbf{s}) = \mathrm{argmin}_{\mathbf{b}\in\mathbb{Z}^{2n}}(\mathbf{s}-\mathbf{b})^T(MM^T)^{-1}(\mathbf{s}-\mathbf{b}).
\end{equation}
The logical error suffered by the correction of $\bar{\pmb{\eta}}$ would be 
\begin{equation}
    \mathbf{e} - \bar{\pmb{\eta}} = (MJ)^{-1}\left(\lfloor MJ\mathbf{e}\rceil + \mathbf{b}_0\right)
\end{equation}
Note that $\mathbf{e}-\bar{\pmb{\eta}}\in\mathcal{L}^\perp$ hence is necessarily a Pauli error. Note that we can obtain the Pauli group for this particular multi-mode GKP code as  $\mathcal{P} = \mathcal{L}^\perp/\mathcal{L}$ and for each particular $\mathbf{p}\in\mathcal{P}$, we can note that the probability of having occurred given that the syndrome is $\mathbf{s}$ would correspond to the occurrence of the event that $(MJ)^{-1}(\lfloor MJ\mathbf{e}\rceil + \mathbf{b}_0) = \mathbf{p}+\mathcal{L}$. This corresponds to having $\mathbf{e}$ be in the lattice defined by
\begin{equation}
    \mathcal{L}_{\mathbf{p},\mathbf{s}} = \mathbf{p} + (MJ)^{-1}(\mathbf{s} - \mathbf{b}_0) + \mathcal{L}
\end{equation}
which similarly can be extended to the lattice corresponding to a particular syndrome output as 
\begin{equation}
\mathcal{L}_{\mathbf{s}} = (MJ)^{-1}(\mathbf{s} - \mathbf{b}_0) + \mathcal{L}^{\perp}.
\end{equation}

Let us assume an underlying displacement noise channel of
\begin{equation}
    \mathcal{N}(\cdot) = \int d^2\pmb{\alpha} P(\pmb{\alpha}) \hat{D}(\pmb{\alpha})(\cdot) \hat{D}(\pmb{\alpha})^\dagger.
\end{equation}
We can now define the probability of doing a logical $\mathbf{p}$ Pauli operation and getting a syndrome of $\mathbf{s}$ corresponds to
\begin{equation}
    p(\mathbf{p},\mathbf{s}) \propto \sum_{\pmb{\xi}\in\mathcal{L}_{\mathbf{p},\mathbf{s}}}P(\pmb{\xi})
\end{equation}
and the conditional probability of having suffered some Pauli whilst knowing what the syndrome is would be
\begin{equation}
    p(\mathbf{p}|\mathbf{s}) = \frac{\sum_{\pmb{\xi}\in\mathcal{L}_{\mathbf{p},\mathbf{s}}}P(\pmb{\xi})}{\sum_{\pmb{\xi'}\in\mathcal{L}_{\mathbf{s}}}P(\pmb{\xi'})}
\end{equation}
Hence the effective logical noise channel after obtaining the syndrome $\mathbf{s}$ can be written as follows
\begin{equation}
    \mathcal{N}_{\mathrm{logical},\mathbf{s}}(\cdot) = \sum_{\mathbf{p}\in\mathcal{P}}p(\mathbf{p}|\mathbf{s}) L(\mathbf{p})(\cdot)L(\mathbf{p})^\dagger,
\end{equation}
where $L(\mathbf{p})$ is the logical Pauli operation corresponding to $\mathbf{p}$. 

The coherent information (in terms of qubit per channel use) of the above logical noise channel (taking input state to be the maximally mixed state) after having obtained the syndrome of $\mathbf{s}$ assuming that the Pauli group has dimension $d^2$ is given by
\begin{equation}
\begin{aligned}
    R_{\mathbf{s}} &= \log_2(d) + \sum_{\mathbf{p}} p(\mathbf{p}|\mathbf{s})\log_2(p(\mathbf{p}|\mathbf{s}))\\ &= \log_2(d) - H_2(p(\mathbf{p}|\mathbf{s})),
\end{aligned}
\end{equation}
which has a clear dependence on the syndrome. If we were to consider averaging this over the marginal distribution associated to achieving syndrome $\mathbf{s}$ we obtain
\begin{equation}
    R = \log_2(d) - H(\mathbf{p}|\mathbf{s}),
\end{equation}
where $H(\mathbf{p}|\mathbf{s})$ is the conditional entropy of $\mathbf{p}$ given $\mathbf{s}$. If one were to throw away the syndrome information, we are left with the marginal distribution $p(\mathbf{p})$. The information gain which can be achieved using the syndrome information is given by $I(\mathbf{p};\mathbf{s}) = H(\mathbf{p}) - H(\mathbf{p}|\mathbf{s})$ which we know will always be non-negative.

To show that this is always an achievable rate, we note that $\mathcal{N}_{\mathrm{logical}}$ can be understood as a classical mixture of Pauli channels written as
\begin{equation}
    \mathcal{N}_{\mathrm{logical}}(\cdot) = \int d\mathbf{s}\sum_{\mathbf{p}}p(\mathbf{p},\mathbf{s})L(\mathbf{p})(\cdot)L(\mathbf{p})\otimes|\mathbf{s}\rangle\langle\mathbf{s}|,
\end{equation}
for which the value $R$ is its coherent information (taking input state to be the maximally mixed state) which is hence an achievable rate. We also know that our specific construction of polar codes will achieve this rate in the case of a single qudit Pauli noise.
\subsection{Square GKP with Gaussian displacement noise}\label{app:dispSqGKP}
Let us consider the standard square GKP lattice defined by stabilizers
\begin{equation}
    \hat{S}_1 = \exp(i\hat{q}\sqrt{2\pi d}),\quad\hat{S}_2 = \exp(-i\hat{p}\sqrt{2\pi d}),
\end{equation}
and logical operators are $\hat{X}_L = S_2^{1/d}$ and $\hat{Z}_L = S_1^{1/d}$. We get stabilizer measurements $s_1 = \hat{q}\sqrt{\frac{d}{2\pi}} \mod 1$ and $s_2 = -\hat{p}\sqrt{\frac{d}{2\pi}} \mod 1$. Assuming a coherent displacement error to have occurred, the probability that closest point decoding results in a logical $X^u$ is given by
\begin{equation}
    \begin{aligned}\label{eq:pus1sup2}
    p_1(u,s_1) &= p(s_1)p(u|s_1)\\
    &=\frac{1}{\sigma\sqrt{d}}\sum_{l\in\mathbb{Z}}\exp(-\frac{\pi d}{\sigma^2}\left(\frac{s_1}{d}+\left(l+\frac{u}{d}\right)\right)^2)\\
    &= \frac{1}{d}\theta_3\left(\pi\left(\frac{u+s_1}{d}\right),e^{-\pi\sigma^2/d}\right),
\end{aligned}
\end{equation}
where $\theta_3$ is the Jacobi-Theta function defined in Eq. \eqref{eq:theta3} and the marginal distribution for 
\begin{equation}
    p(s_1) = \theta_3(\pi s_1,e^{-d\pi\sigma^2}).
\end{equation}
The joint distribution for $p_2(v,s_2) = p_1(v,-s_2)$ where $Z^v$ is the phase error suffered due to the coherent displacement followed by closest point decoding. We now have an achievable rate of
\begin{equation}
    I_{d,\mathrm{analog}}^{\mathrm{sq}} = \log_2(d) + 2\int ds_1 \sum_{u=0}^{d-1} p_1(u,s_1)\log_2\left(\frac{p_1(u,s_1)}{p(s_1)}\right).
\end{equation}
\begin{theorem}
    The rate $I_{d,\mathrm{analog}}$ is achievable by concatenation of GKP code encoding a qudit of $d$ levels (where $d$ is prime) with a quantum polar code designed for the noise channel $W_1$ on both amplitude and phase.
\end{theorem}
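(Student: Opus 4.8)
The plan is to reduce the statement to the quantum polar-code construction for classical mixtures of Pauli (CMP) channels developed in Appendix~\ref{app:CMP}, instantiated on the effective logical channel $\mathcal{N}_{\mathrm{logical}}$ of Eq.~\eqref{eq:logicalN}. First I would recall that homodyne readout of the two square-GKP stabilizers followed by closest-point correction turns a single bosonic mode under $\mathcal{N}_{\mathrm{displ}}$ into exactly the channel $\mathcal{N}_{\mathrm{logical}}$, a CMP channel whose auxiliary register $\ket{s_1,s_2}$ is classical and perfectly measurable by the Zak/subsystem decomposition. Since $p(u,v,s_1,s_2)$ factorizes as $p_1(u,s_1)\,p_2(v,s_2)$, the amplitude and phase parts decouple into the two discrete memoryless channels $W_1$ and $W_2$ of Eq.~\eqref{eq:W1W2}, each carrying a continuous output that includes the analog syndrome. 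Moreover $p_2(v,s_2)=p_1(v,-s_2)$, so $W_2$ is just $W_1$ with the syndrome relabelled $s_2\mapsto -s_2$; hence a single polarization analysis for $W_1$ --- the good/bad index decomposition $\mathcal{I},\mathcal{A},\mathcal{P},\mathcal{E}$ read off from the Bhattacharyya parameters $Z(W_1^{(i)})$ --- simultaneously furnishes the phase-type code, which is the precise meaning of ``designed for $W_1$ on both amplitude and phase.''

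Second, I would invoke the classical polar-coding theorem for prime input alphabets (Theorem~\ref{thm:clPolar}), whose analysis extends to channels with continuous output, so that $W_1$ (hence $W_2$) polarizes and its successive-cancellation decoder obeys the error bounds of Eq.~\eqref{eq:polarineq} with $P_{e,1},P_{e,2}\to 0$ while $|\mathcal{I}|/N$ and $|\mathcal{E}|/N$ converge to the relevant symmetric-capacity combinations. Plugging these polarized channels into the entanglement-assisted CSS-type construction of Appendix~\ref{app:CMP} --- freeze $\mathcal{A}\cup\mathcal{E}$ in the $Z$ basis and $\mathcal{P}\cup\mathcal{E}$ in the $X$ basis, supply $\mathcal{E}$ as pre-shared Bell pairs, and use $\hat{V}$ of Eq.~\eqref{eq:polarV} (a CSUM-only circuit) as the outer logical encoder --- gives, for each $N$, a concatenated square-GKP code on $N$ modes. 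Its decoder inverts $\hat{V}$, measures the previously frozen coordinates, and feeds those outcomes together with all $N$ analog GKP syndromes into the two classical successive-cancellation decoders; by the bookkeeping of Appendix~\ref{app:CMP} the resulting entanglement fidelity is $1-\mathcal{O}(\sqrt{P_{e,1}})-\mathcal{O}(\sqrt{P_{e,2}})\to 1$ and the net rate in qudits per mode tends to $I(W_1)+I(W_2)-1$.

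Third, I would verify that this rate, written in bits per mode, equals $I_{d,\mathrm{analog}}^{\mathrm{sq}}$. With uniform inputs and the identity $\sum_{z'}\tfrac{1}{d}W_1(z+u,s_1|z')=\tfrac{1}{d}p(s_1)$ one finds
\begin{equation}
I(W_1)=1+\int ds_1\sum_{u=0}^{d-1}p_1(u,s_1)\log_d\!\left(\frac{p_1(u,s_1)}{p(s_1)}\right),
\end{equation}
and the analogous expression for $W_2$; the substitution $s_2\mapsto-s_2$ (using that $p(s_1)$ is even) shows the two integrals coincide, so $\log_2(d)\bigl(I(W_1)+I(W_2)-1\bigr)=I_{d,\mathrm{analog}}^{\mathrm{sq}}$ via $\log_2(d)\log_d x=\log_2 x$. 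Combined with step two, this exhibits a sequence of explicitly constructible concatenated square-GKP codes with vanishing infidelity and rate tending to $I_{d,\mathrm{analog}}^{\mathrm{sq}}$, which is achievability.

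The step I expect to be the main obstacle is the second one: importing the continuous-output polarization result and checking that the analog GKP syndromes can be processed by the successive-cancellation decoder exactly as ordinary DMC outputs --- i.e.\ that the log-likelihood updates $f_\alpha,g_\alpha$ and the entanglement-fidelity estimate of Appendix~\ref{app:CMP} go through unchanged when part of the channel output is continuous. The rate identity in step three is a routine computation, and the channel identification in step one was already established in Section~\ref{sec:logicalNoisesq}.
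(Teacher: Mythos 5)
Your proposal is correct and follows essentially the same route as the paper's own proof: reduce to the CMP polar-code construction of Appendix~\ref{app:CMP} applied to $\mathcal{N}_{\mathrm{logical}}$, use the factorization $p(u,v,s_1,s_2)=p_1(u,s_1)p_2(v,s_2)$ and the symmetry $p_2(v,s_2)=p_1(v,-s_2)$ so that one polarization analysis for $W_1$ suffices, freeze $\mathcal{A}\cup\mathcal{E}$ and $\mathcal{P}\cup\mathcal{E}$ with $\mathcal{E}$ supplied as Bell pairs, and feed the per-mode analog syndromes into the classical SC decoders to reach rate $I(W_1)+I(W_2)-1=I_{d,\mathrm{analog}}^{\mathrm{sq}}/\log_2 d$. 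The point you flag as the main potential obstacle (continuous-output polarization) is the one the paper explicitly waves through with the remark following its Theorem on classical polar coding that the analysis of S{\c{s}}a{\c{s}}o{\u{g}}lu--Telatar--Ar{\i}kan extends to continuous output alphabets, so your structure and the paper's match point for point.
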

\begin{proof}
    We first note that the Pauli errors corresponding to $k_1$ and $k_2$ are independent of each other hence both can be treated to experience the noise channel described by $W_1$. We begin with the same setup of freezing certain bits based on the sets $\mathcal{I},\mathcal{A},\mathcal{P}$ and $\mathcal{E}$. The first step we do is correct each individual GKP mode. This results in giving $\mathbf{k}_1$ and $\mathbf{k}_2$ as vectors of the stabilizer values and the state ending up in
    \begin{equation}
    \ket{\Psi_1} = \sum_{\mathbf{z},\mathbf{u},\mathbf{v}}p_{\mathbf{u},\mathbf{v}|\mathbf{k}_1,\mathbf{k}_2} \ket{\mathbf{z}}_A X^{\mathbf{u}}Z^{\mathbf{v}}\ket{G_N\mathbf{z},\mathbf{k}_1,\mathbf{k}_2}_B\otimes\ket{\mathbf{u},\mathbf{v}}_E,
\end{equation}
Now we can perform the decoding procedure the same way as in the proof of \ref{thm:qPdeco} whilst also using the outputs of $\mathbf{k}_1$ and $\mathbf{k}_2$ in the decoding procedure which essentially achieves the capacity for the classical channel $W_1$ as well as $W_2$ hence achieving $I_{d,\mathrm{analog}} = 2I(W_1)-1$ asymptotically (note that $I(W_1) = I(W_2)$ by the fact that $p_1(u,s_1) = p_2(u,-s_1)$).
\end{proof}

\begin{theorem}
    For any given $\sigma$, there exists a $d_0$ such that for all prime $d>d_0$, the expression
    \begin{equation}
        \left|I^{\mathrm{sq}}_{d,\mathrm{analog}} - \log_2\left(\frac{1}{\sigma^2 e}\right)\right|
    \end{equation}
    can be made arbitrarily close to 0.
\end{theorem}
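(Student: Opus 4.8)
The plan is to evaluate $I^{\mathrm{sq}}_{d,\mathrm{analog}}$ directly from its definition, replacing the theta sum $p_1(u,s_1)$ by its central Gaussian term $p_{\mathrm{lim}}(u,s_1)$ and then computing the resulting Gaussian integral in closed form. Writing $h(x)=-x\log_2 x$ and using $\sum_u p_1(u,s_1)=p(s_1)$ (immediate from $\sum_u p(u\vert s_1)=1$), the defining formula splits as
\begin{equation}
I^{\mathrm{sq}}_{d,\mathrm{analog}}=\log_2 d-2\int_{-1/2}^{1/2}\!ds_1\sum_u h\!\big(p_1(u,s_1)\big)+2\int_{-1/2}^{1/2}\!ds_1\,h\!\big(p(s_1)\big).
\end{equation}
By the identity $p(s_1)=\theta_3(\pi s_1,e^{-\pi d\sigma^2})=1+\mathcal{O}(e^{-\pi d\sigma^2})$ uniformly in $s_1$, together with $\int_{-1/2}^{1/2}p(s_1)\,ds_1=1$, the last term is $\mathcal{O}(e^{-\pi d\sigma^2})$ and I would discard it.

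Next I would replace $p_1$ by $p_{\mathrm{lim}}$ in the remaining integral. Writing $p_1=p_{\mathrm{lim}}+r$ with $r\ge 0$ the sum of the $l\neq 0$ terms, the elementary bound $(l+t)^2\ge l^2/4$ for $t:=(u+s_1)/d\in[-1/2,1/2)$ gives $r$ pointwise $\mathcal{O}(\sigma^{-1}d^{-1/2}e^{-\pi d/(4\sigma^2)})$ and total mass $\sum_u\int r\,ds_1=\mathcal{O}(e^{-\pi d/(4\sigma^2)})$. Concavity of $h$ supplies the two tangent-line inequalities $h'(p_1)\,r\le h(p_1)-h(p_{\mathrm{lim}})\le h'(p_{\mathrm{lim}})\,r$, and since $0<p_{\mathrm{lim}}\le p_1\le 1+o(1)$ and, for $d$ large, $p_1<1/e$ everywhere, one has $\max(\lvert h'(p_1)\rvert,\lvert h'(p_{\mathrm{lim}})\rvert)\le\log_2(1/p_{\mathrm{lim}})\le\log_2(\sigma\sqrt d)+\tfrac{\pi d}{4\sigma^2}\log_2 e=\mathcal{O}(d/\sigma^2)$ on the range of $t$. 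Multiplying and integrating then yields $\lvert 2\int ds_1\sum_u(h(p_1)-h(p_{\mathrm{lim}}))\rvert=\mathcal{O}(\tfrac{d}{\sigma^2}e^{-\pi d/(4\sigma^2)})$; a finer split of the $t$-integral at $\lvert t\rvert=1/3$ (treating the tail $\lvert t\rvert\in[1/3,1/2]$ by bounding both $h(p_1)$ and $h(p_{\mathrm{lim}})$ by their Gaussian tails rather than matching them) gives the $\mathcal{O}(d^{1/2}\sigma^{-1}e^{-\pi d/(9\sigma^2)})$ quoted in the main text, but any bound of the form $(\text{poly in }d)\times e^{-cd/\sigma^2}$ with $c>0$ is enough.

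It then remains to compute $\sum_u\int ds_1\,h(p_{\mathrm{lim}}(u,s_1))$ in closed form. For fixed $u$, substituting $t=(u+s_1)/d$ (so $ds_1=d\,dt$) and noting that the intervals $[(u-\tfrac12)/d,(u+\tfrac12)/d)$ tile $[-1/2,1/2)$ as $u$ ranges over $d$ consecutive residues, the sum over $u$ and integral over $s_1$ collapse to
\begin{equation}
\sum_u\int_{-1/2}^{1/2}\!ds_1\,h(p_{\mathrm{lim}})=d\int_{-1/2}^{1/2}\!dt\,\frac{e^{-\pi d t^2/\sigma^2}}{\sigma\sqrt d}\Big(\log_2(\sigma\sqrt d)+\tfrac{\pi d t^2}{\sigma^2}\log_2 e\Big),
\end{equation}
using $\log_2 p_{\mathrm{lim}}=-\log_2(\sigma\sqrt d)-\tfrac{\pi d t^2}{\sigma^2}\log_2 e$. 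Rescaling $w=t\sqrt d/\sigma$ and extending $\lvert w\rvert<\sqrt d/(2\sigma)$ to all of $\mathbb{R}$ at a cost $\mathcal{O}(\sqrt d\,\sigma^{-1}e^{-\pi d/(4\sigma^2)})$ (the $w^2$-weighted tail is likewise exponentially small), this becomes $\int_{\mathbb{R}}e^{-\pi w^2}(\log_2(\sigma\sqrt d)+\pi w^2\log_2 e)\,dw=\log_2(\sigma\sqrt d)+\tfrac12\log_2 e=\tfrac12\log_2(\sigma^2 d e)$, by $\int e^{-\pi w^2}dw=1$ and $\int w^2 e^{-\pi w^2}dw=1/(2\pi)$. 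Assembling the pieces,
\begin{equation}
I^{\mathrm{sq}}_{d,\mathrm{analog}}=\log_2 d-\log_2(\sigma^2 d e)+\mathcal{O}\!\big(\tfrac{d}{\sigma^2}e^{-\pi d/(4\sigma^2)}\big)+\mathcal{O}(e^{-\pi d\sigma^2})=\log_2\!\frac{1}{\sigma^2 e}+\mathcal{O}\!\big(\tfrac{d}{\sigma^2}e^{-\pi d/(4\sigma^2)}\big)+\mathcal{O}(e^{-\pi d\sigma^2}),
\end{equation}
and for fixed $\sigma$ both error terms vanish as $d\to\infty$, so any prescribed accuracy is attained by taking $d_0$ large enough.

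I expect the one genuinely delicate step to be this $p_1\to p_{\mathrm{lim}}$ replacement inside the entropy integral: because $h'(x)$ diverges as $x\to 0$ while the probabilities in play are themselves exponentially small in $d$, one must pair the exponentially small $L^1$-mass of the remainder $r$ against the (only linearly growing in $d$) size of $\log(1/p_{\mathrm{lim}})$ and verify the product still tends to zero — which is precisely where the care behind the slightly weaker $e^{-\pi d/(9\sigma^2)}$ exponent enters. The remaining ingredients — the marginal identity $\sum_u p_1=p(s_1)$, the theta-transform estimate for $p(s_1)$, the tiling change of variables, the Gaussian moment integrals, and the tail extensions — are all routine.
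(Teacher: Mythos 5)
Your proof is correct, and it reaches the same limit $\log_2\!\bigl(1/(\sigma^2 e)\bigr)$ by essentially the same plan as the paper's Appendix D.1: discard the $\log_2 p(s_1)$ contribution as $\mathcal O(e^{-\pi d\sigma^2})$, replace the theta sum $p_1$ by its $l=0$ Gaussian term $p_{\mathrm{lim}}$, and evaluate the surviving entropy integral via the tiling substitution $t=(u+s_1)/d$ and Gaussian moment integrals. Where you genuinely depart from the paper is in controlling the $p_1\to p_{\mathrm{lim}}$ replacement inside the entropy. The paper expands $\ln\bigl(\sigma\sqrt d\,(p_{\mathrm{lim}}+c)\bigr)=\ln(\sigma\sqrt d\,p_{\mathrm{lim}})+\ln(1+c/p_{\mathrm{lim}})$, organises everything into four terms $w_1,\dots,w_4$, and for $w_3$ is forced to split the $u$-range at $|u|\approx d/3$ because $\ln(1+c/p_{\mathrm{lim}})$ is dangerous precisely where $p_{\mathrm{lim}}$ is exponentially tiny; that partition is the origin of the $e^{-\pi d/(9\sigma^2)}$ exponent in the final estimate. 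You instead invoke concavity of $h(x)=-x\log_2 x$ and the two tangent-line inequalities $h'(p_1)r\le h(p_1)-h(p_{\mathrm{lim}})\le h'(p_{\mathrm{lim}})r$ (with the monotonicity $|h'(p_1)|\le|h'(p_{\mathrm{lim}})|$ holding once $d$ is large enough that $p_1<1/e$ everywhere), and then simply pair the exponentially small $L^1$ mass of the remainder, $\sum_u\int r\,ds_1=\mathrm{erfc}\bigl(\sqrt{d\pi}/(2\sigma)\bigr)=\mathcal O(e^{-\pi d/(4\sigma^2)})$, against the uniform bound $|h'(p_{\mathrm{lim}})|=\mathcal O(d/\sigma^2)$. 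This sidesteps the $\ln(1+c/p_{\mathrm{lim}})$ blow-up and the range split entirely, at the mild cost of the coarser error $\mathcal O\bigl((d/\sigma^2)e^{-\pi d/(4\sigma^2)}\bigr)$, which is fully sufficient for the theorem as stated; you correctly note that recovering the paper's $e^{-\pi d/(9\sigma^2)}$ would require reinstating essentially the same $|t|=1/3$ split. Your version is shorter and makes the underlying mechanism more transparent — the tail mass decays exponentially while $|\log p_{\mathrm{lim}}|$ grows only polynomially in $d$, and that is all that is needed.
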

Since we know that $d$ is prime and a large number we know that it is an odd number and so we can consider that the variable $u$ can take on values from $-\lfloor d/2\rfloor$ to $\lfloor d/2\rfloor$ in steps of $1$. For our convenience, we rewrite
\begin{equation}
\begin{aligned}
    I_{d,\mathrm{analog}}& = \log_2\left(\frac{1}{\sigma^2}\right) \\&+ 2\int_{-1/2}^{1/2}ds_1\left(\sum_u p_1(u,s_1)\log( \frac{p_1(u,s_1)\sigma \sqrt{d}}{p(s_1)})\right).
\end{aligned}
\end{equation}

In the limit of $d\gg\frac{1}{\sigma^2}$, we can note that 
\begin{equation}
p(s_1) = 1 + \mathcal{O}(e^{-d\pi\sigma^2})    
\end{equation}
since each syndrome value between $-1/2$ and $1/2$ becomes nearly equally likely. This follows from the fact that for a given syndrome, the set of displacements that give the same syndrome are spaced apart by $\sqrt{2\pi/d}$ which gets increasingly fine as $d$ increases. This nature can also be noted by the fact that $p(s_1)$ is a theta function.

In the summation of $p_1(u,s_1)$, the contribution from terms of $l=0$ is significantly higher than any other term in the limit $d/\sigma^2\gg1$. Hence we can approximate the following truncated distribution of
\begin{equation}
    p_{\lim}(u,s_1) = \frac{1}{\sigma\sqrt{d}}\exp(-\frac{\pi d}{\sigma^2}\left(\frac{s_1}{d} + \frac{u}{d}\right)^2).
\end{equation}
To understand how far this might be from the actual probability distribution, we can integrate it to obtain
\begin{align}
    &\int_{-1/2}^{1/2}ds_1 \sum_u p_{\lim}(u,s_1) = \frac{1}{2}\left(\mathrm{erf}\left(\frac{\sqrt{d\pi}}{2\sigma}\right)\right)\\
    &= 1 - e^{-\frac{d\pi}{4\sigma^2}}\left(\frac{2\sigma}{\pi\sqrt{d}}+\mathcal{O}\left(\left(\sigma^2/d\right)^{-3/2}\right)\right),
\end{align}
which shows that the probability distribution is very close to $p_{\mathrm{lim}}(u,s_1)$ when we have $e^{-\frac{d\pi}{4\sigma^2}}\to 0$. To accurately denote the differences in using this function instead of $p_1(u,s_1)$ we can write $p_1(u,s_1) = p_{\mathrm{lim}}(u,s_1) + c(u,s_1)$ where $c(u,s_1)$ is appropriately defined for this. To upper bound $c(u,s_1)$ we note that it is comprised of the terms in equation \ref{eq:pus1} and if we individually maximize each term in the summation of $l$ excluding $l=0$ we can note that it corresponds to summing up $\frac{2}{\sigma\sqrt{d}}\exp(-\pi d (l-1/2)^2/\sigma^2)$ from $l=1$ to $\infty$ giving
\begin{equation}
    |p_1(u,s_1)-p_{\mathrm{lim}}(u,s_1)| = |c(u,s_1)| \leq \frac{\theta_2(0,e^{-d\pi/\sigma^2})}{\sigma\sqrt{d}}.
\end{equation}
Note that $\theta_2(0,q) \leq 2q^{1/4}(1-q)^{-1}$ when $q<1$, and since we are working in the limit of small $e^{-\pi d/\sigma^2}$, we can hence note that $\sigma\sqrt{d}|c(u,s_1)| = \mathcal{O}(e^{-\frac{\pi d}{4\sigma^2}})$ as illustrated in Fig. \ref{fig:1}(c). 

For $d\gg\max\{\frac{1}{\sigma^2},\sigma^2\}$, we expand $I^{\mathrm{sq}}_{d,\mathrm{analog}}$ as
\begin{widetext}
    \begin{equation}
    I^{\mathrm{sq}}_{d,\mathrm{analog}} = \log_2\left(\frac{1}{\sigma^2}\right) + 2\int_{-1/2}^{1/2} ds_1\left(\left(\sum_{u}(p_{\mathrm{lim}}(u,s_1) + c(u,s_1))\log(\sigma\sqrt{d}(p_{\mathrm{lim}}(u,s_1) + c(u,s_1)))\right) - p(s_1)\log(p(s_1))\right)
\end{equation}
\end{widetext}
We first note that $p(s_1)\log(p(s_1)) = \mathcal{O}(e^{-d\pi\sigma^2})$ since $p(s_1) = 1 + \mathcal{O}(e^{-d\pi\sigma^2})$. On further expansion we get
\begin{equation}
\begin{aligned}
    &I^{\mathrm{sq}}_{d,\mathrm{analog}} = \log_2\left(\frac{1}{\sigma^2}\right) \\&+ 2\log_2(e)(w_1+w_2+w_3+w_4)+ \mathcal{O}(e^{-d\pi\sigma^2}),
\end{aligned}
\end{equation}
where
\begin{align}
    w_1 &= \sum_{u}\int_{-1/2}^{1/2} ds_1 p_{\mathrm{lim}}(u,s_1)\ln(\sigma\sqrt{d}p_{\mathrm{lim}}(u,s_1)),\\
    w_2 &= \sum_{u}\int_{-1/2}^{1/2} ds_1 c(u,s_1)\ln(\sigma\sqrt{d}p_{\mathrm{lim}}(u,s_1)), \\
    w_3 &= \sum_{u}\int_{-1/2}^{1/2} ds_1 p_{\mathrm{lim}}(u,s_1)\ln(1 + \frac{c(u,s_1)}{p_{\mathrm{lim}}(u,s_1)}), \\
    w_4 &= \sum_{u}\int_{-1/2}^{1/2} c(u,s_1)\ln(1 + \frac{c(u,s_1)}{p_{\mathrm{lim}}(u,s_1)}) ds_1,
\end{align}
and we will now proceed to show that $w_1$ gives a leading order term in the limit of large $d$ while none of the other terms do.\\
\textbf{Evaluating $w_1$:} We first evaluate the integral which is fairly easy since it is a Gaussian definite integral which simplifies to
\begin{equation}
    \begin{aligned}
    w_1 &= \sum_u \frac{(e^{-\frac{\pi(1+2u)^2}{4d\sigma^2}}(2u+1)-e^{-\frac{\pi(2u-1)^2}{4d\sigma^2}}(2u-1))}{4\sigma\sqrt{d}}\\&+ \sum_u\frac{1}{2}\left(\mathrm{erf}\left(\frac{\sqrt{\pi}(1-2u/d)}{2\sigma\sqrt{d}}\right) + \mathrm{erf}\left(\frac{\sqrt{\pi}(1+2u/d)}{2\sigma\sqrt{d}}\right)\right)\\
    &= \frac{1}{2}\mathrm{erf}\left(\frac{\sqrt{d\pi}}{2\sigma}\right) + \frac{1}{2}\frac{e^{-\frac{\pi d}{4\sigma^2}}\sqrt{d}}{\sigma} \\&= \frac{1}{2} + \exp(-\frac{\pi d}{4\sigma^2})\frac{\sqrt{d}}{\sigma}\left(\frac{4-\pi}{2\pi}\right) \\&= \frac{1}{2} + \mathcal{O}(d^{1/2}e^{-\frac{d\pi}{4\sigma^2}}\sigma^{-1}).
\end{aligned}
\end{equation}
\textbf{Evaluating $w_2$:} We note that $|\log(\sigma\sqrt{d}p_{\mathrm{lim}}(u,s_1))| = \frac{\pi d}{\sigma^2}(\frac{s_1}{d}-\frac{u}{d})^2 \leq \frac{\pi d}{\sigma^2}$. Since $c(u,s_1)$ is positive, this means $w_2 < 0$. We can bound it as
\begin{align}
    |w_2| &\leq \sum_{u}\int_{-1/2}^{1/2} ds_1 c(u,s_1)\frac{\pi d}{\sigma^2}\\ &= \frac{\pi d}{\sigma^2} \left(1 - \sum_{u}\int_{-1/2}^{1/2} ds_1 p_{\mathrm{lim}}(u,s_1)\right)\\ 
    &= \frac{2\sqrt{d}}{\sigma}e^{-\frac{d\pi}{4\sigma^2}}\left(1 + \mathcal{O}\left(\left(\sigma^2/d\right)^{-3/2}\right)\right),
\end{align}
hence giving $|w_2| = \mathcal{O}\left(d^{1/2}e^{-\frac{d\pi}{4\sigma^2}}\sigma^{-1}\right)$.\\
\textbf{Evaluating $w_3$:} We can first note that $w_3$ is positive in value. The minimum value of $\sigma\sqrt{d}p_{\mathrm{lim}}(u,s_1)$ over the range of values that $u$ and $s_1$ take is $e^{-\frac{d\pi}{4\sigma^2}}$ when $dx = \lfloor\frac{d}{2}\rfloor$ (or $-\lfloor\frac{d}{2}\rfloor$) and $s_1 = - 1/2$ (or $1/2$). We can break the summation in $w_3$ into three intervals, $-\lfloor\frac{d}{2}\rfloor\leq u < -\lfloor\frac{d}{3}\rfloor$, $-\lfloor\frac{d}{3}\rfloor\leq u \leq\lfloor\frac{d}{3}\rfloor$ and $\lfloor\frac{d}{3}\rfloor<u\leq \lfloor\frac{d}{2}\rfloor$. When in the intervals  $-\lfloor\frac{d}{3}\rfloor\leq u \leq\lfloor\frac{d}{3}\rfloor$ and $\lfloor\frac{d}{3}\rfloor<u\leq \lfloor\frac{d}{2}\rfloor$, $e^{-\frac{\pi d}{4\sigma^2}}\leq \sigma\sqrt{d}p_{\mathrm{lim}}(u,s_1) \leq \alpha_1 e^{-\frac{\pi d}{9\sigma^2}}$ where $\alpha_1\approx 1$. Hence in these intervals
\begin{equation}
\begin{aligned}
    &p_{\mathrm{lim}}(u,s_1)\log(1 + \frac{c(u,s_1)}{p_{\mathrm{lim}}(u,s_1)})\\ &\leq\frac{1}{\sigma\sqrt{d}}\alpha_1e^{-\frac{\pi d}{9\sigma^2}}\log(1+ \frac{\mathcal{O}(e^{-\frac{d\pi}{4\sigma^2}})}{e^{-\frac{d\pi}{4\sigma^2}}}) = \mathcal{O}\left(\frac{e^{-\frac{\pi d}{9\sigma^2}}}{\sigma\sqrt{d}}\right),
\end{aligned}
\end{equation}
and in the interval of $-\lfloor\frac{d}{3}\rfloor\leq u \leq\lfloor\frac{d}{3}\rfloor$, we have $\sigma\sqrt{d}p_{\mathrm{lim}}(u,s_1) \geq \alpha_2 e^{-\frac{\pi d}{9\sigma^2}}$ where 
\begin{equation}
\begin{aligned}
    &p_{\mathrm{lim}}(u,s_1)\log(1 + \frac{c(u,s_1)}{p_{\mathrm{lim}}(u,s_1)}) \\&\leq p_{\mathrm{lim}}(u,s_1)\log(1+ \frac{\mathcal{O}(e^{-\frac{d\pi}{4\sigma^2}})}{\alpha_2e^{-\frac{d\pi}{9\sigma^2}}})\\ &= p_{\mathrm{lim}}(u,s_1)\log(1+ \mathcal{O}(e^{-\frac{5d\pi}{36\sigma^2}})),
\end{aligned}
\end{equation}
and since we know that $e^{-\frac{d\pi}{\sigma^2}}$ is a small number this gives us $p_{\mathrm{lim}}(u,s_1)\log(1 + \frac{c(u,s_1)}{p_{\mathrm{lim}}(u,s_1)}) = \mathcal{O}(e^{-\frac{5d\pi}{36\sigma^2}})p_{\mathrm{lim}}(u,s_1)$. Summing over all these intervals, we get
\begin{equation}
\begin{aligned}
    |w_3| &\leq \sum_{|u|\leq\lfloor\frac{d}{3}\rfloor}\mathcal{O}(e^{-\frac{5d\pi}{36\sigma^2}})p_{\mathrm{lim}}(u,s_1)\\ &\quad\quad\quad\quad\quad+ \sum_{\lfloor\frac{d}{3}\rfloor<|u|\leq\lfloor\frac{d}{2}\rfloor}\mathcal{O}\left(\frac{e^{-\frac{\pi d}{9\sigma^2}}}{\sigma\sqrt{d}}\right) \\&\leq \mathcal{O}(e^{-\frac{5d\pi}{36\sigma^2}}) + \mathcal{O}\left(d^{1/2}e^{-\frac{\pi d}{9\sigma^2}}\sigma^{-1}\right).
\end{aligned}
\end{equation}
\textbf{Evaluating $w_4$:} as is already shown before, $\sigma\sqrt{d}p_{\mathrm{lim}}(u,s_1)\geq e^{-\frac{d\pi}{4\sigma^2}}$. Using this we can place a bound on $w_4$ as
\begin{equation}
    |w_4| \leq \alpha_3\sum_{u}\int_{-1/2}^{1/2}ds_1 c(u,s_1) = \mathcal{O}(\sigma d^{-1/2}e^{-\frac{d\pi}{4\sigma^2}}).
\end{equation}
Now we put together all of these together to get
\begin{align}
    \left|I^{\mathrm{sq}}_{d,\mathrm{analog}} - \log_2\left(\frac{1}{\sigma^2e}\right)\right| \leq \mathcal{O}( d^{1/2}e^{-\frac{d\pi}{9\sigma^2}}\sigma^{-1})+ \mathcal{O}(e^{-d\pi\sigma^2}),
\end{align}
where this upper bound has a prefactor $\alpha_2$ such that in the interval of $-\lfloor\frac{d}{3}\rfloor\leq u \leq\lfloor\frac{d}{3}\rfloor$, we have $\sigma\sqrt{d}p_{\mathrm{lim}}(u,s_1) \geq \alpha_2 e^{-\frac{\pi d}{9\sigma^2}}$. The existence of fixed $\alpha_2$ (with respect to $d$) holds true as long as we have $d>d_0$ such that $\frac{d_0}{\sigma^2}\gg1$. As such this upper bound will also tend to zero hence proving our claim. Therefore we have shown that
\begin{equation}
    I^{\mathrm{sq}}_{d,\mathrm{analog}} = \log_2\left(\frac{1}{\sigma^2e}\right) -\mathcal{O}( d^{1/2}e^{-\frac{d\pi}{9\sigma^2}}\sigma^{-1})- \mathcal{O}(e^{-d\pi\sigma^2}).
\end{equation}
\subsection{Rectangular GKP with Gaussian displacement noise}
\begin{figure*}[ht]
    \centering
    \includegraphics[width=0.9\linewidth]{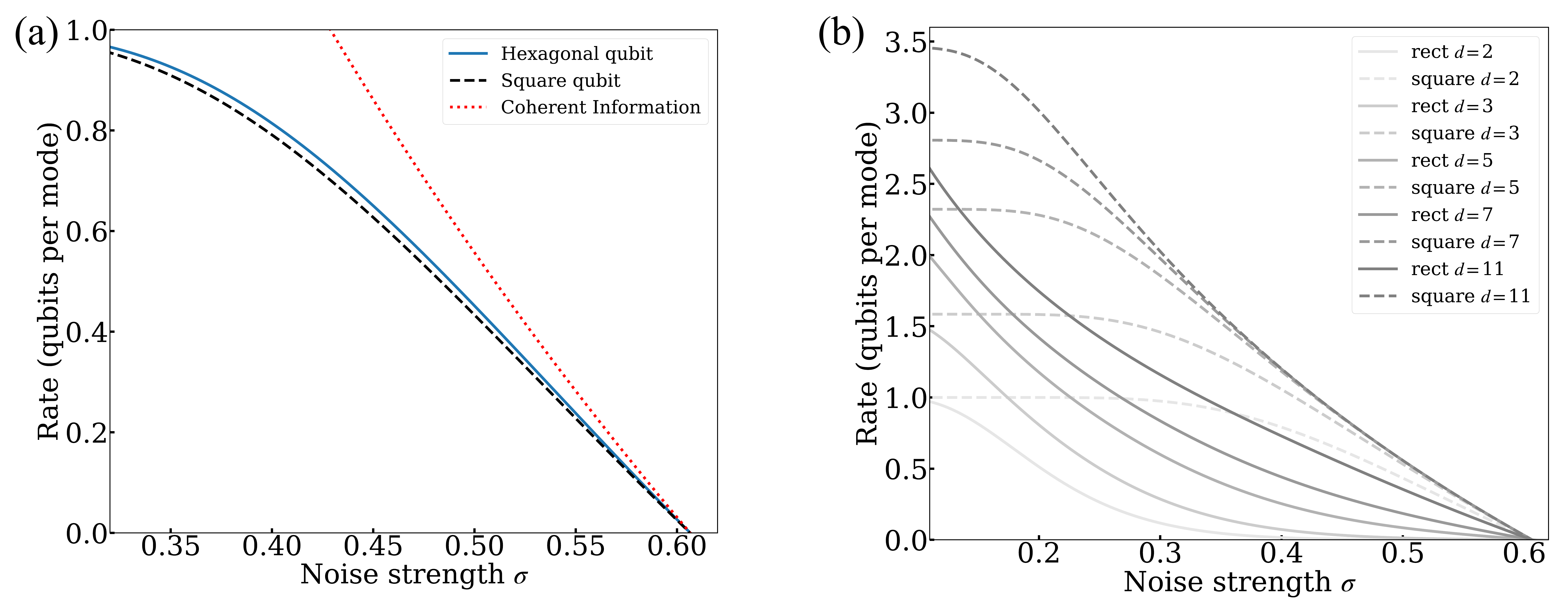}
    \caption{(a) The achievable rates for rectangular GKP qudits with $f=3$. (b) The achievable rates for a GKP hexagonal qubit slightly exceed that of a square GKP qubit as seen in this plot.}
    \label{fig:10}
\end{figure*}
We now examine the case of rectangular GKP with Gaussian displacement noise which would also result in the distributions associated to bit-flip and phase-flips be independent. Without loss of generality, let us assume the rectangle to be squeezed more along the $\hat{p}$ quadrature which would introduce a bias toward decreasing bit-flip errors and increasing phase-flip errors. The stabilizers are displacements defined as follows
\begin{equation}
    S_1 = \exp(i\hat{q}\frac{\sqrt{2\pi d}}{f}),\quad S_2 = \exp(-i\hat{p}f\sqrt{2\pi d}),
\end{equation}
where logical operators $X_L = S_2^{1/d}$ and $Z_L = S_1^{1/d}$ and stabilizer measurements $s_1 = \hat{q}f\sqrt{\frac{d}{2\pi}} \mod 1$ and $s_2 = -\hat{p}\frac{1}{f}\sqrt{\frac{d}{2\pi}} \mod 1$ where without loss of generality we assume $f>1$. Assuming a coherent displacement error to have occurred, the probability that closest point decoding results in a logical $X^u$ is given by
\begin{equation}
    \begin{aligned}
    &p_1(u,s_1)\\ &= \frac{1}{f\sigma\sqrt{d}}\sum_{l\in\mathbb{Z}}\exp(-f^2\frac{\left(s_1\sqrt{\frac{2\pi}{d}}+(dl+u)\sqrt{2\pi/d}\right)^2}{2f^2\sigma^2})\\
    &= \frac{1}{d}\theta_3\left(\pi\left(\frac{u+s_1}{d}\right),e^{-\frac{\pi\sigma^2}{f^2d}}\right),
\end{aligned}
\end{equation}
and similarly the probability of a logical $Z^v$ is given by
\begin{equation}
    \begin{aligned}
    &p_2(v,s_2)\\ &= \frac{f}{\sigma\sqrt{d}}\sum_{l\in\mathbb{Z}}\exp(-\frac{1}{f^2}\frac{\left(s_2\sqrt{\frac{2\pi}{d}}-(dl+v)\sqrt{2\pi/d}\right)^2}{2\sigma^2})\\
    &= \frac{1}{d}\theta_3\left(\pi\left(\frac{v-s_2}{d}\right),e^{-\frac{\pi f^2\sigma^2}{d}}\right).
\end{aligned}
\end{equation}
Note that this can be transformed exactly into the square case by only rescaling the value of $\sigma$. This gives the following achievable rate of
\begin{equation}
    I^{\mathrm{rect},f}_{d,\mathrm{analog}}(\sigma) = \frac{1}{2}(I^{\mathrm{sq}}_{d,\mathrm{analog}}(\sigma f) + I^{\mathrm{
    sq}}_{d,\mathrm{analog}}(\sigma/f)).
\end{equation}
We know that $I^{\mathrm{sq}}_{d,\mathrm{analog}}<\log_2(d)$ always and approaches $-\log_2(\sigma^2 e)$ as long as $d\sigma^2\gg1$. Hence if $d$ is large enough such that also $d\sigma^2\gg f^2$, then we have $I_{d,\mathrm{analog}}^{\mathrm{rect},f}(\sigma) \approx I_c(\sigma)$.\\
Assuming were are in a range of $\sigma$ such that $I^{\mathrm
sq}_{d,\mathrm{analog}}$ is concave in the range $[\sigma/f,\sigma f]$, we have
\begin{equation}
    I_{d,\mathrm{analog}}^{\mathrm{rect},f}(\sigma) \leq I_{d,\mathrm{analog}}^{\mathrm{sq}}\left(\frac{\sigma}{2f}(f^2 +1)\right) \leq I_{d,\mathrm{analog}}^{\mathrm{sq}}(\sigma),
\end{equation}
where the first inequality follows from concavity of $I_{d,\mathrm{analog}}^{\mathrm{sq}}(\sigma)$ and the second follows from $(f^2+1)\geq 2f$ hence giving $\sigma \leq \sigma\frac{f^2+1}{2f}$. Hence biasing does not improve the rate for square GKP for certain ranges of sigma. As shown in Fig. \ref{fig:10}, there is a range of values of $\sigma$ for each $d$ where $I_{d,\mathrm{analog}}^{\mathrm{sq}}$ is concave in $\sigma$ which notably is far from the limit of where it approaches coherent information (which is notably convex in $\sigma$). Over this range, rectangular biasing (provably) provides no benefit.

\subsection{General single-mode GKP with Gaussian displacement noise}
For any single-mode GKP, we would have the syndrome measurements given by $s_1$ and $s_2$. In general, the probability of bit (or phase) flip would simultaneously depend on the values of $s_1$ and $s_2$. Let us consider the event of logical error $X^uZ^v$ after correcting based on closest lattice point for syndrome values $s_1$ and $s_2$. This would follow some probability distribution
\begin{equation}
    p(u,v,s_1,s_2) \propto \sum_{\pmb{\xi}\in\mathcal{L}_{u,v,s_1,s_2}}\exp(-\frac{|\pmb{\xi}|^2}{2\sigma^2})
\end{equation}
where 
\begin{equation}
    \mathcal{L}_{u,v,s_1,s_2} = u\mathbf{x}_L + v\mathbf{z}_L + (MJ)^{-1}(\mathbf{s} - \mathbf{b}_0) + \mathcal{L}
\end{equation}
where $\mathbf{x}_L$ ($\mathbf{z}_L$) is the displacement corresponding to a logical $X$ ($Z$) operation. We also similarly would define $p(u,s_1,s_2) = \sum_v p(u,v,s_1,s_2)$. The polar codes associated to this would be obtained by looking at the two classical channels 
\begin{equation}
\begin{aligned}
     W_1((z+u,s_1,s_2)|z) &= p(u,s_1,s_2),\\ W_2((x+v,u,s_1,s_2)|x) &= p(u,v,s_1,s_2)
\end{aligned}
\end{equation}
which gives an achievable rate of
\begin{equation}
\begin{aligned}
    &R = \log_2(d)(I(W_1) + I(W_2) - 1)\\ &= \log_2(d) \\&\quad\quad+ \int ds_1ds_2\sum_{u,v} p(u,v,s_1,s_2)\log_2\left(\frac{p(u,v,s_1,s_2)}{p(s_1,s_2)}\right).
\end{aligned}
\end{equation}
Note that the main difference here arises from the fact that we need to consider the closest point from $\mathbf{b}_0$ which would depend on both $s_1$ and $s_2$ which creates decision boundaries corresponding to the Voronoi cell.

\section{Self orthogonal codes in $GF(d^2)$}\label{app:selfortho}
In this section we introduce stabilizer codes for qudits with a prime number of levels $d$, using self-orthogonal codes in $GF(d^2)$ using principles from \cite{959288}. We use the notation $[N,K]_d$ for a classical linear code which encodes $K$ dits into $N$ dits. Let us consider $N$ qudit Pauli operators written as
\begin{equation}
    X^{\pmb{a}^{(1)}}Z^{\pmb{a}^{(2)}} = \Pi_{i=1}^{N}X^{a_i^{(1)}}Z^{a^{(2)}_i},
\end{equation}
where $a_i^{(1)},a_i^{(2)} \in\mathbb{F}_d$. The condition for two Pauli operators to commute is given by
\begin{equation}
\begin{aligned}
[X^{\pmb{a}^{(1)}}Z^{\pmb{a}^{(2)}},X^{\pmb{b}^{(1)}}Z^{\pmb{b}^{(2)}}]=0\iff\\ \pmb{a}^{(1)}\cdot\pmb{b}^{(2)} = \pmb{a}^{(2)}\cdot\pmb{b}^{(1)} \mod d,
\end{aligned}
\end{equation}
which can be understood as having the symplectic inner product between the two $2N$ dimensional vectors $\pmb{a} = \begin{pmatrix}
    \pmb{a}^{(1)}\\\pmb{a}^{(2)}
\end{pmatrix}$ and $\pmb{b} = \begin{pmatrix}
    \pmb{b}^{(1)}\\\pmb{b}^{(2)}
\end{pmatrix}$ being zero. If we consider $[[N,K]]_d$ stabilizer codes that encode $K$ qudits in $N$ qudits, we have $N-K$ stabilizers which we will consider to be from the $N$ qudit Pauli group.

We consider some $[[N,K]]_d$ code defined by the stabilizer group  $\mathcal{S}\subseteq\mathcal{P}^{N}_d$ where $\mathcal{P}^{N}_d$ is the $N$ qudit Pauli group. Let us consider the set $C\subseteq \mathbb{F}^{2N}_d$ constructed from the stabilizer group $\mathcal{S}$ defined by
\begin{equation}
    C = \left\{\begin{pmatrix}
    \pmb{a}^{(1)}\\\pmb{a}^{(2)}
\end{pmatrix} \Big|\quad \hat{X}^{\pmb{a^{(1)}}}\hat{Z}^{\pmb{a^{(2)}}}\in\mathcal{S}\right\}.
\end{equation}
The set $C$ is linearly additive since if $\pmb{a},\pmb{b}\in C$ then so is $\alpha\pmb{a}+\beta\pmb{b}\in C$ for any $\alpha,\beta\in\mathbb{F}_d$. This means that $C$ can be considered as a linear code which is $[2N,N-K]_d$ since it consists of $N-K$ linearly independent generators. We can define an inner product over $\mathbb{F}^{2N}_d$ by $\langle\pmb{a},\pmb{b}\rangle = \pmb{a}^{(1)}\cdot\pmb{b}^{(2)} - \pmb{a}^{(2)}\cdot\pmb{b}^{(1)}$ which is a symplectic inner product. The relevance of this is that the two Pauli operators represented by $\pmb{a},\pmb{b}\in\mathbb{F}^{2N}_d$ would commute iff $\langle\pmb{a},\pmb{b}\rangle = 0$. Based on this we can define a symplectic dual code of $C$ as
\begin{equation}
    C^{\perp}=\{\pmb{a}|\forall\pmb{b}\in C, \langle\pmb{a},\pmb{b}\rangle = 0\},
\end{equation}
which would contain representations for all the Pauli operators which would give a logical error for the stabilizer code. For this to be a valid stabilizer code, we require that $C\subseteq C^{\perp}$ which is the condition of self orthogonality. This ensures that all the stabilizers commute with each other.

We can now see that elements of $C$ can also be considered to lie in the field of $\mathbb{F}^{N}_{d^2}$ since the Galois field $GF(d^2)$ can be defined as a field extension of $GF(d)$ if $d$ is a prime power. We will be restricting our discussion to having $d$ be prime which lets us easily define $\mathbb{F}_d$ to be integers modulo $d$. However since Galois fields require to have multiplication and division be defined along with addition and subtraction, this doesn't work for anything but $d$ being prime. Hence one needs to extend the field of $GF(d)$ by expressing a polynomial using coefficients from $GF(d)$ which doesn't actually have a solution in $GF(d)$. The solutions of this polynomial will then be used as a basis to express elements in extensions of $GF(d)$. As an example if $d$ is prime and $d=3\mod4$, there cannot exist any element which satisfies $x^2=-1\mod d$ for any $x\in \mathbb{F}_d$. Hence we can express elements in $GF(d^2)$ as $a = a^{(1)} + \gamma a^{(2)}$ where $a^{(1)},a^{(2)}\in \mathbb{F}_d$ and $\gamma\in\mathbb{F}_{d^2}\backslash\mathbb{F}_d$ satisfying $\gamma^2=-1\mod d$.

The connection of classical self-orthogonal codes to stabilizer codes has been explored in various works, with the pivotal work exploring qubit stabilizer codes obtained from $GF(4)$ \cite{681315} and further works extending this for qudits using $GF(d^2)$ \cite{959288,782103,1715533}. In the work by Ashikhmin and Knill \cite{959288}, they show that for $\pmb{a},\pmb{b}\in\mathbb{F}_{d^2}^N$ with $\pmb{a}=\pmb{a}^{(1)}+\gamma\pmb{a}^{(2)}$ and $\pmb{b} = \pmb{b}^{(1)}+\gamma\pmb{b}^{(2)}$ (here $\pmb{a}^{(j)},\pmb{b}^{(j)}\in\mathbb{F}^N_d$ for $j=1,2$) the following holds true 
\begin{equation}
\begin{aligned}
    \pmb{a}\cdot\pmb{b}^d = \sum_i a_i b_i^d = 0\mod d\\
    \implies \sum_i (a^{(1)}_ib_i^{(2)}-a^{(2)}_ib_i^{(1)}) = 0\mod d.
\end{aligned}
\end{equation}
Hence they define the inner product $\pmb{a}\pmb{b} =  \pmb{a}\cdot\pmb{b}^d$ which is also referred to as the Hermitian inner product in \cite{1715533}. Using this a dual of a code in $\mathbb{F}^N_{d^2}$ can be defined. This inner product being zero provides a sufficient condition for two Pauli operators to commute. Note that this does not offer a necessary condition for commuting Pauli operators which can be trivially noted by seeing that in general for $\pmb{a}\in \mathbb{F}^{N}_{d^2}$, $\pmb{a}\pmb{a}$ need not be zero. We note that in the case of $d=3\mod 4$, the self inner product takes on the nice form of
\begin{equation}
    \pmb{a}\pmb{a} = \sum_i((a^{(1)})^2 + (a^{(2)})^2) \mod d,
\end{equation}
which we will be making use of for proving our result in relation to achieving the capacity of pure-loss.

If we have a code $D\subseteq\mathbb{F}^{N}_{d^2}$ which is $[N,(N-K)/2]_{d^2}$ and the dual of it defined by
\begin{equation}
    D^\perp = \left\{\pmb{a}\in\mathbb{F}^{N}_{d^2}|\forall{\pmb{b}}\in D,\pmb{a}\pmb{b} = 0\right\},
\end{equation}
which satisfies the self-orthogonality condition $D\subseteq D^{\perp}$, this implies the existence of a quantum code which is $[[N,K]]_d$. Every element in $D$ satisfies self-orthogonality $\pmb{a}\pmb{a}=0$. The number of non-zero elements in $\mathbb{F}^N_{d^2}$ which satisfy this property is given by
\begin{equation}
    N_{\mathrm{self}} = \frac{1}{d}(d^{2N}+(d-1)(-d)^N)-1,
\end{equation}
as shown in \cite{1362919}. We now restate Lemma 7 from \cite{959288} which will be crucial in our analysis.
\begin{lemma}
    Consider the set $\mathcal{T}$ consisting of all possible $[N,(N-K)/2]_{d^2}$ self orthogonal codes. The number of codes in $\mathcal{T}$ which contains a given non-zero self-orthogonal vector $\pmb{a}$ is independent of $\pmb{a}$. Hence all non-zero self-orthogonal vectors appear the same number of codes in $\mathcal{T}$.
\end{lemma}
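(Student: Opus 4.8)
The plan is a symmetry argument: I will exhibit a group that simultaneously permutes $\mathcal{T}$ and acts transitively on the non-zero self-orthogonal vectors, after which the statement is a one-line orbit-counting observation. Let $U$ be the group of $\mathbb{F}_{d^2}$-linear maps $g$ on $\mathbb{F}_{d^2}^N$ that preserve the Hermitian inner product, i.e. $(g\pmb{a})(g\pmb{b}) = \pmb{a}\pmb{b}$ for all $\pmb{a},\pmb{b}$ (equivalently $g^\dagger g = I$, with $\dagger$ the conjugate-transpose induced by $x\mapsto x^d$). Since $g$ is form-preserving and invertible, it maps any self-orthogonal code $D$ (a subspace on which the Hermitian form vanishes identically) to a self-orthogonal code $gD$ of the same dimension, so $D\mapsto gD$ is a bijection of $\mathcal{T}$. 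Writing $n(\pmb{a})$ for the number of codes in $\mathcal{T}$ containing a fixed non-zero self-orthogonal vector $\pmb{a}$, and noting $\pmb{a}\in D \iff g\pmb{a}\in gD$, we get $n(\pmb{a}) = n(g\pmb{a})$ for every $g\in U$.

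It then remains to show that $U$ acts transitively on non-zero self-orthogonal vectors. Given two such vectors $\pmb{a}$ and $\pmb{b}$, the restriction of the Hermitian form to each of the lines $\mathbb{F}_{d^2}\pmb{a}$ and $\mathbb{F}_{d^2}\pmb{b}$ is identically zero, so the linear map $\lambda\pmb{a}\mapsto\lambda\pmb{b}$ is an isometry between these one-dimensional subspaces; by the Witt extension theorem for non-degenerate Hermitian forms over the finite field $\mathbb{F}_{d^2}$ it extends to an element $g\in U$ with $g\pmb{a}=\pmb{b}$. If one prefers a self-contained argument rather than quoting Witt, I would instead build $g$ explicitly: non-degeneracy of the standard form gives a vector $\pmb{w}$ with $\pmb{a}\pmb{w}\neq 0$; rescaling $\pmb{w}$ makes $\pmb{a}\pmb{w}=1$, and then $\pmb{a}':=\pmb{w}+s\pmb{a}$ with $\mathrm{Tr}_{\mathbb{F}_{d^2}/\mathbb{F}_d}(s)=-\pmb{w}\pmb{w}$ (possible since the relative trace is onto $\mathbb{F}_d$) is a hyperbolic partner, $\pmb{a}'\pmb{a}'=0$ and $\pmb{a}\pmb{a}'=1$; doing the same for $\pmb{b}$ and using that non-degenerate Hermitian forms of equal rank over $\mathbb{F}_{d^2}$ are isometric, one splices the map on hyperbolic planes with an isometry of the orthogonal complements to obtain $g$. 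Either way transitivity holds, and combined with $n(\pmb{a})=n(g\pmb{a})$ it yields that $n(\pmb{a})$ is a single constant over all non-zero self-orthogonal $\pmb{a}$, which is the claim; as a byproduct this constant is positive whenever $(N-K)/2\ge 1$, since any non-zero self-orthogonal vector extends to a maximal totally isotropic subspace and hence lies in one of dimension $(N-K)/2$.

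The main obstacle is the transitivity step: everything else is formal book-keeping, but establishing that the Hermitian unitary group of $\mathbb{F}_{d^2}^N$ moves every non-zero isotropic vector to every other one is precisely the content of Witt's theorem, and a genuinely self-contained proof requires reproving the hyperbolic-splitting lemma and the uniqueness (up to isometry) of non-degenerate Hermitian forms over finite fields. These are classical facts, so in the write-up I would most likely simply cite Witt's extension theorem together with \cite{959288}, keeping the explicit hyperbolic-pair construction as an optional remark.
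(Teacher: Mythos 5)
Your proof is correct. The paper itself does not supply a proof of this lemma---it simply restates Lemma~7 of Ashikhmin and Knill \cite{959288}---but your argument (the group of Hermitian isometries of $\mathbb{F}_{d^2}^N$ permutes $\mathcal{T}$, acts transitively on nonzero isotropic vectors by Witt's theorem, and $\pmb{a}\in D\iff g\pmb{a}\in gD$ makes $n(\pmb{a})$ an orbit invariant; your hyperbolic-pair construction with $\mathrm{Tr}_{\mathbb{F}_{d^2}/\mathbb{F}_d}(s)=-\pmb{w}\pmb{w}$ is a correct self-contained route to the transitivity) is precisely the standard proof underlying the cited result.
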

\begin{corollary}\label{cor:fbalanced}
For any function $f:\mathbb{F}_{d^2}^N\to\mathbb{R}$
    \begin{equation}
        \frac{1}{|\mathcal{T}|}\sum_{D\in\mathcal{T}}\sum_{\pmb{a}\in D}f(\pmb{a}) = f(0)+ \frac{d^{N-K}-1}{N_{\mathrm{self}}}\sum_{\pmb{a},\pmb{aa}=0,\pmb{a}\neq 0}f(\pmb{a}).
    \end{equation}
\end{corollary}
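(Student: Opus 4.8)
The plan is a direct double-counting argument resting entirely on the balancedness Lemma just stated, so little extra work is needed. First I would split the left-hand side according to whether the summand vector is zero. Since $\pmb{0}$ belongs to every code $D\in\mathcal{T}$, the zero vector contributes $\frac{1}{|\mathcal{T}|}\sum_{D\in\mathcal{T}}f(\pmb{0})=f(\pmb{0})$, which is precisely the first term of the claimed identity. For the remaining contribution, coming from nonzero $\pmb{a}$, I would interchange the order of summation, writing $\sum_{D\in\mathcal{T}}\sum_{\pmb{a}\in D,\,\pmb{a}\neq 0}f(\pmb{a}) = \sum_{\pmb{a}\neq 0}f(\pmb{a})\,|\{D\in\mathcal{T}:\pmb{a}\in D\}|$. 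At this point I would note the elementary fact that any $\pmb{a}$ lying in some $D\in\mathcal{T}$ is necessarily self-orthogonal, i.e.\ $\pmb{a}\pmb{a}=0$, because $D\subseteq D^{\perp}$; thus the outer sum is automatically restricted to nonzero self-orthogonal vectors, and every non-self-orthogonal $\pmb{a}$ occurs with multiplicity zero.

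Next I would invoke the Lemma: the multiplicity $\lambda:=|\{D\in\mathcal{T}:\pmb{a}\in D\}|$ is one and the same constant for all nonzero self-orthogonal $\pmb{a}$, so it pulls out of the sum, giving $\sum_{D\in\mathcal{T}}\sum_{\pmb{a}\in D,\,\pmb{a}\neq 0}f(\pmb{a})=\lambda\sum_{\pmb{a}\pmb{a}=0,\,\pmb{a}\neq 0}f(\pmb{a})$. To evaluate $\lambda$ I would count the incidence pairs $(D,\pmb{a})$ with $\pmb{a}\in D$ and $\pmb{a}\neq 0$ in two ways. Summing over $D$ first gives $|\mathcal{T}|\,(d^{N-K}-1)$, since a code that is $[N,(N-K)/2]_{d^2}$ has $(d^2)^{(N-K)/2}=d^{N-K}$ elements and hence $d^{N-K}-1$ nonzero ones. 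Summing over $\pmb{a}$ first gives $N_{\mathrm{self}}\,\lambda$ by balancedness, where $N_{\mathrm{self}}=\frac{1}{d}(d^{2N}+(d-1)(-d)^N)-1$ is the number of nonzero self-orthogonal vectors in $\mathbb{F}_{d^2}^N$. Equating the two counts yields $\lambda=|\mathcal{T}|(d^{N-K}-1)/N_{\mathrm{self}}$.

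Finally I would substitute this back into the average over $\mathcal{T}$: dividing by $|\mathcal{T}|$ the factors of $|\mathcal{T}|$ cancel, and one is left with $\frac{1}{|\mathcal{T}|}\sum_{D\in\mathcal{T}}\sum_{\pmb{a}\in D}f(\pmb{a}) = f(\pmb{0}) + \frac{d^{N-K}-1}{N_{\mathrm{self}}}\sum_{\pmb{a}\pmb{a}=0,\,\pmb{a}\neq 0}f(\pmb{a})$, which is exactly the corollary. There is no substantive obstacle here: all the depth is already packed into the cited Lemma (the balancedness of $\mathcal{T}$) and the known closed form for $N_{\mathrm{self}}$; the only points deserving a sentence of justification are the elementary cardinality count $|D|=d^{N-K}$ and the observation that $D\subseteq D^{\perp}$ forces the sum to range only over self-orthogonal vectors, so that nothing spurious slips in.
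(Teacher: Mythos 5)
Your proof is correct and is exactly the argument the paper has in mind: the paper's entire proof is the single citation ``This follows from the basic-averaging lemma,'' and your double-counting derivation (peel off $f(\pmb{0})$, swap sums, use the Lemma's balancedness to pull out the constant multiplicity $\lambda$, and determine $\lambda = |\mathcal{T}|(d^{N-K}-1)/N_{\mathrm{self}}$ by counting incidence pairs two ways) is precisely what that lemma encapsulates. The two auxiliary observations you flag, that $|D| = (d^2)^{(N-K)/2} = d^{N-K}$ and that $D \subseteq D^{\perp}$ restricts the range to self-orthogonal vectors, are correct and are indeed the only points that need stating explicitly.
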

This follows from the basic-averaging lemma \cite{641543}. Note importantly that the sum is restricted to looking exactly at self orthogonal $a$. If we assume that $d$ is a prime number such that $d = 3\mod 4$, we can assume a basis using $\gamma^2+1=0$ and so $\gamma=i$ and $\gamma_0=0$. This means that the self orthogonal $\pmb{a}=\pmb{a}^{(1)}+i\pmb{a}^{(2)}$ will satisfy $|\pmb{a}^{(1)}|^2 + |\pmb{a}^{(2)}|^2 = 0\mod d$.

\section{Achieving the capacity for pure-loss}\label{app:loss}
\begin{lemma}
    The infidelity of an infinite-energy GKP code experiencing pure loss (transmittance $\eta$) followed by the transpose recovery with an underlying symplectically integral lattice $\mathcal{L}$ encoding a finite dimension, can be upper bounded as
    \begin{equation}
        \epsilon \leq \frac{1}{4}\sum_{\pmb{x}\in\mathcal{L}^\perp\backslash{0}}e^{-\frac{\eta}{1-\eta}|\pmb{x}|^2}.
    \end{equation}
\end{lemma}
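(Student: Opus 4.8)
The plan is to realize the near-optimal recovery as the transpose (Petz) channel $\mathcal{R}$ of $\mathcal{N}_{\mathrm{loss}}$ with respect to the maximally mixed code state, and to reduce the entanglement infidelity $\epsilon = 1-F_e$ to a theta-type sum over the dual lattice $\mathcal{L}^\perp$. The parameter $g=\tfrac{\eta}{1-\eta}$ controlling the Gaussian weight is most transparent from the amplification picture: in the Heisenberg picture $\mathcal{N}_{\mathrm{loss}}$ rescales the quadratures by $\sqrt{\eta}$ and adds vacuum noise, so that relative to the rescaled dual lattice the effective additive displacement has inverse per-quadrature variance proportional to $\tfrac{\eta}{1-\eta}$; equivalently, composing $\mathcal{N}_{\mathrm{loss}}$ with the quantum-limited amplifier $\mathcal{A}_{1/\eta}$ yields a pure additive Gaussian-displacement channel with exactly this inverse variance. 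The transpose recovery implicitly reabsorbs the deterministic $\sqrt{\eta}$ rescaling, so only this Gaussian part enters the fidelity.

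I would proceed in four steps. First, write the Kraus decomposition of $\mathcal{N}_{\mathrm{loss}}$ (as in the main text) together with its transpose channel $\mathcal{R}$, and use the standard formula for the entanglement fidelity of $\mathcal{R}\circ\mathcal{N}_{\mathrm{loss}}$ restricted to the $\det(\mathcal{L})$-dimensional GKP codespace (projector $\Pi$ proportional to $\sum_{\pmb{\xi}\in\mathcal{L}}\hat{D}(\pmb{\xi})$); this expresses $\epsilon$ through off-diagonal matrix elements of $\hat{E}_\ell^\dagger\,\mathcal{N}_{\mathrm{loss}}(\Pi)^{-1/2}\,\hat{E}_{\ell'}$ between codewords, organized by the nonzero cosets $\pmb{p}\in\mathcal{L}^\perp/\mathcal{L}$ that label the inequivalent logical errors. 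Second, evaluate those matrix elements using the lattice structure: since GKP codewords are superpositions of Zak/position eigenstates supported on $\mathcal{L}^\perp$, they reduce to Gaussian sums in the displacement variables, with the $\mathcal{N}_{\mathrm{loss}}(\Pi)^{-1/2}$ factor supplying precisely the deterministic rescaling noted above, so that the only contributions obstructing the Knill--Laflamme conditions are indexed by nonzero vectors of $\mathcal{L}^\perp$. Third, bound each coset contribution by a Gaussian tail estimate: the effective displacement has variance proportional to $g^{-1}$, and the weight associated with an offending dual vector $\pmb{x}\in\mathcal{L}^\perp\setminus\{0\}$ is at most a constant times $e^{-\pi g|\pmb{x}|^2}$, where the product structure over the $\hat{q}$ and $\hat{p}$ sectors together with the passage from fidelity amplitude to infidelity fix that constant as $\tfrac14$. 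Fourth, enlarge the sum over nonzero cosets of $\mathcal{L}^\perp/\mathcal{L}$ to the sum over all of $\mathcal{L}^\perp\setminus\{0\}$; this only over-counts and hence preserves the inequality, yielding $\epsilon\le \tfrac14\sum_{\pmb{x}\in\mathcal{L}^\perp\setminus\{0\}}e^{-\pi g|\pmb{x}|^2}$.

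The step I expect to be the main obstacle is making the transpose-channel manipulations rigorous in the infinite-energy limit, where $\Pi$ is not trace-class and $\mathcal{N}_{\mathrm{loss}}(\Pi)$ is not invertible in the naive sense: one must either define everything as a limit of finite-energy (Gaussian-enveloped) GKP codes, sending the envelope width to zero at the end, or work throughout in the subsystem/Zak-basis decomposition, and then check that the cross terms between distinct Kraus operators $\hat{E}_\ell,\hat{E}_{\ell'}$ and distinct codewords are subdominant so that the clean constant $\tfrac14$ genuinely survives. The remaining manipulations --- recognizing the resulting expression as a lattice theta function and carrying out the Gaussian and union-bound estimates --- are routine. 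Since the statement coincides with Lemma~15 in Appendix~F of Ref.~\cite{zheng2024performanceachievableratesgottesmankitaevpreskill}, one may alternatively simply invoke that reference.
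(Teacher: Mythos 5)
The paper provides no proof of this lemma; it is established purely by citation to Lemma~15 in Appendix~F of Ref.~\cite{zheng2024performanceachievableratesgottesmankitaevpreskill}, which you correctly flag as the fast route at the end of your proposal. Your surrounding sketch is a reasonable reconstruction and aligns with the main text's informal description (transpose recovery relative to the maximally mixed code state, reduction of loss to additive Gaussian displacement with inverse variance $g=\eta/(1-\eta)$, and an error sum organized over $\mathcal{L}^\perp\setminus\{0\}$), and you are right to flag the infinite-energy regularization as the technically delicate point. The one place your outline is asserted rather than argued is the provenance of the constant $\tfrac14$: you attribute it to ``product structure'' and the ``passage from fidelity amplitude to infidelity'' without carrying out any calculation, and a Gaussian-tail/union-bound heuristic by itself will not pin down that prefactor --- in the cited reference it emerges from an explicit expansion of the transpose-channel entanglement fidelity around the identity. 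Since the paper itself contains no derivation to compare against, the safe reading is that your citation route matches the paper exactly, while your sketch remains a plan rather than a completed proof. (Incidentally, the exponent printed in the lemma here has dropped a factor of $\pi$; your $e^{-\pi g|\pmb{x}|^2}$ agrees with the main-text inequality stated just before the theorem in Section~\ref{sec:lossresult}.)
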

The above lemma is taken from \cite{zheng2024performanceachievableratesgottesmankitaevpreskill}. We now proceed to prove the main result.
\begin{theorem}
    There exists a sequence of qudit stabilizer codes $[[N,K]]_d$ for $d$ being prime with $d=3\mod 4$, with increasing $N,d$ such that 
    \begin{equation}
        \log_2(d)\frac{K}{N} = \log_2\left(\frac{\eta}{1-\eta}\right) - \tilde{\epsilon}
    \end{equation}
    and $\tilde{\epsilon}$ can be made arbitrarily small simultaneously while the infidelity of this sequence of codes using transpose recovery after pure loss of transmittance $\eta$ converges to zero as $N,d\to \infty$ with $d\ln(d)\ll N\ll e^{\pi d\frac{\eta}{1-\eta}}$.
\end{theorem}
\begin{proof}
    We first begin by noting a property of the summation ($g>0$)
    \begin{equation}
        S_g = \sum_{z\in\mathbb{Z}^{2N}:|z|^2=0\mod d} e^{-\pi g|z|^2/d},
    \end{equation}
    which is equivalent to
    \begin{equation}
        S_g = \frac{1}{d}\theta_3(e^{-\pi g/d})^{2N} + \frac{2}{d}\sum_{j=1}^{\lfloor (d-1)/2\rfloor }\theta_3(e^{-\pi g/d}\omega^j)^{2N},\label{eq:Sgsum}
    \end{equation}
    where $\omega = \exp(2\pi i/d)$. The use of $\omega$ ensures that any terms that have $|z|^2\neq 0\mod d$ will cancel out. Now using Eq. \eqref{eq:dualtheta} we note that 
    \begin{equation}
    \begin{aligned}
        \theta_3(e^{-\pi g/d}\omega^j) = \theta_3(e^{-\pi \frac{d}{g-2ij}})\sqrt{\frac{d}{g-2ij}}.
    \end{aligned}
    \end{equation}
    For any lattice, it is trivial that $|\Theta_{\mathcal{L}}(re^{i\theta})|\leq \Theta_\mathcal{L}(r)$ for $r\in\mathbb{R^+}$, $\theta\in[-\pi,\pi]$. Hence we get that
    \begin{equation}
        |\theta_3(e^{-\pi g/d}\omega^j)| \leq \sqrt{\frac{d}{g}}\left(\left(\frac{g^2}{g^2+4j^2}\right)^{1/4}\theta_3(e^{-\pi \frac{d g}{g^2+4j^2}})\right).
    \end{equation}
    We will now note that in the summation in Eq. \ref{eq:Sgsum}, the central term of $j=0$ dominates by an exponential amount compared to any $j\neq 0$. To check this, we first consider the following function
    \begin{equation}
        h(t) = t^{1/4}\theta_3(e^{-\pi t}) = t^{-1/4}\theta_3(e^{-\pi/t}) = h(1/t).
    \end{equation}
    It clearly follows that $h(t)$ has a local minima at $t=1$ since its derivative equals zero, and perturbing around $t=1$ only increases this function. To see why it is increasing for $t>1$, note that 
    \begin{equation}
    \begin{aligned}
        h'(t) &= \frac{t^{-3/4}}{4} - \pi t^{1/4}e^{-\pi t}\theta'_3(e^{-\pi t})\\
        &= \frac{t^{-3/4}}{4}(1 - 4\pi te^{-\pi t}\theta_3'(e^{-\pi t})),
    \end{aligned}
    \end{equation}
    which we know is zero at $t=1$. Also
    \begin{equation}
        \theta'_3(e^{-\pi t}) = 2\sum_{n=1}^{\infty}n^2e^{-(n^2-1)\pi t},
    \end{equation}
    which shows that $t>1\implies \theta'_3(e^{-\pi t})<\theta'_3(e^{-\pi}) = e^{\pi}/4\pi$ which gives $h'(t)\geq t^{-3/4}(1-te^{1-\pi t})$. Trivially, $ te^{-\pi t}<1$ for all $t>1$. Hence it follows that for $t>1$, $h'(t)>0$. 
    
    We substitute
    \begin{equation}
        t= \max\left\{\frac{g^2+4j^2}{dg},\frac{dg}{g^2+4j^2}\right\},
    \end{equation}
    which ensures $t>1$ and since $h(t)=h(1/t)$ this gives
    \begin{equation}
    \begin{aligned}
        \frac{|\theta_3(e^{-\pi g/d}\omega^j)|}{\theta_3(e^{-\pi g/d})} \leq \left(\frac{t^{1/4}\theta_3(e^{-\pi t})}{\left(\frac{d}{g}\right)^{1/4}\theta_3(e^{-\frac{\pi d}{g}})}\right).
    \end{aligned}
    \end{equation}
    Let us suppose $j$ always lies in $1\leq 4j^2\leq d^2-d-g^2$. This already ensures that $t<d/g$ which means the above expression is strictly less than 1. In this range, the maximum value of $t^{1/4}\theta_3(e^{-\pi t})$ occurs at $4j^2$ being closest to $d^2-d-g^2$ which gives 
    \begin{equation}
    \begin{aligned}
        \frac{|\theta_3(e^{-\pi g/d}\omega^j)|}{\theta_3(e^{-\pi g/d})} &\leq \left(\left(1-\frac{1}{d}\right)^{1/4}\frac{\theta_3(e^{-\frac{\pi d}{g}(1-d^{-1})})}{\theta_3(e^{-\frac{\pi d }{g}})}\right)\\
        &\leq \left(1-\frac{0.9}{d}\right)^{1/4},\label{eq:theta3jineq}
    \end{aligned}
    \end{equation}
    when $e^{\pi d/g}\gg d$ since $\frac{\theta_3(e^{-\frac{\pi d}{g}(1-d^{-1})})}{\theta_3(e^{-\frac{\pi d }{g}})} \leq 1+\mathcal{O}(e^{-\pi d/g})$. This range already will contain all possible $j$ since it only goes up till $(d-1)/2$. We can restrict to this choice of $j$ without loss of generality since $\omega^j = \omega^{j \mod d}$ where the modulo restricts it to the range $-\frac{d-1}{2}$ to $\frac{d-1}{2}$. We now fix $g=\frac{1-\eta}{\eta}$ and using Eqs. \eqref{eq:dualtheta} and \eqref{eq:theta3jineq}, we obtain the following
    \begin{equation}
    \begin{aligned}
        S_{\frac{1-\eta}{\eta}} \leq \frac{1}{d}&d^N\left(\frac{\eta}{1-\eta}\right)^N\theta_3(e^{-\pi d(\frac{1-\eta}{\eta})})^{2N}\\
        &\times\left(1+d\left(1-\frac{0.9}{d}\right)^{\frac{N}{2}}\right).
    \end{aligned}\label{eq:Sgineq}
    \end{equation}

    Now assuming we use a transpose recovery on a $[[N,K]]_d$ concatenated square GKP code, the infidelity can be bounded by
    \begin{equation}
        4\epsilon \leq \sum_{\pmb{x}\in\mathcal{L}^\perp\backslash{0}}e^{-\pi\frac{\eta}{1-\eta}|x|^2},
    \end{equation}
    which using Eq. \eqref{eq:dualtheta} can be equivalently written as
    \begin{equation}
        4\epsilon \leq d^K\left(\frac{1-\eta}{\eta}\right)^N\Theta_\mathcal{L}(e^{-\pi\frac{1-\eta}{\eta}}) - 1,
    \end{equation}
    since the $\det(\mathcal{L})=d^K$ which is the total logical dimension.
    
    Let us consider the set $\mathcal{T}$ composed of self orthogonal codes $[N,(N-K)/2]_{d^2}$ which can be equivalently mapped to a set of stabilizer lattices $\mathcal{T}_\mathcal{L}$ composed of taking self-orthogonal codes $D$ in $\mathcal{T}$ and mapping those to $2N$ dimensional lattices. The explicit mapping is defined by
    \begin{equation}
    \begin{aligned}
        \mathcal{L}_D = \Bigg\{&\frac{\pmb{z}}{\sqrt{d}}|\pmb{z}=(\pmb{z}^{(1)},\pmb{z}^{(2)})\in\mathbb{Z}^{2N}, \\&\text{ with }((\pmb{z}^{(1)}+i\pmb{z}^{(2)})\mod d)\in D\Bigg\},
    \end{aligned}
    \end{equation}
    which will represent a valid stabilizer lattice due to the self-orthogonality condition of $D\subseteq D^\perp$. Hence $\mathcal{T}_\mathcal{L}$ is composed of all the lattices $\mathcal{L}_D$ obtained from $D\in\mathcal{T}$. Due to the balanced nature of $\mathcal{T}$ over the set of self-orthogonal elements in $F^{N}_{d^2}$ (see corollary \ref{cor:fbalanced}), we get
    \begin{equation}
    \begin{aligned}
        \frac{1}{|\mathcal{T}_\mathcal{L}|}&\sum_{\mathcal{L}\in\mathcal{T}_{\mathcal{L}}}\Theta_{\mathcal{L}}(e^{-\pi\frac{1-\eta}{\eta}})\\&< \sum_{\pmb{z}\in\mathbb{Z}^{2N}:\pmb{z}=0\mod d}e^{-\pi|\pmb{z}|^2\frac{1-\eta}{\eta}} \\&\quad+ \frac{d^{N-K}-1}{N_{\mathrm{self}}}\sum_{\pmb{z}\in\mathbb{Z}^{2N}:|\pmb{z}|^2=0\mod d}e^{-\pi|\pmb{z}|^2\frac{1-\eta}{\eta}}\\ 
        &< \theta_3(e^{-\pi d\frac{1-\eta}{\eta}})^{2N} + \frac{d^{N-K}-1}{N_{\mathrm{self}}}S_{\frac{1-\eta}{\eta}},
    \end{aligned}
    \end{equation}
    where the first term comes from counting over all lattice points in $\sqrt{d}\mathbb{Z}^{2N}$ that are equivalent to the $0$ codeword and the second term comes from counting all the points that are equivalent to some self-orthogonal codeword since we are using the basis $\{1,i\}$ for $F_{d^2}$ (a valid choice since $d= 3\mod 4$). By representing $\pmb{z}\mod d=(\pmb{a}^{(1)},\pmb{a}^{(2)})$ where the self orthogonality for $\pmb{a} = \pmb{a}^{(1)}+i\pmb{a}^{(2)}$ is equivalent to having $|\pmb{z}|^2 = 0\mod d$. There is over counting of codewords equivalent to $0$ in the RHS, hence giving the inequality Since we are evaluating an average over the whole set $\mathcal{T}_{\mathcal{L}}$, this means that there must exist some lattice $\mathcal{L}_D\subseteq\mathcal{T}_\mathcal{L}$ which satisfies
    \begin{equation}
        \Theta_{\mathcal{L}_D}(e^{-\pi\frac{1-\eta}{\eta}})< \theta_3(e^{-\pi d\frac{1-\eta}{\eta}})^{2N} + \frac{d^{N-K}-1}{N_{\mathrm{self}}}S_{\frac{1-\eta}{\eta}}.
    \end{equation}
    Note that for any $N$ and large enough $d$,
    \begin{equation}
        \frac{d^{N-K}-1}{N_{\mathrm{self}}}< d^{-N-K+1}(1 + 1.1d^{-N+1}),
    \end{equation}
    which when combined with equation Eq. \eqref{eq:Sgineq} gives the infidelity for $\mathcal{L}_D$ to satisfy
    \begin{equation}
        \begin{aligned}
            4\epsilon \leq &\left(d^{K/N}\theta_3(e^{-\pi d\frac{1-\eta}{\eta}})^2\left(\frac{1-\eta}{\eta}\right)\right)^N\\
            &+\left(1+\frac{1.1}{d^{N-1}}\right)\left(1+d\left(1-\frac{0.9}{d}\right)^{\frac{N}{2}}\right)\theta_3(e^{-\pi d\frac{\eta}{1-\eta}})^{2N}\\
            &- 1.
        \end{aligned}
    \end{equation}
    Note that if the lattice $\mathcal{L}_D$ is associated to a GKP code with the rate $R$ in per qubit, then we have $2^R = d^{K/N}$. For a certain rate to be achievable, we require that there must be a sequence of codes with certain encoding and decoding such that the infidelity in the limit of $N\to\infty$ is equal to zero. 
    
    Let us assume that we have the following guarantees in relating $N$ and $d$ which are
    \begin{align}
        N&\gg d\ln(d),\\
        N&\ll e^{\pi d\frac{\eta}{1-\eta}}.
    \end{align}
    Assuming that $d$ is then large enough, we can always find a large enough $d$ such that
    \begin{align}
        \theta_3(e^{-\pi d\frac{1-\eta}{\eta}})^{-2} &\geq 1- 4.1e^{-\pi d\frac{1-\eta}{\eta}},\\
        (1+1.1d^{-N+1})&\left(1+d\left(1-0.9d^{-1}\right)^{\frac{N}{2}}\right)\theta_3(e^{-\pi d\frac{\eta}{1-\eta}})^{2N}\nonumber\\&\leq 1+4.1Ne^{-\pi d\frac{\eta}{1-\eta}},
    \end{align}
    following which we force
    \begin{equation}
        d^{K/N}\leq \left(\frac{\eta}{1-\eta}\right)(1- 4.1e^{-\pi d\frac{1-\eta}{\eta}})(1-N^{-1+\delta}),\label{eq:dKNcap}
    \end{equation}
    where $\delta>0$. The residual factor of $(1-N^{-1+\delta})$ is to ensure that raising to the power of $N$ still makes the upper bound to the infidelity approach zero. For this example we can choose $\delta = 1/2$. This gives
    \begin{equation}
        \begin{aligned}
            4\epsilon \leq &\left(d^{K/N}\theta_3(e^{-\pi d\frac{1-\eta}{\eta}})^2\left(\frac{1-\eta}{\eta}\right)\right)^N\\
            &\quad+4.1Ne^{-\pi d\frac{\eta}{1-\eta}}\\
            \leq &1.1e^{-N^{\delta}}+ 4.1Ne^{-\pi d\frac{\eta}{1-\eta}},
        \end{aligned}
    \end{equation}
    which using the relations above can be made arbitrarily small by appropriate choice of $N$ and $d$. Specifically, we can define $N = \lfloor e^{d\frac{\eta}{1-\eta}}\rfloor$ for any given $d$. Then we can keep increasing $d$ and $N$ whilst defining $K$ such that it satisfies Eq. \eqref{eq:dKNcap} and the value of $\epsilon$ for this sequence of codes will be decreasing and tend to zero with the achievable rate being
    \begin{equation}
    \begin{aligned}
        R_d &= \log_2(d^{K/N}) \\
        &= \log_2\left(\frac{\eta}{1-\eta}\right) - \mathcal{O}(e^{-\pi d\frac{1-\eta}{\eta}})- \mathcal{O}(N^{-1+\delta}),
    \end{aligned}
    \end{equation}
    which then clearly shows that this sequence of codes achieves the rate $\log_2(\eta/(1-\eta))$ which is the capacity of the loss channel.
\end{proof}


\begin{thebibliography}{78}%
\makeatletter
\providecommand \@ifxundefined [1]{%
 \@ifx{#1\undefined}
}%
\providecommand \@ifnum [1]{%
 \ifnum #1\expandafter \@firstoftwo
 \else \expandafter \@secondoftwo
 \fi
}%
\providecommand \@ifx [1]{%
 \ifx #1\expandafter \@firstoftwo
 \else \expandafter \@secondoftwo
 \fi
}%
\providecommand \natexlab [1]{#1}%
\providecommand \enquote  [1]{``#1''}%
\providecommand \bibnamefont  [1]{#1}%
\providecommand \bibfnamefont [1]{#1}%
\providecommand \citenamefont [1]{#1}%
\providecommand \href@noop [0]{\@secondoftwo}%
\providecommand \href [0]{\begingroup \@sanitize@url \@href}%
\providecommand \@href[1]{\@@startlink{#1}\@@href}%
\providecommand \@@href[1]{\endgroup#1\@@endlink}%
\providecommand \@sanitize@url [0]{\catcode `\\12\catcode `\$12\catcode `\&12\catcode `\#12\catcode `\^12\catcode `\_12\catcode `\%12\relax}%
\providecommand \@@startlink[1]{}%
\providecommand \@@endlink[0]{}%
\providecommand \url  [0]{\begingroup\@sanitize@url \@url }%
\providecommand \@url [1]{\endgroup\@href {#1}{\urlprefix }}%
\providecommand \urlprefix  [0]{URL }%
\providecommand \Eprint [0]{\href }%
\providecommand \doibase [0]{https://doi.org/}%
\providecommand \selectlanguage [0]{\@gobble}%
\providecommand \bibinfo  [0]{\@secondoftwo}%
\providecommand \bibfield  [0]{\@secondoftwo}%
\providecommand \translation [1]{[#1]}%
\providecommand \BibitemOpen [0]{}%
\providecommand \bibitemStop [0]{}%
\providecommand \bibitemNoStop [0]{.\EOS\space}%
\providecommand \EOS [0]{\spacefactor3000\relax}%
\providecommand \BibitemShut  [1]{\csname bibitem#1\endcsname}%
\let\auto@bib@innerbib\@empty
\bibitem [{\citenamefont {Simon}(2017)}]{Simon2017}%
  \BibitemOpen
  \bibfield  {author} {\bibinfo {author} {\bibfnamefont {C.}~\bibnamefont {Simon}},\ }\bibfield  {title} {\bibinfo {title} {Towards a global quantum network},\ }\href {https://doi.org/10.1038/s41566-017-0032-0} {\bibfield  {journal} {\bibinfo  {journal} {Nature Photonics}\ }\textbf {\bibinfo {volume} {11}},\ \bibinfo {pages} {678} (\bibinfo {year} {2017})}\BibitemShut {NoStop}%
\bibitem [{\citenamefont {Leung}\ \emph {et~al.}(2010)\citenamefont {Leung}, \citenamefont {Oppenheim},\ and\ \citenamefont {Winter}}]{5485004}%
  \BibitemOpen
  \bibfield  {author} {\bibinfo {author} {\bibfnamefont {D.}~\bibnamefont {Leung}}, \bibinfo {author} {\bibfnamefont {J.}~\bibnamefont {Oppenheim}},\ and\ \bibinfo {author} {\bibfnamefont {A.}~\bibnamefont {Winter}},\ }\bibfield  {title} {\bibinfo {title} {Quantum network communication—the butterfly and beyond},\ }\href {https://doi.org/10.1109/TIT.2010.2048442} {\bibfield  {journal} {\bibinfo  {journal} {IEEE Transactions on Information Theory}\ }\textbf {\bibinfo {volume} {56}},\ \bibinfo {pages} {3478} (\bibinfo {year} {2010})}\BibitemShut {NoStop}%
\bibitem [{\citenamefont {Hayashi}\ \emph {et~al.}(2007)\citenamefont {Hayashi}, \citenamefont {Iwama}, \citenamefont {Nishimura}, \citenamefont {Raymond},\ and\ \citenamefont {Yamashita}}]{10.1007/978-3-540-70918-3_52}%
  \BibitemOpen
  \bibfield  {author} {\bibinfo {author} {\bibfnamefont {M.}~\bibnamefont {Hayashi}}, \bibinfo {author} {\bibfnamefont {K.}~\bibnamefont {Iwama}}, \bibinfo {author} {\bibfnamefont {H.}~\bibnamefont {Nishimura}}, \bibinfo {author} {\bibfnamefont {R.}~\bibnamefont {Raymond}},\ and\ \bibinfo {author} {\bibfnamefont {S.}~\bibnamefont {Yamashita}},\ }\bibfield  {title} {\bibinfo {title} {Quantum network coding},\ }in\ \href@noop {} {\emph {\bibinfo {booktitle} {STACS 2007}}},\ \bibinfo {editor} {edited by\ \bibinfo {editor} {\bibfnamefont {W.}~\bibnamefont {Thomas}}\ and\ \bibinfo {editor} {\bibfnamefont {P.}~\bibnamefont {Weil}}}\ (\bibinfo  {publisher} {Springer Berlin Heidelberg},\ \bibinfo {address} {Berlin, Heidelberg},\ \bibinfo {year} {2007})\ pp.\ \bibinfo {pages} {610--621}\BibitemShut {NoStop}%
\bibitem [{\citenamefont {Gandotra}\ \emph {et~al.}(2025)\citenamefont {Gandotra}, \citenamefont {Wang}, \citenamefont {Clerk},\ and\ \citenamefont {Jiang}}]{gandotra2025quantumcommunicationbandwidthandtimelimitedchannels}%
  \BibitemOpen
  \bibfield  {author} {\bibinfo {author} {\bibfnamefont {A.}~\bibnamefont {Gandotra}}, \bibinfo {author} {\bibfnamefont {Z.}~\bibnamefont {Wang}}, \bibinfo {author} {\bibfnamefont {A.~A.}\ \bibnamefont {Clerk}},\ and\ \bibinfo {author} {\bibfnamefont {L.}~\bibnamefont {Jiang}},\ }\href {https://arxiv.org/abs/2502.08831} {\bibinfo {title} {Quantum communication over bandwidth-and-time-limited channels}} (\bibinfo {year} {2025}),\ \Eprint {https://arxiv.org/abs/2502.08831} {arXiv:2502.08831 [quant-ph]} \BibitemShut {NoStop}%
\bibitem [{\citenamefont {Kimble}(2008)}]{Kimble2008}%
  \BibitemOpen
  \bibfield  {author} {\bibinfo {author} {\bibfnamefont {H.~J.}\ \bibnamefont {Kimble}},\ }\bibfield  {title} {\bibinfo {title} {The quantum internet},\ }\href {https://doi.org/10.1038/nature07127} {\bibfield  {journal} {\bibinfo  {journal} {Nature}\ }\textbf {\bibinfo {volume} {453}},\ \bibinfo {pages} {1023} (\bibinfo {year} {2008})}\BibitemShut {NoStop}%
\bibitem [{\citenamefont {Wang}\ \emph {et~al.}(2022)\citenamefont {Wang}, \citenamefont {Li},\ and\ \citenamefont {Jiang}}]{Wang2022}%
  \BibitemOpen
  \bibfield  {author} {\bibinfo {author} {\bibfnamefont {C.-H.}\ \bibnamefont {Wang}}, \bibinfo {author} {\bibfnamefont {F.}~\bibnamefont {Li}},\ and\ \bibinfo {author} {\bibfnamefont {L.}~\bibnamefont {Jiang}},\ }\bibfield  {title} {\bibinfo {title} {Quantum capacities of transducers},\ }\href {https://doi.org/10.1038/s41467-022-34373-8} {\bibfield  {journal} {\bibinfo  {journal} {Nature Communications}\ }\textbf {\bibinfo {volume} {13}},\ \bibinfo {pages} {6698} (\bibinfo {year} {2022})}\BibitemShut {NoStop}%
\bibitem [{\citenamefont {Barnum}\ \emph {et~al.}(1998)\citenamefont {Barnum}, \citenamefont {Nielsen},\ and\ \citenamefont {Schumacher}}]{PhysRevA.57.4153}%
  \BibitemOpen
  \bibfield  {author} {\bibinfo {author} {\bibfnamefont {H.}~\bibnamefont {Barnum}}, \bibinfo {author} {\bibfnamefont {M.~A.}\ \bibnamefont {Nielsen}},\ and\ \bibinfo {author} {\bibfnamefont {B.}~\bibnamefont {Schumacher}},\ }\bibfield  {title} {\bibinfo {title} {Information transmission through a noisy quantum channel},\ }\href {https://doi.org/10.1103/PhysRevA.57.4153} {\bibfield  {journal} {\bibinfo  {journal} {Phys. Rev. A}\ }\textbf {\bibinfo {volume} {57}},\ \bibinfo {pages} {4153} (\bibinfo {year} {1998})}\BibitemShut {NoStop}%
\bibitem [{\citenamefont {Lloyd}(1997)}]{PhysRevA.55.1613}%
  \BibitemOpen
  \bibfield  {author} {\bibinfo {author} {\bibfnamefont {S.}~\bibnamefont {Lloyd}},\ }\bibfield  {title} {\bibinfo {title} {Capacity of the noisy quantum channel},\ }\href {https://doi.org/10.1103/PhysRevA.55.1613} {\bibfield  {journal} {\bibinfo  {journal} {Phys. Rev. A}\ }\textbf {\bibinfo {volume} {55}},\ \bibinfo {pages} {1613} (\bibinfo {year} {1997})}\BibitemShut {NoStop}%
\bibitem [{\citenamefont {Schumacher}\ and\ \citenamefont {Nielsen}(1996)}]{PhysRevA.54.2629}%
  \BibitemOpen
  \bibfield  {author} {\bibinfo {author} {\bibfnamefont {B.}~\bibnamefont {Schumacher}}\ and\ \bibinfo {author} {\bibfnamefont {M.~A.}\ \bibnamefont {Nielsen}},\ }\bibfield  {title} {\bibinfo {title} {Quantum data processing and error correction},\ }\href {https://doi.org/10.1103/PhysRevA.54.2629} {\bibfield  {journal} {\bibinfo  {journal} {Phys. Rev. A}\ }\textbf {\bibinfo {volume} {54}},\ \bibinfo {pages} {2629} (\bibinfo {year} {1996})}\BibitemShut {NoStop}%
\bibitem [{\citenamefont {Holevo}\ and\ \citenamefont {Werner}(1999)}]{holevo1999evaluatingcapacitiesbosonicgaussian}%
  \BibitemOpen
  \bibfield  {author} {\bibinfo {author} {\bibfnamefont {A.~S.}\ \bibnamefont {Holevo}}\ and\ \bibinfo {author} {\bibfnamefont {R.~F.}\ \bibnamefont {Werner}},\ }\href {https://arxiv.org/abs/quant-ph/9912067} {\bibinfo {title} {Evaluating capacities of bosonic gaussian channels}} (\bibinfo {year} {1999}),\ \Eprint {https://arxiv.org/abs/quant-ph/9912067} {arXiv:quant-ph/9912067 [quant-ph]} \BibitemShut {NoStop}%
\bibitem [{\citenamefont {Harrington}\ and\ \citenamefont {Preskill}(2001)}]{PhysRevA.64.062301}%
  \BibitemOpen
  \bibfield  {author} {\bibinfo {author} {\bibfnamefont {J.}~\bibnamefont {Harrington}}\ and\ \bibinfo {author} {\bibfnamefont {J.}~\bibnamefont {Preskill}},\ }\bibfield  {title} {\bibinfo {title} {Achievable rates for the gaussian quantum channel},\ }\href {https://doi.org/10.1103/PhysRevA.64.062301} {\bibfield  {journal} {\bibinfo  {journal} {Phys. Rev. A}\ }\textbf {\bibinfo {volume} {64}},\ \bibinfo {pages} {062301} (\bibinfo {year} {2001})}\BibitemShut {NoStop}%
\bibitem [{\citenamefont {Leviant}\ \emph {et~al.}(2022)\citenamefont {Leviant}, \citenamefont {Xu}, \citenamefont {Jiang},\ and\ \citenamefont {Rosenblum}}]{Leviant2022quantumcapacity}%
  \BibitemOpen
  \bibfield  {author} {\bibinfo {author} {\bibfnamefont {P.}~\bibnamefont {Leviant}}, \bibinfo {author} {\bibfnamefont {Q.}~\bibnamefont {Xu}}, \bibinfo {author} {\bibfnamefont {L.}~\bibnamefont {Jiang}},\ and\ \bibinfo {author} {\bibfnamefont {S.}~\bibnamefont {Rosenblum}},\ }\bibfield  {title} {\bibinfo {title} {Quantum capacity and codes for the bosonic loss-dephasing channel},\ }\href {https://doi.org/10.22331/q-2022-09-29-821} {\bibfield  {journal} {\bibinfo  {journal} {{Quantum}}\ }\textbf {\bibinfo {volume} {6}},\ \bibinfo {pages} {821} (\bibinfo {year} {2022})}\BibitemShut {NoStop}%
\bibitem [{\citenamefont {Noh}\ \emph {et~al.}(2019)\citenamefont {Noh}, \citenamefont {Albert},\ and\ \citenamefont {Jiang}}]{8482307}%
  \BibitemOpen
  \bibfield  {author} {\bibinfo {author} {\bibfnamefont {K.}~\bibnamefont {Noh}}, \bibinfo {author} {\bibfnamefont {V.~V.}\ \bibnamefont {Albert}},\ and\ \bibinfo {author} {\bibfnamefont {L.}~\bibnamefont {Jiang}},\ }\bibfield  {title} {\bibinfo {title} {Quantum capacity bounds of gaussian thermal loss channels and achievable rates with gottesman-kitaev-preskill codes},\ }\href {https://doi.org/10.1109/TIT.2018.2873764} {\bibfield  {journal} {\bibinfo  {journal} {IEEE Transactions on Information Theory}\ }\textbf {\bibinfo {volume} {65}},\ \bibinfo {pages} {2563} (\bibinfo {year} {2019})}\BibitemShut {NoStop}%
\bibitem [{\citenamefont {Gottesman}\ \emph {et~al.}(2001{\natexlab{a}})\citenamefont {Gottesman}, \citenamefont {Kitaev},\ and\ \citenamefont {Preskill}}]{gottesman_encoding_2001}%
  \BibitemOpen
  \bibfield  {author} {\bibinfo {author} {\bibfnamefont {D.}~\bibnamefont {Gottesman}}, \bibinfo {author} {\bibfnamefont {A.}~\bibnamefont {Kitaev}},\ and\ \bibinfo {author} {\bibfnamefont {J.}~\bibnamefont {Preskill}},\ }\bibfield  {title} {\bibinfo {title} {Encoding a qubit in an oscillator},\ }\href {https://doi.org/10.1103/PhysRevA.64.012310} {\bibfield  {journal} {\bibinfo  {journal} {Physical Review A}\ }\textbf {\bibinfo {volume} {64}},\ \bibinfo {pages} {012310} (\bibinfo {year} {2001}{\natexlab{a}})},\ \bibinfo {note} {publisher: American Physical Society}\BibitemShut {NoStop}%
\bibitem [{\citenamefont {Royer}\ \emph {et~al.}(2022)\citenamefont {Royer}, \citenamefont {Singh},\ and\ \citenamefont {Girvin}}]{PRXQuantum.3.010335}%
  \BibitemOpen
  \bibfield  {author} {\bibinfo {author} {\bibfnamefont {B.}~\bibnamefont {Royer}}, \bibinfo {author} {\bibfnamefont {S.}~\bibnamefont {Singh}},\ and\ \bibinfo {author} {\bibfnamefont {S.}~\bibnamefont {Girvin}},\ }\bibfield  {title} {\bibinfo {title} {Encoding qubits in multimode grid states},\ }\href {https://doi.org/10.1103/PRXQuantum.3.010335} {\bibfield  {journal} {\bibinfo  {journal} {PRX Quantum}\ }\textbf {\bibinfo {volume} {3}},\ \bibinfo {pages} {010335} (\bibinfo {year} {2022})}\BibitemShut {NoStop}%
\bibitem [{\citenamefont {Conrad}\ \emph {et~al.}(2022)\citenamefont {Conrad}, \citenamefont {Eisert},\ and\ \citenamefont {Arzani}}]{Conrad2022gottesmankitaev}%
  \BibitemOpen
  \bibfield  {author} {\bibinfo {author} {\bibfnamefont {J.}~\bibnamefont {Conrad}}, \bibinfo {author} {\bibfnamefont {J.}~\bibnamefont {Eisert}},\ and\ \bibinfo {author} {\bibfnamefont {F.}~\bibnamefont {Arzani}},\ }\bibfield  {title} {\bibinfo {title} {Gottesman-{K}itaev-{P}reskill codes: {A} lattice perspective},\ }\href {https://doi.org/10.22331/q-2022-02-10-648} {\bibfield  {journal} {\bibinfo  {journal} {{Quantum}}\ }\textbf {\bibinfo {volume} {6}},\ \bibinfo {pages} {648} (\bibinfo {year} {2022})}\BibitemShut {NoStop}%
\bibitem [{\citenamefont {Zheng}\ \emph {et~al.}(2024)\citenamefont {Zheng}, \citenamefont {He}, \citenamefont {Lee}, \citenamefont {Noh},\ and\ \citenamefont {Jiang}}]{zheng2024performanceachievableratesgottesmankitaevpreskill}%
  \BibitemOpen
  \bibfield  {author} {\bibinfo {author} {\bibfnamefont {G.}~\bibnamefont {Zheng}}, \bibinfo {author} {\bibfnamefont {W.}~\bibnamefont {He}}, \bibinfo {author} {\bibfnamefont {G.}~\bibnamefont {Lee}}, \bibinfo {author} {\bibfnamefont {K.}~\bibnamefont {Noh}},\ and\ \bibinfo {author} {\bibfnamefont {L.}~\bibnamefont {Jiang}},\ }\href {https://arxiv.org/abs/2412.06715} {\bibinfo {title} {Performance and achievable rates of the gottesman-kitaev-preskill code for pure-loss and amplification channels}} (\bibinfo {year} {2024}),\ \Eprint {https://arxiv.org/abs/2412.06715} {arXiv:2412.06715 [quant-ph]} \BibitemShut {NoStop}%
\bibitem [{\citenamefont {Buser}\ and\ \citenamefont {Sarnak}(1994)}]{buser1994period}%
  \BibitemOpen
  \bibfield  {author} {\bibinfo {author} {\bibfnamefont {P.}~\bibnamefont {Buser}}\ and\ \bibinfo {author} {\bibfnamefont {P.}~\bibnamefont {Sarnak}},\ }\bibfield  {title} {\bibinfo {title} {On the period matrix of a riemann surface of large genus (with an appendix by jh conway and nja sloane)},\ }\href@noop {} {\bibfield  {journal} {\bibinfo  {journal} {Inventiones mathematicae}\ }\textbf {\bibinfo {volume} {117}},\ \bibinfo {pages} {27} (\bibinfo {year} {1994})}\BibitemShut {NoStop}%
\bibitem [{\citenamefont {Harrington}(2004)}]{harrington2004analysis}%
  \BibitemOpen
  \bibfield  {author} {\bibinfo {author} {\bibfnamefont {J.~W.}\ \bibnamefont {Harrington}},\ }\href@noop {} {\emph {\bibinfo {title} {Analysis of quantum error-correcting codes: symplectic lattice codes and toric codes}}}\ (\bibinfo  {publisher} {California Institute of Technology},\ \bibinfo {year} {2004})\BibitemShut {NoStop}%
\bibitem [{\citenamefont {Cohn}\ \emph {et~al.}(2022)\citenamefont {Cohn}, \citenamefont {Kumar}, \citenamefont {Miller}, \citenamefont {Radchenko},\ and\ \citenamefont {Viazovska}}]{cohn2022universal}%
  \BibitemOpen
  \bibfield  {author} {\bibinfo {author} {\bibfnamefont {H.}~\bibnamefont {Cohn}}, \bibinfo {author} {\bibfnamefont {A.}~\bibnamefont {Kumar}}, \bibinfo {author} {\bibfnamefont {S.}~\bibnamefont {Miller}}, \bibinfo {author} {\bibfnamefont {D.}~\bibnamefont {Radchenko}},\ and\ \bibinfo {author} {\bibfnamefont {M.}~\bibnamefont {Viazovska}},\ }\bibfield  {title} {\bibinfo {title} {Universal optimality of the e\_8 and leech lattices and interpolation formulas},\ }\href@noop {} {\bibfield  {journal} {\bibinfo  {journal} {Annals of Mathematics}\ }\textbf {\bibinfo {volume} {196}},\ \bibinfo {pages} {983} (\bibinfo {year} {2022})}\BibitemShut {NoStop}%
\bibitem [{\citenamefont {Renes}\ \emph {et~al.}(2012)\citenamefont {Renes}, \citenamefont {Dupuis},\ and\ \citenamefont {Renner}}]{PhysRevLett.109.050504}%
  \BibitemOpen
  \bibfield  {author} {\bibinfo {author} {\bibfnamefont {J.~M.}\ \bibnamefont {Renes}}, \bibinfo {author} {\bibfnamefont {F.}~\bibnamefont {Dupuis}},\ and\ \bibinfo {author} {\bibfnamefont {R.}~\bibnamefont {Renner}},\ }\bibfield  {title} {\bibinfo {title} {Efficient polar coding of quantum information},\ }\href {https://doi.org/10.1103/PhysRevLett.109.050504} {\bibfield  {journal} {\bibinfo  {journal} {Phys. Rev. Lett.}\ }\textbf {\bibinfo {volume} {109}},\ \bibinfo {pages} {050504} (\bibinfo {year} {2012})}\BibitemShut {NoStop}%
\bibitem [{\citenamefont {Raveendran}\ \emph {et~al.}(2022)\citenamefont {Raveendran}, \citenamefont {Rengaswamy}, \citenamefont {Rozp{\k{e}}dek}, \citenamefont {Raina}, \citenamefont {Jiang},\ and\ \citenamefont {Vasi{\'{c}}}}]{Raveendran2022finiterateqldpcgkp}%
  \BibitemOpen
  \bibfield  {author} {\bibinfo {author} {\bibfnamefont {N.}~\bibnamefont {Raveendran}}, \bibinfo {author} {\bibfnamefont {N.}~\bibnamefont {Rengaswamy}}, \bibinfo {author} {\bibfnamefont {F.}~\bibnamefont {Rozp{\k{e}}dek}}, \bibinfo {author} {\bibfnamefont {A.}~\bibnamefont {Raina}}, \bibinfo {author} {\bibfnamefont {L.}~\bibnamefont {Jiang}},\ and\ \bibinfo {author} {\bibfnamefont {B.}~\bibnamefont {Vasi{\'{c}}}},\ }\bibfield  {title} {\bibinfo {title} {Finite {R}ate {QLDPC}-{GKP} {C}oding {S}cheme that {S}urpasses the {CSS} {H}amming {B}ound},\ }\href {https://doi.org/10.22331/q-2022-07-20-767} {\bibfield  {journal} {\bibinfo  {journal} {{Quantum}}\ }\textbf {\bibinfo {volume} {6}},\ \bibinfo {pages} {767} (\bibinfo {year} {2022})}\BibitemShut {NoStop}%
\bibitem [{\citenamefont {Fukui}\ \emph {et~al.}(2017)\citenamefont {Fukui}, \citenamefont {Tomita},\ and\ \citenamefont {Okamoto}}]{PhysRevLett.119.180507}%
  \BibitemOpen
  \bibfield  {author} {\bibinfo {author} {\bibfnamefont {K.}~\bibnamefont {Fukui}}, \bibinfo {author} {\bibfnamefont {A.}~\bibnamefont {Tomita}},\ and\ \bibinfo {author} {\bibfnamefont {A.}~\bibnamefont {Okamoto}},\ }\bibfield  {title} {\bibinfo {title} {Analog quantum error correction with encoding a qubit into an oscillator},\ }\href {https://doi.org/10.1103/PhysRevLett.119.180507} {\bibfield  {journal} {\bibinfo  {journal} {Phys. Rev. Lett.}\ }\textbf {\bibinfo {volume} {119}},\ \bibinfo {pages} {180507} (\bibinfo {year} {2017})}\BibitemShut {NoStop}%
\bibitem [{\citenamefont {Ashikhmin}\ and\ \citenamefont {Knill}(2001)}]{959288}%
  \BibitemOpen
  \bibfield  {author} {\bibinfo {author} {\bibfnamefont {A.}~\bibnamefont {Ashikhmin}}\ and\ \bibinfo {author} {\bibfnamefont {E.}~\bibnamefont {Knill}},\ }\bibfield  {title} {\bibinfo {title} {Nonbinary quantum stabilizer codes},\ }\href {https://doi.org/10.1109/18.959288} {\bibfield  {journal} {\bibinfo  {journal} {IEEE Transactions on Information Theory}\ }\textbf {\bibinfo {volume} {47}},\ \bibinfo {pages} {3065} (\bibinfo {year} {2001})}\BibitemShut {NoStop}%
\bibitem [{\citenamefont {Sarkar}\ and\ \citenamefont {Yoder}(2024)}]{Sarkar2024quditpauligroupnon}%
  \BibitemOpen
  \bibfield  {author} {\bibinfo {author} {\bibfnamefont {R.}~\bibnamefont {Sarkar}}\ and\ \bibinfo {author} {\bibfnamefont {T.~J.}\ \bibnamefont {Yoder}},\ }\bibfield  {title} {\bibinfo {title} {The qudit {P}auli group: non-commuting pairs, non-commuting sets, and structure theorems},\ }\href {https://doi.org/10.22331/q-2024-04-04-1307} {\bibfield  {journal} {\bibinfo  {journal} {{Quantum}}\ }\textbf {\bibinfo {volume} {8}},\ \bibinfo {pages} {1307} (\bibinfo {year} {2024})}\BibitemShut {NoStop}%
\bibitem [{Note1()}]{Note1}%
  \BibitemOpen
  \bibinfo {note} {For $d$ being a positive integer power of a prime number, a Galois field $GF(d)$ is a finite field with exactly $d$ elements which supports addition, subtraction, multiplication and division (except by $0$). For a quick intro to Galois-Fields, we refer the reader to \cite {blahut2003algebraic}}\BibitemShut {NoStop}%
\bibitem [{\citenamefont {Pantaleoni}\ \emph {et~al.}(2023)\citenamefont {Pantaleoni}, \citenamefont {Baragiola},\ and\ \citenamefont {Menicucci}}]{PhysRevA.107.062611}%
  \BibitemOpen
  \bibfield  {author} {\bibinfo {author} {\bibfnamefont {G.}~\bibnamefont {Pantaleoni}}, \bibinfo {author} {\bibfnamefont {B.~Q.}\ \bibnamefont {Baragiola}},\ and\ \bibinfo {author} {\bibfnamefont {N.~C.}\ \bibnamefont {Menicucci}},\ }\bibfield  {title} {\bibinfo {title} {Zak transform as a framework for quantum computation with the gottesman-kitaev-preskill code},\ }\href {https://doi.org/10.1103/PhysRevA.107.062611} {\bibfield  {journal} {\bibinfo  {journal} {Phys. Rev. A}\ }\textbf {\bibinfo {volume} {107}},\ \bibinfo {pages} {062611} (\bibinfo {year} {2023})}\BibitemShut {NoStop}%
\bibitem [{\citenamefont {Shaw}\ \emph {et~al.}(2024{\natexlab{a}})\citenamefont {Shaw}, \citenamefont {Doherty},\ and\ \citenamefont {Grimsmo}}]{PRXQuantum.5.010331}%
  \BibitemOpen
  \bibfield  {author} {\bibinfo {author} {\bibfnamefont {M.~H.}\ \bibnamefont {Shaw}}, \bibinfo {author} {\bibfnamefont {A.~C.}\ \bibnamefont {Doherty}},\ and\ \bibinfo {author} {\bibfnamefont {A.~L.}\ \bibnamefont {Grimsmo}},\ }\bibfield  {title} {\bibinfo {title} {Stabilizer subsystem decompositions for single- and multimode gottesman-kitaev-preskill codes},\ }\href {https://doi.org/10.1103/PRXQuantum.5.010331} {\bibfield  {journal} {\bibinfo  {journal} {PRX Quantum}\ }\textbf {\bibinfo {volume} {5}},\ \bibinfo {pages} {010331} (\bibinfo {year} {2024}{\natexlab{a}})}\BibitemShut {NoStop}%
\bibitem [{\citenamefont {Gottesman}\ \emph {et~al.}(2001{\natexlab{b}})\citenamefont {Gottesman}, \citenamefont {Kitaev},\ and\ \citenamefont {Preskill}}]{PhysRevA.64.012310}%
  \BibitemOpen
  \bibfield  {author} {\bibinfo {author} {\bibfnamefont {D.}~\bibnamefont {Gottesman}}, \bibinfo {author} {\bibfnamefont {A.}~\bibnamefont {Kitaev}},\ and\ \bibinfo {author} {\bibfnamefont {J.}~\bibnamefont {Preskill}},\ }\bibfield  {title} {\bibinfo {title} {Encoding a qubit in an oscillator},\ }\href {https://doi.org/10.1103/PhysRevA.64.012310} {\bibfield  {journal} {\bibinfo  {journal} {Phys. Rev. A}\ }\textbf {\bibinfo {volume} {64}},\ \bibinfo {pages} {012310} (\bibinfo {year} {2001}{\natexlab{b}})}\BibitemShut {NoStop}%
\bibitem [{\citenamefont {Walshe}\ \emph {et~al.}(2020)\citenamefont {Walshe}, \citenamefont {Baragiola}, \citenamefont {Alexander},\ and\ \citenamefont {Menicucci}}]{PhysRevA.102.062411}%
  \BibitemOpen
  \bibfield  {author} {\bibinfo {author} {\bibfnamefont {B.~W.}\ \bibnamefont {Walshe}}, \bibinfo {author} {\bibfnamefont {B.~Q.}\ \bibnamefont {Baragiola}}, \bibinfo {author} {\bibfnamefont {R.~N.}\ \bibnamefont {Alexander}},\ and\ \bibinfo {author} {\bibfnamefont {N.~C.}\ \bibnamefont {Menicucci}},\ }\bibfield  {title} {\bibinfo {title} {Continuous-variable gate teleportation and bosonic-code error correction},\ }\href {https://doi.org/10.1103/PhysRevA.102.062411} {\bibfield  {journal} {\bibinfo  {journal} {Phys. Rev. A}\ }\textbf {\bibinfo {volume} {102}},\ \bibinfo {pages} {062411} (\bibinfo {year} {2020})}\BibitemShut {NoStop}%
\bibitem [{\citenamefont {Hastrup}\ and\ \citenamefont {Andersen}(2023)}]{PhysRevA.108.052413}%
  \BibitemOpen
  \bibfield  {author} {\bibinfo {author} {\bibfnamefont {J.}~\bibnamefont {Hastrup}}\ and\ \bibinfo {author} {\bibfnamefont {U.~L.}\ \bibnamefont {Andersen}},\ }\bibfield  {title} {\bibinfo {title} {Analysis of loss correction with the gottesman-kitaev-preskill code},\ }\href {https://doi.org/10.1103/PhysRevA.108.052413} {\bibfield  {journal} {\bibinfo  {journal} {Phys. Rev. A}\ }\textbf {\bibinfo {volume} {108}},\ \bibinfo {pages} {052413} (\bibinfo {year} {2023})}\BibitemShut {NoStop}%
\bibitem [{\citenamefont {Noh}\ and\ \citenamefont {Chamberland}(2020)}]{PhysRevA.101.012316}%
  \BibitemOpen
  \bibfield  {author} {\bibinfo {author} {\bibfnamefont {K.}~\bibnamefont {Noh}}\ and\ \bibinfo {author} {\bibfnamefont {C.}~\bibnamefont {Chamberland}},\ }\bibfield  {title} {\bibinfo {title} {Fault-tolerant bosonic quantum error correction with the surface--gottesman-kitaev-preskill code},\ }\href {https://doi.org/10.1103/PhysRevA.101.012316} {\bibfield  {journal} {\bibinfo  {journal} {Phys. Rev. A}\ }\textbf {\bibinfo {volume} {101}},\ \bibinfo {pages} {012316} (\bibinfo {year} {2020})}\BibitemShut {NoStop}%
\bibitem [{\citenamefont {Noh}\ \emph {et~al.}(2022)\citenamefont {Noh}, \citenamefont {Chamberland},\ and\ \citenamefont {Brand\~ao}}]{PRXQuantum.3.010315}%
  \BibitemOpen
  \bibfield  {author} {\bibinfo {author} {\bibfnamefont {K.}~\bibnamefont {Noh}}, \bibinfo {author} {\bibfnamefont {C.}~\bibnamefont {Chamberland}},\ and\ \bibinfo {author} {\bibfnamefont {F.~G.}\ \bibnamefont {Brand\~ao}},\ }\bibfield  {title} {\bibinfo {title} {Low-overhead fault-tolerant quantum error correction with the surface-gkp code},\ }\href {https://doi.org/10.1103/PRXQuantum.3.010315} {\bibfield  {journal} {\bibinfo  {journal} {PRX Quantum}\ }\textbf {\bibinfo {volume} {3}},\ \bibinfo {pages} {010315} (\bibinfo {year} {2022})}\BibitemShut {NoStop}%
\bibitem [{\citenamefont {Terhal}\ and\ \citenamefont {Weigand}(2016)}]{PhysRevA.93.012315}%
  \BibitemOpen
  \bibfield  {author} {\bibinfo {author} {\bibfnamefont {B.~M.}\ \bibnamefont {Terhal}}\ and\ \bibinfo {author} {\bibfnamefont {D.}~\bibnamefont {Weigand}},\ }\bibfield  {title} {\bibinfo {title} {Encoding a qubit into a cavity mode in circuit qed using phase estimation},\ }\href {https://doi.org/10.1103/PhysRevA.93.012315} {\bibfield  {journal} {\bibinfo  {journal} {Phys. Rev. A}\ }\textbf {\bibinfo {volume} {93}},\ \bibinfo {pages} {012315} (\bibinfo {year} {2016})}\BibitemShut {NoStop}%
\bibitem [{\citenamefont {Rozpedek}\ \emph {et~al.}(2021)\citenamefont {Rozpedek}, \citenamefont {Noh}, \citenamefont {Xu}, \citenamefont {Guha},\ and\ \citenamefont {Jiang}}]{rozpkedek2021quantum}%
  \BibitemOpen
  \bibfield  {author} {\bibinfo {author} {\bibfnamefont {F.}~\bibnamefont {Rozpedek}}, \bibinfo {author} {\bibfnamefont {K.}~\bibnamefont {Noh}}, \bibinfo {author} {\bibfnamefont {Q.}~\bibnamefont {Xu}}, \bibinfo {author} {\bibfnamefont {S.}~\bibnamefont {Guha}},\ and\ \bibinfo {author} {\bibfnamefont {L.}~\bibnamefont {Jiang}},\ }\bibfield  {title} {\bibinfo {title} {Quantum repeaters based on concatenated bosonic and discrete-variable quantum codes},\ }\href@noop {} {\bibfield  {journal} {\bibinfo  {journal} {npj Quantum Information}\ }\textbf {\bibinfo {volume} {7}},\ \bibinfo {pages} {102} (\bibinfo {year} {2021})}\BibitemShut {NoStop}%
\bibitem [{\citenamefont {Jafarzadeh}\ \emph {et~al.}(2025)\citenamefont {Jafarzadeh}, \citenamefont {Conrad}, \citenamefont {Alexander},\ and\ \citenamefont {Baragiola}}]{jafarzadeh2025logicalchannelsapproximategottesmankitaevpreskill}%
  \BibitemOpen
  \bibfield  {author} {\bibinfo {author} {\bibfnamefont {M.}~\bibnamefont {Jafarzadeh}}, \bibinfo {author} {\bibfnamefont {J.}~\bibnamefont {Conrad}}, \bibinfo {author} {\bibfnamefont {R.~N.}\ \bibnamefont {Alexander}},\ and\ \bibinfo {author} {\bibfnamefont {B.~Q.}\ \bibnamefont {Baragiola}},\ }\href {https://arxiv.org/abs/2504.13383} {\bibinfo {title} {Logical channels in approximate gottesman-kitaev-preskill error correction}} (\bibinfo {year} {2025}),\ \Eprint {https://arxiv.org/abs/2504.13383} {arXiv:2504.13383 [quant-ph]} \BibitemShut {NoStop}%
\bibitem [{\citenamefont {Shaw}\ \emph {et~al.}(2024{\natexlab{b}})\citenamefont {Shaw}, \citenamefont {Doherty},\ and\ \citenamefont {Grimsmo}}]{shaw2024logicalgatesreadoutsuperconducting}%
  \BibitemOpen
  \bibfield  {author} {\bibinfo {author} {\bibfnamefont {M.~H.}\ \bibnamefont {Shaw}}, \bibinfo {author} {\bibfnamefont {A.~C.}\ \bibnamefont {Doherty}},\ and\ \bibinfo {author} {\bibfnamefont {A.~L.}\ \bibnamefont {Grimsmo}},\ }\href {https://arxiv.org/abs/2403.02396} {\bibinfo {title} {Logical gates and read-out of superconducting gottesman-kitaev-preskill qubits}} (\bibinfo {year} {2024}{\natexlab{b}}),\ \Eprint {https://arxiv.org/abs/2403.02396} {arXiv:2403.02396 [quant-ph]} \BibitemShut {NoStop}%
\bibitem [{\citenamefont {Tomamichel}\ \emph {et~al.}(2016)\citenamefont {Tomamichel}, \citenamefont {Berta},\ and\ \citenamefont {Renes}}]{tomamichel2016quantum}%
  \BibitemOpen
  \bibfield  {author} {\bibinfo {author} {\bibfnamefont {M.}~\bibnamefont {Tomamichel}}, \bibinfo {author} {\bibfnamefont {M.}~\bibnamefont {Berta}},\ and\ \bibinfo {author} {\bibfnamefont {J.~M.}\ \bibnamefont {Renes}},\ }\bibfield  {title} {\bibinfo {title} {Quantum coding with finite resources},\ }\href@noop {} {\bibfield  {journal} {\bibinfo  {journal} {Nature communications}\ }\textbf {\bibinfo {volume} {7}},\ \bibinfo {pages} {11419} (\bibinfo {year} {2016})}\BibitemShut {NoStop}%
\bibitem [{\citenamefont {Nielsen}(2002)}]{nielsen2002simple}%
  \BibitemOpen
  \bibfield  {author} {\bibinfo {author} {\bibfnamefont {M.~A.}\ \bibnamefont {Nielsen}},\ }\bibfield  {title} {\bibinfo {title} {A simple formula for the average gate fidelity of a quantum dynamical operation},\ }\href@noop {} {\bibfield  {journal} {\bibinfo  {journal} {Physics Letters A}\ }\textbf {\bibinfo {volume} {303}},\ \bibinfo {pages} {249} (\bibinfo {year} {2002})}\BibitemShut {NoStop}%
\bibitem [{\citenamefont {Horodecki}\ \emph {et~al.}(1999)\citenamefont {Horodecki}, \citenamefont {Horodecki},\ and\ \citenamefont {Horodecki}}]{PhysRevA.60.1888}%
  \BibitemOpen
  \bibfield  {author} {\bibinfo {author} {\bibfnamefont {M.}~\bibnamefont {Horodecki}}, \bibinfo {author} {\bibfnamefont {P.}~\bibnamefont {Horodecki}},\ and\ \bibinfo {author} {\bibfnamefont {R.}~\bibnamefont {Horodecki}},\ }\bibfield  {title} {\bibinfo {title} {General teleportation channel, singlet fraction, and quasidistillation},\ }\href {https://doi.org/10.1103/PhysRevA.60.1888} {\bibfield  {journal} {\bibinfo  {journal} {Phys. Rev. A}\ }\textbf {\bibinfo {volume} {60}},\ \bibinfo {pages} {1888} (\bibinfo {year} {1999})}\BibitemShut {NoStop}%
\bibitem [{\citenamefont {Bennett}\ \emph {et~al.}(1996)\citenamefont {Bennett}, \citenamefont {DiVincenzo}, \citenamefont {Smolin},\ and\ \citenamefont {Wootters}}]{PhysRevA.54.3824}%
  \BibitemOpen
  \bibfield  {author} {\bibinfo {author} {\bibfnamefont {C.~H.}\ \bibnamefont {Bennett}}, \bibinfo {author} {\bibfnamefont {D.~P.}\ \bibnamefont {DiVincenzo}}, \bibinfo {author} {\bibfnamefont {J.~A.}\ \bibnamefont {Smolin}},\ and\ \bibinfo {author} {\bibfnamefont {W.~K.}\ \bibnamefont {Wootters}},\ }\bibfield  {title} {\bibinfo {title} {Mixed-state entanglement and quantum error correction},\ }\href {https://doi.org/10.1103/PhysRevA.54.3824} {\bibfield  {journal} {\bibinfo  {journal} {Phys. Rev. A}\ }\textbf {\bibinfo {volume} {54}},\ \bibinfo {pages} {3824} (\bibinfo {year} {1996})}\BibitemShut {NoStop}%
\bibitem [{\citenamefont {Calderbank}\ and\ \citenamefont {Shor}(1996)}]{PhysRevA.54.1098}%
  \BibitemOpen
  \bibfield  {author} {\bibinfo {author} {\bibfnamefont {A.~R.}\ \bibnamefont {Calderbank}}\ and\ \bibinfo {author} {\bibfnamefont {P.~W.}\ \bibnamefont {Shor}},\ }\bibfield  {title} {\bibinfo {title} {Good quantum error-correcting codes exist},\ }\href {https://doi.org/10.1103/PhysRevA.54.1098} {\bibfield  {journal} {\bibinfo  {journal} {Phys. Rev. A}\ }\textbf {\bibinfo {volume} {54}},\ \bibinfo {pages} {1098} (\bibinfo {year} {1996})}\BibitemShut {NoStop}%
\bibitem [{\citenamefont {Fern}\ and\ \citenamefont {Whaley}(2008)}]{PhysRevA.78.062335}%
  \BibitemOpen
  \bibfield  {author} {\bibinfo {author} {\bibfnamefont {J.}~\bibnamefont {Fern}}\ and\ \bibinfo {author} {\bibfnamefont {K.~B.}\ \bibnamefont {Whaley}},\ }\bibfield  {title} {\bibinfo {title} {Lower bounds on the nonzero capacity of pauli channels},\ }\href {https://doi.org/10.1103/PhysRevA.78.062335} {\bibfield  {journal} {\bibinfo  {journal} {Phys. Rev. A}\ }\textbf {\bibinfo {volume} {78}},\ \bibinfo {pages} {062335} (\bibinfo {year} {2008})}\BibitemShut {NoStop}%
\bibitem [{\citenamefont {Lin}\ and\ \citenamefont {Noh}(2024)}]{lin2024exploringquantumcapacitygaussian}%
  \BibitemOpen
  \bibfield  {author} {\bibinfo {author} {\bibfnamefont {M.}~\bibnamefont {Lin}}\ and\ \bibinfo {author} {\bibfnamefont {K.}~\bibnamefont {Noh}},\ }\href {https://arxiv.org/abs/2411.04277} {\bibinfo {title} {Exploring the quantum capacity of a gaussian random displacement channel using gottesman-kitaev-preskill codes and maximum likelihood decoding}} (\bibinfo {year} {2024}),\ \Eprint {https://arxiv.org/abs/2411.04277} {arXiv:2411.04277 [quant-ph]} \BibitemShut {NoStop}%
\bibitem [{\citenamefont {Şaşoğlu}\ \emph {et~al.}(2009)\citenamefont {Şaşoğlu}, \citenamefont {Telatar},\ and\ \citenamefont {Arikan}}]{sasoglu2009polarizationarbitrarydiscretememoryless}%
  \BibitemOpen
  \bibfield  {author} {\bibinfo {author} {\bibfnamefont {E.}~\bibnamefont {Şaşoğlu}}, \bibinfo {author} {\bibfnamefont {E.}~\bibnamefont {Telatar}},\ and\ \bibinfo {author} {\bibfnamefont {E.}~\bibnamefont {Arikan}},\ }\bibfield  {title} {\bibinfo {title} {Polarization for arbitrary discrete memoryless channels},\ }in\ \href {https://doi.org/10.1109/ITW.2009.5351487} {\emph {\bibinfo {booktitle} {2009 IEEE Information Theory Workshop}}}\ (\bibinfo {year} {2009})\ pp.\ \bibinfo {pages} {144--148}\BibitemShut {NoStop}%
\bibitem [{\citenamefont {Arikan}(2010)}]{Arikan_2010}%
  \BibitemOpen
  \bibfield  {author} {\bibinfo {author} {\bibfnamefont {E.}~\bibnamefont {Arikan}},\ }\bibfield  {title} {\bibinfo {title} {Source polarization},\ }in\ \href {https://doi.org/10.1109/isit.2010.5513567} {\emph {\bibinfo {booktitle} {2010 IEEE International Symposium on Information Theory}}}\ (\bibinfo  {publisher} {IEEE},\ \bibinfo {year} {2010})\ p.\ \bibinfo {pages} {899–903}\BibitemShut {NoStop}%
\bibitem [{\citenamefont {Arikan}(2009)}]{5075875}%
  \BibitemOpen
  \bibfield  {author} {\bibinfo {author} {\bibfnamefont {E.}~\bibnamefont {Arikan}},\ }\bibfield  {title} {\bibinfo {title} {Channel polarization: A method for constructing capacity-achieving codes for symmetric binary-input memoryless channels},\ }\href {https://doi.org/10.1109/TIT.2009.2021379} {\bibfield  {journal} {\bibinfo  {journal} {IEEE Transactions on Information Theory}\ }\textbf {\bibinfo {volume} {55}},\ \bibinfo {pages} {3051} (\bibinfo {year} {2009})}\BibitemShut {NoStop}%
\bibitem [{\citenamefont {Hassani}\ \emph {et~al.}(2013)\citenamefont {Hassani}, \citenamefont {Mori}, \citenamefont {Tanaka},\ and\ \citenamefont {Urbanke}}]{6357295}%
  \BibitemOpen
  \bibfield  {author} {\bibinfo {author} {\bibfnamefont {S.~H.}\ \bibnamefont {Hassani}}, \bibinfo {author} {\bibfnamefont {R.}~\bibnamefont {Mori}}, \bibinfo {author} {\bibfnamefont {T.}~\bibnamefont {Tanaka}},\ and\ \bibinfo {author} {\bibfnamefont {R.~L.}\ \bibnamefont {Urbanke}},\ }\bibfield  {title} {\bibinfo {title} {Rate-dependent analysis of the asymptotic behavior of channel polarization},\ }\href {https://doi.org/10.1109/TIT.2012.2228295} {\bibfield  {journal} {\bibinfo  {journal} {IEEE Transactions on Information Theory}\ }\textbf {\bibinfo {volume} {59}},\ \bibinfo {pages} {2267} (\bibinfo {year} {2013})}\BibitemShut {NoStop}%
\bibitem [{\citenamefont {Mori}\ and\ \citenamefont {Tanaka}(2010)}]{5513568}%
  \BibitemOpen
  \bibfield  {author} {\bibinfo {author} {\bibfnamefont {R.}~\bibnamefont {Mori}}\ and\ \bibinfo {author} {\bibfnamefont {T.}~\bibnamefont {Tanaka}},\ }\bibfield  {title} {\bibinfo {title} {Channel polarization on q-ary discrete memoryless channels by arbitrary kernels},\ }in\ \href {https://doi.org/10.1109/ISIT.2010.5513568} {\emph {\bibinfo {booktitle} {2010 IEEE International Symposium on Information Theory}}}\ (\bibinfo {year} {2010})\ pp.\ \bibinfo {pages} {894--898}\BibitemShut {NoStop}%
\bibitem [{\citenamefont {Wilde}\ and\ \citenamefont {Guha}(2013)}]{6302198}%
  \BibitemOpen
  \bibfield  {author} {\bibinfo {author} {\bibfnamefont {M.~M.}\ \bibnamefont {Wilde}}\ and\ \bibinfo {author} {\bibfnamefont {S.}~\bibnamefont {Guha}},\ }\bibfield  {title} {\bibinfo {title} {Polar codes for classical-quantum channels},\ }\href {https://doi.org/10.1109/TIT.2012.2218792} {\bibfield  {journal} {\bibinfo  {journal} {IEEE Transactions on Information Theory}\ }\textbf {\bibinfo {volume} {59}},\ \bibinfo {pages} {1175} (\bibinfo {year} {2013})}\BibitemShut {NoStop}%
\bibitem [{\citenamefont {Dupuis}\ \emph {et~al.}(2019)\citenamefont {Dupuis}, \citenamefont {Goswami}, \citenamefont {Mhalla},\ and\ \citenamefont {Savin}}]{8989387}%
  \BibitemOpen
  \bibfield  {author} {\bibinfo {author} {\bibfnamefont {F.}~\bibnamefont {Dupuis}}, \bibinfo {author} {\bibfnamefont {A.}~\bibnamefont {Goswami}}, \bibinfo {author} {\bibfnamefont {M.}~\bibnamefont {Mhalla}},\ and\ \bibinfo {author} {\bibfnamefont {V.}~\bibnamefont {Savin}},\ }\bibfield  {title} {\bibinfo {title} {Purely quantum polar codes},\ }in\ \href {https://doi.org/10.1109/ITW44776.2019.8989387} {\emph {\bibinfo {booktitle} {2019 IEEE Information Theory Workshop (ITW)}}}\ (\bibinfo {year} {2019})\ pp.\ \bibinfo {pages} {1--5}\BibitemShut {NoStop}%
\bibitem [{\citenamefont {Dupuis}\ \emph {et~al.}(2021)\citenamefont {Dupuis}, \citenamefont {Goswami}, \citenamefont {Mhalla},\ and\ \citenamefont {Savin}}]{9366784}%
  \BibitemOpen
  \bibfield  {author} {\bibinfo {author} {\bibfnamefont {F.}~\bibnamefont {Dupuis}}, \bibinfo {author} {\bibfnamefont {A.}~\bibnamefont {Goswami}}, \bibinfo {author} {\bibfnamefont {M.}~\bibnamefont {Mhalla}},\ and\ \bibinfo {author} {\bibfnamefont {V.}~\bibnamefont {Savin}},\ }\bibfield  {title} {\bibinfo {title} {Polarization of quantum channels using clifford-based channel combining},\ }\href {https://doi.org/10.1109/TIT.2021.3063093} {\bibfield  {journal} {\bibinfo  {journal} {IEEE Transactions on Information Theory}\ }\textbf {\bibinfo {volume} {67}},\ \bibinfo {pages} {2857} (\bibinfo {year} {2021})}\BibitemShut {NoStop}%
\bibitem [{\citenamefont {Goswami}\ \emph {et~al.}(2021)\citenamefont {Goswami}, \citenamefont {Mhalla},\ and\ \citenamefont {Savin}}]{9517845}%
  \BibitemOpen
  \bibfield  {author} {\bibinfo {author} {\bibfnamefont {A.}~\bibnamefont {Goswami}}, \bibinfo {author} {\bibfnamefont {M.}~\bibnamefont {Mhalla}},\ and\ \bibinfo {author} {\bibfnamefont {V.}~\bibnamefont {Savin}},\ }\bibfield  {title} {\bibinfo {title} {Quantum polarization of qudit channels},\ }in\ \href {https://doi.org/10.1109/ISIT45174.2021.9517845} {\emph {\bibinfo {booktitle} {2021 IEEE International Symposium on Information Theory (ISIT)}}}\ (\bibinfo {year} {2021})\ pp.\ \bibinfo {pages} {1487--1492}\BibitemShut {NoStop}%
\bibitem [{\citenamefont {Goswami}(2021)}]{goswami2021quantum}%
  \BibitemOpen
  \bibfield  {author} {\bibinfo {author} {\bibfnamefont {A.}~\bibnamefont {Goswami}},\ }\emph {\bibinfo {title} {Quantum Polar Codes}},\ \href@noop {} {Ph.D. thesis},\ \bibinfo  {school} {Universit{\'e} Grenoble Alpes [2020-....]} (\bibinfo {year} {2021})\BibitemShut {NoStop}%
\bibitem [{\citenamefont {Brun}\ and\ \citenamefont {Hsieh}(2016)}]{brun2016entanglementassistedquantumerrorcorrectingcodes}%
  \BibitemOpen
  \bibfield  {author} {\bibinfo {author} {\bibfnamefont {T.}~\bibnamefont {Brun}}\ and\ \bibinfo {author} {\bibfnamefont {M.-H.}\ \bibnamefont {Hsieh}},\ }\href {https://arxiv.org/abs/1610.04013} {\bibinfo {title} {Entanglement-assisted quantum error-correcting codes}} (\bibinfo {year} {2016}),\ \Eprint {https://arxiv.org/abs/1610.04013} {arXiv:1610.04013 [quant-ph]} \BibitemShut {NoStop}%
\bibitem [{\citenamefont {Bowen}(2002)}]{PhysRevA.66.052313}%
  \BibitemOpen
  \bibfield  {author} {\bibinfo {author} {\bibfnamefont {G.}~\bibnamefont {Bowen}},\ }\bibfield  {title} {\bibinfo {title} {Entanglement required in achieving entanglement-assisted channel capacities},\ }\href {https://doi.org/10.1103/PhysRevA.66.052313} {\bibfield  {journal} {\bibinfo  {journal} {Phys. Rev. A}\ }\textbf {\bibinfo {volume} {66}},\ \bibinfo {pages} {052313} (\bibinfo {year} {2002})}\BibitemShut {NoStop}%
\bibitem [{\citenamefont {Bennett}\ \emph {et~al.}(2002)\citenamefont {Bennett}, \citenamefont {Shor}, \citenamefont {Smolin},\ and\ \citenamefont {Thapliyal}}]{bennett2002entanglementassistedcapacityquantumchannel}%
  \BibitemOpen
  \bibfield  {author} {\bibinfo {author} {\bibfnamefont {C.~H.}\ \bibnamefont {Bennett}}, \bibinfo {author} {\bibfnamefont {P.~W.}\ \bibnamefont {Shor}}, \bibinfo {author} {\bibfnamefont {J.~A.}\ \bibnamefont {Smolin}},\ and\ \bibinfo {author} {\bibfnamefont {A.~V.}\ \bibnamefont {Thapliyal}},\ }\href {https://arxiv.org/abs/quant-ph/0106052} {\bibinfo {title} {Entanglement-assisted capacity of a quantum channel and the reverse shannon theorem}} (\bibinfo {year} {2002}),\ \Eprint {https://arxiv.org/abs/quant-ph/0106052} {arXiv:quant-ph/0106052 [quant-ph]} \BibitemShut {NoStop}%
\bibitem [{\citenamefont {Oh}\ \emph {et~al.}(2018)\citenamefont {Oh}, \citenamefont {Kim},\ and\ \citenamefont {Ha}}]{8539599}%
  \BibitemOpen
  \bibfield  {author} {\bibinfo {author} {\bibfnamefont {K.}~\bibnamefont {Oh}}, \bibinfo {author} {\bibfnamefont {D.}~\bibnamefont {Kim}},\ and\ \bibinfo {author} {\bibfnamefont {J.}~\bibnamefont {Ha}},\ }\bibfield  {title} {\bibinfo {title} {Increasing minimum distance of polar codes with outer parity-check codes},\ }in\ \href {https://doi.org/10.1109/ICTC.2018.8539599} {\emph {\bibinfo {booktitle} {2018 International Conference on Information and Communication Technology Convergence (ICTC)}}}\ (\bibinfo {year} {2018})\ pp.\ \bibinfo {pages} {219--221}\BibitemShut {NoStop}%
\bibitem [{\citenamefont {Caruso}\ and\ \citenamefont {Giovannetti}(2006)}]{PhysRevA.74.062307}%
  \BibitemOpen
  \bibfield  {author} {\bibinfo {author} {\bibfnamefont {F.}~\bibnamefont {Caruso}}\ and\ \bibinfo {author} {\bibfnamefont {V.}~\bibnamefont {Giovannetti}},\ }\bibfield  {title} {\bibinfo {title} {Degradability of bosonic gaussian channels},\ }\href {https://doi.org/10.1103/PhysRevA.74.062307} {\bibfield  {journal} {\bibinfo  {journal} {Phys. Rev. A}\ }\textbf {\bibinfo {volume} {74}},\ \bibinfo {pages} {062307} (\bibinfo {year} {2006})}\BibitemShut {NoStop}%
\bibitem [{\citenamefont {Rains}(1999)}]{782103}%
  \BibitemOpen
  \bibfield  {author} {\bibinfo {author} {\bibfnamefont {E.}~\bibnamefont {Rains}},\ }\bibfield  {title} {\bibinfo {title} {Nonbinary quantum codes},\ }\href {https://doi.org/10.1109/18.782103} {\bibfield  {journal} {\bibinfo  {journal} {IEEE Transactions on Information Theory}\ }\textbf {\bibinfo {volume} {45}},\ \bibinfo {pages} {1827} (\bibinfo {year} {1999})}\BibitemShut {NoStop}%
\bibitem [{\citenamefont {Ketkar}\ \emph {et~al.}(2006)\citenamefont {Ketkar}, \citenamefont {Klappenecker}, \citenamefont {Kumar},\ and\ \citenamefont {Sarvepalli}}]{1715533}%
  \BibitemOpen
  \bibfield  {author} {\bibinfo {author} {\bibfnamefont {A.}~\bibnamefont {Ketkar}}, \bibinfo {author} {\bibfnamefont {A.}~\bibnamefont {Klappenecker}}, \bibinfo {author} {\bibfnamefont {S.}~\bibnamefont {Kumar}},\ and\ \bibinfo {author} {\bibfnamefont {P.}~\bibnamefont {Sarvepalli}},\ }\bibfield  {title} {\bibinfo {title} {Nonbinary stabilizer codes over finite fields},\ }\href {https://doi.org/10.1109/TIT.2006.883612} {\bibfield  {journal} {\bibinfo  {journal} {IEEE Transactions on Information Theory}\ }\textbf {\bibinfo {volume} {52}},\ \bibinfo {pages} {4892} (\bibinfo {year} {2006})}\BibitemShut {NoStop}%
\bibitem [{\citenamefont {Loeliger}(1997{\natexlab{a}})}]{loeliger_averaging_1997}%
  \BibitemOpen
  \bibfield  {author} {\bibinfo {author} {\bibfnamefont {H.-A.}\ \bibnamefont {Loeliger}},\ }\bibfield  {title} {\bibinfo {title} {Averaging bounds for lattices and linear codes},\ }\href {https://doi.org/10.1109/18.641543} {\bibfield  {journal} {\bibinfo  {journal} {IEEE Transactions on Information Theory}\ }\textbf {\bibinfo {volume} {43}},\ \bibinfo {pages} {1767} (\bibinfo {year} {1997}{\natexlab{a}})},\ \bibinfo {note} {conference Name: IEEE Transactions on Information Theory}\BibitemShut {NoStop}%
\bibitem [{\citenamefont {Feng}\ and\ \citenamefont {Ma}(2004)}]{1362919}%
  \BibitemOpen
  \bibfield  {author} {\bibinfo {author} {\bibfnamefont {K.}~\bibnamefont {Feng}}\ and\ \bibinfo {author} {\bibfnamefont {Z.}~\bibnamefont {Ma}},\ }\bibfield  {title} {\bibinfo {title} {A finite gilbert-varshamov bound for pure stabilizer quantum codes},\ }\href {https://doi.org/10.1109/TIT.2004.838088} {\bibfield  {journal} {\bibinfo  {journal} {IEEE Transactions on Information Theory}\ }\textbf {\bibinfo {volume} {50}},\ \bibinfo {pages} {3323} (\bibinfo {year} {2004})}\BibitemShut {NoStop}%
\bibitem [{\citenamefont {Schmidt}\ and\ \citenamefont {van Loock}(2022)}]{PhysRevA.105.042427}%
  \BibitemOpen
  \bibfield  {author} {\bibinfo {author} {\bibfnamefont {F.}~\bibnamefont {Schmidt}}\ and\ \bibinfo {author} {\bibfnamefont {P.}~\bibnamefont {van Loock}},\ }\bibfield  {title} {\bibinfo {title} {Quantum error correction with higher gottesman-kitaev-preskill codes: Minimal measurements and linear optics},\ }\href {https://doi.org/10.1103/PhysRevA.105.042427} {\bibfield  {journal} {\bibinfo  {journal} {Phys. Rev. A}\ }\textbf {\bibinfo {volume} {105}},\ \bibinfo {pages} {042427} (\bibinfo {year} {2022})}\BibitemShut {NoStop}%
\bibitem [{\citenamefont {Xu}\ \emph {et~al.}(2024{\natexlab{a}})\citenamefont {Xu}, \citenamefont {Bonilla~Ataides}, \citenamefont {Pattison}, \citenamefont {Raveendran}, \citenamefont {Bluvstein}, \citenamefont {Wurtz}, \citenamefont {Vasi{\'{c}}}, \citenamefont {Lukin}, \citenamefont {Jiang},\ and\ \citenamefont {Zhou}}]{xu2024constant}%
  \BibitemOpen
  \bibfield  {author} {\bibinfo {author} {\bibfnamefont {Q.}~\bibnamefont {Xu}}, \bibinfo {author} {\bibfnamefont {J.~P.}\ \bibnamefont {Bonilla~Ataides}}, \bibinfo {author} {\bibfnamefont {C.~A.}\ \bibnamefont {Pattison}}, \bibinfo {author} {\bibfnamefont {N.}~\bibnamefont {Raveendran}}, \bibinfo {author} {\bibfnamefont {D.}~\bibnamefont {Bluvstein}}, \bibinfo {author} {\bibfnamefont {J.}~\bibnamefont {Wurtz}}, \bibinfo {author} {\bibfnamefont {B.}~\bibnamefont {Vasi{\'{c}}}}, \bibinfo {author} {\bibfnamefont {M.~D.}\ \bibnamefont {Lukin}}, \bibinfo {author} {\bibfnamefont {L.}~\bibnamefont {Jiang}},\ and\ \bibinfo {author} {\bibfnamefont {H.}~\bibnamefont {Zhou}},\ }\bibfield  {title} {\bibinfo {title} {Constant-overhead fault-tolerant quantum computation with reconfigurable atom arrays},\ }\href {https://doi.org/10.1038/s41567-024-02479-z} {\bibfield  {journal} {\bibinfo  {journal} {Nature Physics}\ }\textbf {\bibinfo {volume} {20}},\ \bibinfo {pages} {1084} (\bibinfo {year}
  {2024}{\natexlab{a}})}\BibitemShut {NoStop}%
\bibitem [{\citenamefont {Xu}\ \emph {et~al.}(2024{\natexlab{b}})\citenamefont {Xu}, \citenamefont {Zhou}, \citenamefont {Zheng}, \citenamefont {Bluvstein}, \citenamefont {Ataides}, \citenamefont {Lukin},\ and\ \citenamefont {Jiang}}]{xu2024fastparallelizablelogicalcomputation}%
  \BibitemOpen
  \bibfield  {author} {\bibinfo {author} {\bibfnamefont {Q.}~\bibnamefont {Xu}}, \bibinfo {author} {\bibfnamefont {H.}~\bibnamefont {Zhou}}, \bibinfo {author} {\bibfnamefont {G.}~\bibnamefont {Zheng}}, \bibinfo {author} {\bibfnamefont {D.}~\bibnamefont {Bluvstein}}, \bibinfo {author} {\bibfnamefont {J.~P.~B.}\ \bibnamefont {Ataides}}, \bibinfo {author} {\bibfnamefont {M.~D.}\ \bibnamefont {Lukin}},\ and\ \bibinfo {author} {\bibfnamefont {L.}~\bibnamefont {Jiang}},\ }\href {https://arxiv.org/abs/2407.18490} {\bibinfo {title} {Fast and parallelizable logical computation with homological product codes}} (\bibinfo {year} {2024}{\natexlab{b}}),\ \Eprint {https://arxiv.org/abs/2407.18490} {arXiv:2407.18490 [quant-ph]} \BibitemShut {NoStop}%
\bibitem [{\citenamefont {Williamson}\ and\ \citenamefont {Yoder}(2024)}]{williamson2024lowoverheadfaulttolerantquantumcomputation}%
  \BibitemOpen
  \bibfield  {author} {\bibinfo {author} {\bibfnamefont {D.~J.}\ \bibnamefont {Williamson}}\ and\ \bibinfo {author} {\bibfnamefont {T.~J.}\ \bibnamefont {Yoder}},\ }\href {https://arxiv.org/abs/2410.02213} {\bibinfo {title} {Low-overhead fault-tolerant quantum computation by gauging logical operators}} (\bibinfo {year} {2024}),\ \Eprint {https://arxiv.org/abs/2410.02213} {arXiv:2410.02213 [quant-ph]} \BibitemShut {NoStop}%
\bibitem [{\citenamefont {Swaroop}\ \emph {et~al.}(2025)\citenamefont {Swaroop}, \citenamefont {Jochym-O'Connor},\ and\ \citenamefont {Yoder}}]{swaroop2025universaladaptersquantumldpc}%
  \BibitemOpen
  \bibfield  {author} {\bibinfo {author} {\bibfnamefont {E.}~\bibnamefont {Swaroop}}, \bibinfo {author} {\bibfnamefont {T.}~\bibnamefont {Jochym-O'Connor}},\ and\ \bibinfo {author} {\bibfnamefont {T.~J.}\ \bibnamefont {Yoder}},\ }\href {https://arxiv.org/abs/2410.03628} {\bibinfo {title} {Universal adapters between quantum ldpc codes}} (\bibinfo {year} {2025}),\ \Eprint {https://arxiv.org/abs/2410.03628} {arXiv:2410.03628 [quant-ph]} \BibitemShut {NoStop}%
\bibitem [{\citenamefont {Cowtan}\ \emph {et~al.}(2025)\citenamefont {Cowtan}, \citenamefont {He}, \citenamefont {Williamson},\ and\ \citenamefont {Yoder}}]{cowtan2025parallellogicalmeasurementsquantum}%
  \BibitemOpen
  \bibfield  {author} {\bibinfo {author} {\bibfnamefont {A.}~\bibnamefont {Cowtan}}, \bibinfo {author} {\bibfnamefont {Z.}~\bibnamefont {He}}, \bibinfo {author} {\bibfnamefont {D.~J.}\ \bibnamefont {Williamson}},\ and\ \bibinfo {author} {\bibfnamefont {T.~J.}\ \bibnamefont {Yoder}},\ }\href {https://arxiv.org/abs/2503.05003} {\bibinfo {title} {Parallel logical measurements via quantum code surgery}} (\bibinfo {year} {2025}),\ \Eprint {https://arxiv.org/abs/2503.05003} {arXiv:2503.05003 [quant-ph]} \BibitemShut {NoStop}%
\bibitem [{\citenamefont {He}\ \emph {et~al.}(2025)\citenamefont {He}, \citenamefont {Cowtan}, \citenamefont {Williamson},\ and\ \citenamefont {Yoder}}]{he2025extractorsqldpcarchitecturesefficient}%
  \BibitemOpen
  \bibfield  {author} {\bibinfo {author} {\bibfnamefont {Z.}~\bibnamefont {He}}, \bibinfo {author} {\bibfnamefont {A.}~\bibnamefont {Cowtan}}, \bibinfo {author} {\bibfnamefont {D.~J.}\ \bibnamefont {Williamson}},\ and\ \bibinfo {author} {\bibfnamefont {T.~J.}\ \bibnamefont {Yoder}},\ }\href {https://arxiv.org/abs/2503.10390} {\bibinfo {title} {Extractors: Qldpc architectures for efficient pauli-based computation}} (\bibinfo {year} {2025}),\ \Eprint {https://arxiv.org/abs/2503.10390} {arXiv:2503.10390 [quant-ph]} \BibitemShut {NoStop}%
\bibitem [{\citenamefont {Kim}\ and\ \citenamefont {Walker}(2008)}]{KIM20083115}%
  \BibitemOpen
  \bibfield  {author} {\bibinfo {author} {\bibfnamefont {J.-L.}\ \bibnamefont {Kim}}\ and\ \bibinfo {author} {\bibfnamefont {J.}~\bibnamefont {Walker}},\ }\bibfield  {title} {\bibinfo {title} {Nonbinary quantum error-correcting codes from algebraic curves},\ }\href {https://doi.org/https://doi.org/10.1016/j.disc.2007.08.038} {\bibfield  {journal} {\bibinfo  {journal} {Discrete Mathematics}\ }\textbf {\bibinfo {volume} {308}},\ \bibinfo {pages} {3115} (\bibinfo {year} {2008})},\ \bibinfo {note} {conference on Association Schemes, Codes and Designs}\BibitemShut {NoStop}%
\bibitem [{\citenamefont {La~Guardia}(2012)}]{LaGuardia2012}%
  \BibitemOpen
  \bibfield  {author} {\bibinfo {author} {\bibfnamefont {G.~G.}\ \bibnamefont {La~Guardia}},\ }\bibfield  {title} {\bibinfo {title} {Asymmetric quantum reed-solomon and generalized reed-solomon codes},\ }\href {https://doi.org/10.1007/s11128-011-0269-3} {\bibfield  {journal} {\bibinfo  {journal} {Quantum Information Processing}\ }\textbf {\bibinfo {volume} {11}},\ \bibinfo {pages} {591} (\bibinfo {year} {2012})}\BibitemShut {NoStop}%
\bibitem [{\citenamefont {Golowich}\ and\ \citenamefont {Guruswami}(2024)}]{golowich2024asymptoticallygoodquantumcodes}%
  \BibitemOpen
  \bibfield  {author} {\bibinfo {author} {\bibfnamefont {L.}~\bibnamefont {Golowich}}\ and\ \bibinfo {author} {\bibfnamefont {V.}~\bibnamefont {Guruswami}},\ }\href {https://arxiv.org/abs/2408.09254} {\bibinfo {title} {Asymptotically good quantum codes with transversal non-clifford gates}} (\bibinfo {year} {2024}),\ \Eprint {https://arxiv.org/abs/2408.09254} {arXiv:2408.09254 [quant-ph]} \BibitemShut {NoStop}%
\bibitem [{\citenamefont {Ebeling}\ and\ \citenamefont {Ebeling}(2013)}]{ebeling2013lattices}%
  \BibitemOpen
  \bibfield  {author} {\bibinfo {author} {\bibfnamefont {W.}~\bibnamefont {Ebeling}}\ and\ \bibinfo {author} {\bibfnamefont {W.}~\bibnamefont {Ebeling}},\ }\href@noop {} {\emph {\bibinfo {title} {Lattices and codes}}}\ (\bibinfo  {publisher} {Springer},\ \bibinfo {year} {2013})\BibitemShut {NoStop}%
\bibitem [{\citenamefont {Elkies}(2019)}]{Elkies_2019}%
  \BibitemOpen
  \bibfield  {author} {\bibinfo {author} {\bibfnamefont {N.}~\bibnamefont {Elkies}},\ }\href {https://people.math.harvard.edu/~elkies/M272.19/index.html} {\bibinfo {title} {Rational lattices and their theta functions}} (\bibinfo {year} {2019})\BibitemShut {NoStop}%
\bibitem [{\citenamefont {Calderbank}\ \emph {et~al.}(1998)\citenamefont {Calderbank}, \citenamefont {Rains}, \citenamefont {Shor},\ and\ \citenamefont {Sloane}}]{681315}%
  \BibitemOpen
  \bibfield  {author} {\bibinfo {author} {\bibfnamefont {A.}~\bibnamefont {Calderbank}}, \bibinfo {author} {\bibfnamefont {E.}~\bibnamefont {Rains}}, \bibinfo {author} {\bibfnamefont {P.}~\bibnamefont {Shor}},\ and\ \bibinfo {author} {\bibfnamefont {N.}~\bibnamefont {Sloane}},\ }\bibfield  {title} {\bibinfo {title} {Quantum error correction via codes over gf(4)},\ }\href {https://doi.org/10.1109/18.681315} {\bibfield  {journal} {\bibinfo  {journal} {IEEE Transactions on Information Theory}\ }\textbf {\bibinfo {volume} {44}},\ \bibinfo {pages} {1369} (\bibinfo {year} {1998})}\BibitemShut {NoStop}%
\bibitem [{\citenamefont {Loeliger}(1997{\natexlab{b}})}]{641543}%
  \BibitemOpen
  \bibfield  {author} {\bibinfo {author} {\bibfnamefont {H.-A.}\ \bibnamefont {Loeliger}},\ }\bibfield  {title} {\bibinfo {title} {Averaging bounds for lattices and linear codes},\ }\href {https://doi.org/10.1109/18.641543} {\bibfield  {journal} {\bibinfo  {journal} {IEEE Transactions on Information Theory}\ }\textbf {\bibinfo {volume} {43}},\ \bibinfo {pages} {1767} (\bibinfo {year} {1997}{\natexlab{b}})}\BibitemShut {NoStop}%
\bibitem [{\citenamefont {Blahut}(2003)}]{blahut2003algebraic}%
  \BibitemOpen
  \bibfield  {author} {\bibinfo {author} {\bibfnamefont {R.~E.}\ \bibnamefont {Blahut}},\ }\href@noop {} {\emph {\bibinfo {title} {Algebraic codes for data transmission}}}\ (\bibinfo  {publisher} {Cambridge university press},\ \bibinfo {year} {2003})\BibitemShut {NoStop}%
\end{thebibliography}
\end{document}